\newtheorem{remark}{Remark}
\newtheorem{theorem}{Theorem}
\newtheorem{lemma}{Lemma} 
\newtheorem{proposition}{Proposition}
\newtheorem{proof}{Proof}
\def\T{{ \mathrm{\scriptscriptstyle T} }}
\def\P{{ \mathrm{pr} }}
\def\v{{\varepsilon}}
\def\m{{M_{\textrm{max}}}}
\DeclareMathOperator*{\argmin}{arg\,min}
\newcommand{\rom}[1]{\uppercase\expandafter{\romannumeral #1\relax}}
\begin{document}
%
\title{Multiple Change Point Analysis: Fast \\ Implementation And Strong Consistency}
%
%
%

\author{Jie~Ding,~\IEEEmembership{Student Member,~IEEE,}
        Yu~Xiang,~\IEEEmembership{Member,~IEEE,}
        Lu~Shen,~\IEEEmembership{Member,~IEEE,}
        and~Vahid~Tarokh,~\IEEEmembership{Fellow,~IEEE}
\thanks{This work is supported by Defense Advanced Research Projects Agency (DARPA) grant numbers W911NF-14-1-0508 and  N66001-15-C-4028.}
\thanks{J.~Ding, Y.~Xiang, Lu.~Shen, and V. Tarokh are with the John A. Paulson School of Engineering and Applied Sciences, Harvard University, Cambridge, 
MA, 02138 USA e-mail: (jieding@fas.harvard.edu).}
}

\markboth{}
{Shell \MakeLowercase{\textit{et al.}}: IEEE Transactions on Signal Processing}


\maketitle

\begin{abstract}
One of the main challenges in identifying structural changes in stochastic processes is to carry out analysis for time series with dependency structure in a computationally tractable way. Another challenge is that the number of true change points is usually unknown, 
 requiring a suitable model selection criterion to arrive at informative conclusions.  
To address the first challenge, we model the data generating process as a segment-wise autoregression, which is composed of several segments (time epochs), each of which modeled by an autoregressive model.
We propose a multi-window method that is both effective and efficient for discovering the structural changes.   
The proposed approach was motivated by transforming a segment-wise autoregression into a multivariate time series that is asymptotically segment-wise independent and identically distributed.  
To address the second challenge, we derive theoretical guarantees for (almost surely) selecting the true number of change points of segment-wise independent multivariate time series.
Specifically, under mild assumptions, we show that a Bayesian Information Criterion (BIC)-like criterion gives a strongly consistent selection of the optimal number of change points, while an Akaike Information Criterion (AIC)-like criterion cannot. 
Finally, we demonstrate the theory and strength of the proposed algorithms by experiments on both synthetic and real-world data, including the Eastern US temperature data and the El Nino data from 1854 to 2015.
The experiment leads to some interesting discoveries about temporal variability of the summer-time temperature over the Eastern US, and about the most dominant factor of ocean influence on climate.
\end{abstract}

\begin{IEEEkeywords}
Autoregression, Information criteria, Strong Consistency, Stochastic Process.
\end{IEEEkeywords}

%
\IEEEpeerreviewmaketitle

%
%
%
%
\section{Introduction} \label{sec:intro}

\IEEEPARstart{S}{equentially} obtained data usually exhibits occasional changes in their structure, such as network anomalies in complex IP networks \cite{thottan2003anomaly}, distributional changes in  teletraffic models \cite{jana2000change},  sudden changes of volatility in stock markets due to financial crises \cite{hammoudeh2008sudden},   variations of an electroencephalogram (EEG) signal caused by mode changes in the brain \cite{vidal1977real}, or  environmental changes in various ecosystems \cite{hawkins2003detection,huntington2004matching}.
Change detection analysis tries to identify not only whether a time series is a concatenation of several segments, in which the neighboring ones are generated from different probability distributions, but also how many change points there are. 
%
%
There has been a vast amount of work in the filed of change point analysis.
In the parametric settings, the likelihood function naturally plays a key role, for example in the cumulative sum \cite{Basseville1993,Page1954} and the generalized likelihood ratio \cite{Gustafsson1996} approaches. Various tests have been developed for tracking changes in time series statistics  such as the mean \cite{Vogelsang1998}, the variance \cite{Inclan1994,Gombay1996}, the autocovariance function \cite{Berkes2009}, and the spectrum \cite{Picard1985}. 
Nonparametric approaches usually rely on kernel density estimation. 
A widely used approach
is to perform change detection by direct estimation of the ratio of probability densities \cite{Fishman1996,Huang2007,Sugiyama2008} or using some dissimilarity measure in feature space \cite{desobry2005online}, without estimating densities as an intermediary step. 
For  practical implementations, 
bisection procedure and its  extensions 
\cite{Vostrikova1981,scott1974cluster,Hawkins2001,Lavielle2006,fryzlewicz2014wild,cho2015multiple} have been widely studied.
Exact search methods such as segment neighborhood \cite{auger1989algorithms} and optimal partitioning \cite{yao1984estimation,jackson2005algorithm} have also been widely applied. 
Other remarkable progress in change point discovery for dependent time series data have been made in \cite{davis2006structural,khaleghi2012locating,khaleghi2014asymptotically,preuss2015detection}. 
More detailed references to the literature can be found in various remarkable monographs and review papers such as \cite{Basseville1993,Brodsky1993,Perron2006,jandhyala2013inference}.

As with any other statistical inference procedure, it is crucial to apply an appropriate model selection procedure in order to select the number of change points, whenever it is unknown. 
A common way is to apply the penalized approach, which selects the model dimension by minimizing the sum of goodness of fit and a penalty term. 
The three commonly used penalties are Akaike information criterion \cite{akaike1969fitting,akaike1998information},  Bayesian information criterion (BIC) \cite{schwarz1978estimating}, and Hannan and Quinn information criterion (HQ) \cite{hannan1979determination}. 
AIC is derived by minimizing the Kullback-Leibler divergence between the true distribution and the estimate of a candidate model,
BIC is from a large sample approximation that aims at selecting a model of maximum posterior probability,
HQ replaces the $\log N$ term in BIC by $c \log \log N (c>1)$, where $N$ is the sample size.
In some parametric models where regularity conditions are met, such as autoregressive models, it has been rigorously proved that AIC produces an overfitting model with non-vanishing probability, while BIC or HQ selects the model that converges almost surely to the true model (if it is included in the candidate set). In addition, HQ was proved to be the smallest penalty term that guarantees strong consistency, i.e, almost sure convergence \cite{hannan1979determination}.
Though these three criteria have been used as general-purpose model selection rules in various statistical models, their validity in terms of asymptotic behavior need to be verified case by case, especially for parametric or semi-parametric models where regularity conditions may not hold. Examples include finite mixture models \cite{keribin2000consistent,chen2001modified,chen2004testing,chen2008order,hui2015order},  and change point models considered in this paper. 
Despite remarkable progress on change detection analysis in defining the loss function and developing efficient algorithms, analysis on model selection 
	may benefit from further work.
To the best of our knowledge, only the consistency of BIC in selecting the number of change points for change detection problem have been studied in, e.g.,   \cite{yao1988estimating,venkatraman1992consistency,rigaill2015pruned}, but the theory of strong consistency for penalized method has not been well studied before.

A typical offline multiple change point analysis aims to solve the following problem. 
Given observations $y_1,\ldots,y_N$ $\in \mathbb{R}^D$ and $M \in \mathbb{N}$, 
the goal is to find integers  $0<\ell_1<\cdots<\ell_M<N $ that minimize the following sum of within-segment loss
\begin{align} \label{eq30}
e_M =\sum_{k=1}^{M+1} \textit{Loss}(y_{\ell_{k-1}+1},\ldots,y_{\ell_k}) , 
\end{align} 
where $\textit{Loss}(\cdot)$ is some selected loss function and by default $\ell_0=0,\ell_{M+1}=N$.
Here, specified number of change points $M$ is usually estimated using penalized approach.
A simple and widely adopted loss function is the quadratic loss \cite{rigaill2015pruned} defined by
	$	\textit{Loss}_q(x_{\ell_{j-1}+1},\ldots,x_{\ell_j}) = 
		\sum_{n=\ell_{j-1}+1}^{\ell_{j}} |x_n-\bar{x}|^2 $,	
	where $\bar{x}$ is the sample mean of $x_{\ell_{j-1}+1},\ldots,x_{\ell_j}$, and $|\cdot|$ denotes the Euclidean norm of a vector. 
One reason for using the quadratic loss is that it enables efficient $k$-means type fast implementations (discussed in Section~\ref{sec:finish}). 
Other commonly used loss functions include the negative log-likelihood associated with a specified parametric model \cite{horvath1993maximum,chen2011parametric}, or the cumulative sums \cite{page1954continuous,inclan1994use}.

In this work, we investigate the following two directions in  detecting structural changes in time series:   

1) In Section~\ref{sec:implementation}, we consider the formulation of change point analysis for a general stochastic process. 
	The basic idea is to assume that the time series data consists of several segments each of which is generated from a finite order autoregressive process. 
	For such dependent data, the loss function of each segment may be  defined as the log-likelihood loss associated with an autoregressive model, 
	and a standard change detection algorithm such as binary segmentation \cite{scott1974cluster,fryzlewicz2014wild} is amenable to use with the loss function.
	However, the loss function depends on a particular parametric assumption of the autoregression noises, and it does not always support efficient algorithms to minimize $e_M$. In fact, even if the noises are assumed to be Gaussian, the loss function can lead to massive computations, as we shall discuss it in detail later.
	To obtain the change points in a robust and computationally efficient manner, we propose an alternative approach which casts the change detection problem for the original time series $\{y_n\}$ into that for segment-wise (asymptotically) independent and identically distributed (i.i.d.) multivariate data $\{x_n\}$. We can discover the change points of independent data more easily, and then use the results to infer the change points of the original time series. 

2) In Section~\ref{sec:consistency}, we show that change points for a segment-wise independent data $\{x_n\}$ can be discovered by minimizing (\ref{eq30}) with  quadratic loss function and appropriately designed penalized methods.  
Specifically, we investigate necessary and sufficient conditions under which the unknown true number of change points can be determined for sufficiently large sample size (almost surely).

Finally, we present experimental results to demonstrate the performance of the proposed method on both synthetic and real-world datasets. In the study of real-world environmental data, we have used the Eastern US summer-time temperature data  from 1895 to 2015 and the El Nino data from 1854 to 2015. 
The experiments lead to  interesting conclusions about temporal variability of the summer-time temperature over the Eastern US, and about the most dominant factor of ocean influence on climate. These findings are consistent with those observed in the field of environmental sciences.



{\bf \textit{Notation and abbreviation}}:
%

Let $tr(\cdot)$, $(\cdot)^\T$, $\log$, $a.s.$, $i.o.$ respectively denote the trace of a square matrix, the transpose of a matrix or vector, natural logarithm, almost surely, and infinitely often.
We write $\mathcal{G} \sim [\mu,V]$ if distribution $\mathcal{G}$ has mean $\mu$ and variance $V$. 
We say ``$h(N) > 0$ tends to infinity as $N$ tends to infinity'' if $\lim_{N\rightarrow \infty}1/h(N)=0$.
We write $h(N)=\Theta(g(N))$ if $c <  h(N)/g(N) < 1/c$ for some constant $c \neq 0$ for all sufficiently large $N$.
We write $h(N)=o(g(N))$  if $\lim_{N \rightarrow \infty} h(N)/ g(N) = 0 $.
Let $\mathcal{N}(\mu,V)$ denote the multivariate normal distribution with mean $\mu$ and covariance matrix $V$.
Let $C$ denote a generic constant. We use $o_p(1)$ and $O_p(1)$ to respectively denote any random variable that converges in probability to zero and that is stochastically bounded.
Throughout the paper, random variables and observed data are respectively represented by capital letters (e.g. $Y_n$) and small letters (e.g. $y_n$). Vectors are  all column vectors. 

A generic change detection model assumes data to be the outcomes of a sequence of multi-dimensional real-valued random variables $\{Y_n:n=1,\ldots,N\}$
that consists of of $M_0+1$ segments ($M_0 \in \mathbb{N} \cup \{0\}$), where each pair of neighboring segments have different data generating process.
In this paper,  $Y_n$'s are sometimes substituted with $X_n$'s in order to emphasize the independence of data.  
We denote the true segments by $\{Y_n:n=L_{k-1}+1, \ldots,L_{k}\}$, $k=1,\ldots,M_0+1$, where $L_{1}<\cdots<L_{M_0}$ are referred to as the $M_0$ change points, and by default $L_{0}=0,L_{M_0+1}=N$.  
Let $N_k=L_{k}-L_{k-1},k=1,\ldots,M_0+1$ denote the size (length) of the $k$th segment. Clearly, $\sum_{k=1}^{M_0+1}N_k = N$.
Throughout the paper, we use $\hat{M}$ to denote the estimated  number of change points. 
Similarly, we represent the detected change points by $\hat{L}_{k},k=0,\ldots,\hat{M}+1$, and segment sizes by $\hat{N}_k,k=1,\ldots,\hat{M}+1$.

\section{Change Detection for Time Series with Dependency Structure} \label{sec:implementation}



In this section, we consider a sequence of one-dimensional dependent data.
The results can be readily  extended to multi-dimensional data.
We assume that the data is generated from the following \textit{segment-wise  autoregressive (AR) model}:

(M.1) The sequence $\{Y_n:n=1,\ldots,N \}$ are one-dimensional and it consists of $M_0+1$ segments, each of which can be described by a linear autoregression.
In other words, for each $k=1,\ldots,M_0+1$, we have
$
Y_n=  \underline{Y}_n^\T  \psi^{(k)} + \v^{(k)}_n, \, n=L_{k-1}+1,\ldots,L_{k}, 
$
where 
$ \underline{Y}_n=[1, Y_{n-1},\ldots,Y_{n-L}]^\T$ (for $L>0$) or $\underline{Y}_n= 1 $ (for $L=0$), $ \psi^{(k)} \in \mathbb{R}^{L+1}$
 (referred to as AR filter of order $L$), 
	 $Y_{1-L},\ldots,Y_0$ have been used to denote initial values, and $\v^{(k)}_n$ are independent noises and are i.i.d. within each segment.
Moreover, $ \psi^{(k)} \neq  \psi^{(k+1)}, k=1,\ldots,M_0$.

An autoregression of order $L \in \mathbb{N} \cup \{0\}$ is also denoted by AR($L$).
Note that we have assumed the same $L$ in each segment of the data generating model for the simplicity of presentation (clearly, any AR($\ell$) is necessarily AR($k$) for $\ell < k < \infty$, so that we may let $L$ be the maximum of all the AR orders from each segment). 
In the rest part of the paper, we assume that the order $L$ is known as prior knowledge or from exploratory studies.
Our goal is to identify the number of change points and their locations. 

Before we proceed, it is worth mentioning that even though the above change detection model is semi-parametric since no assumption on how each AR model switches to another one was made, the change point analysis can serve as an exploratory study for more parametric settings. For example, the detected change points can be used to set up better initial values of Expectation-Maximization algorithm for complex parametric mixture models such as point process regression models \cite{sheikhattar2015recursive} and multi-state autoregressive models \cite{ding2015learning}.  

It is natural to define the loss function based on 
\begin{align} \label{eq95}
	\textit{Loss}_{a}(y_{\ell_{j-1}+1},\ldots,y_{\ell_j}) = 
	\sum\limits_{n=\ell_{j-1}+1}^{\ell_{j}} (y_n- \underline{y}_n^\T \hat{ \psi} )^2		
\end{align}
where $\hat{\psi}$ is estimated from $y_{\ell_{j-1}+1},\ldots,y_{\ell_j}$ by Yule-Walker equation or least squares method. 
The above loss is interpreted as the cumulated prediction errors, or the rescaled negative log-likelihood associated with AR($L$). 
The quadratic loss can be regarded as the special case when $L=0$.
We can find change points by minimizing the sum of within-segment loss in (\ref{eq30}) using state-of-the-art algorithms such as  binary segmentation \cite{scott1974cluster}. 
However, an alternative idea is to turn the change detection of segment-wise autoregressive model into that of segment-wise Gaussian independent model.

{\bf Discussion  of our motivation}:
Consider a sequence of  $N$ points that are generated from a single AR($L$), i.e. $Y_n =  \psi^\T  \underline{Y}_n + \v_n$,
where 
$ \psi \in \mathbb{R}^{L+1}$,
$\v_n \sim [0,\sigma^2]$. 
Suppose that the true change points of $\{Y_n: n=1,\ldots,N\}$ are located at multiples of $w$, where $w>2L$ is an integer, and the data are divided into $N/w$ segments of size $w$. If each segment of data is used to estimate an  AR($L$) filter, we obtain $N/w$ estimates of $ \psi$, respectively denoted by $ \hat{ \psi}_{1},\ldots, \hat{ \psi}_{N/w}$. 
It has been well established that if $\hat{ \psi}_{i}$  is estimated from either least squares or Yule Walker methods, 
$\sqrt{w} (\hat{ \psi}_{i} -  \psi)$ converges in distribution to $\mathcal{N}(0,\Gamma) $ as $w$ goes to infinity, where $\Gamma$ is a constant matrix depending only on $ \psi$ \cite[Appendix 7.5]{box2011time}.
Thus, $\hat{ \psi}_{i}$ can be approximated by multivariate Gaussian random variables with mean $ \psi$ and variance  $\Gamma/w$. 
The asymptotic independence of $\sqrt{w} (\hat{ \psi}_{i} -  \psi)$ are guaranteed by the following result.
\vspace{0.1cm}	 
\begin{theorem} \label{thm:independency}
	Suppose that $\{Y_1,\ldots,Y_N\} $ are generated from an autoregression with filter $ \psi$. 
	Let $\hat{ \psi}_1 \in \mathcal{R}^{L+1}$ and $\hat{ \psi}_2\in \mathcal{R}^{L+1}$ respectively denote the estimated filters from $\{Y_1,\ldots,Y_{N_1}\}$ and $\{Y_{N_1+1},\ldots,Y_N\}$ by least square methods,
	where $N_1, \, N_2= N-N_1 \rightarrow \infty$ as $N \rightarrow \infty$.
	Assume that 
	
	(A.1) the noises $\v_n$ satisfy $E[ \max\{(\log |\v_n|), 0\}] < \infty$ and  the distribution of $\v_n$ has  nontrivial absolutely continuous components.
	
	Then $\sqrt{N_1}(\hat{ \psi_1}- \psi)$ and $\sqrt{N_2}(\hat{ \psi_2}- \psi)$ converge to two Gaussian random variables that are independent. 
\end{theorem}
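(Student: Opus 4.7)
The plan is to decompose each least squares estimator into a ``design'' and a ``noise'' part. Writing
\begin{equation*}
\hat{\psi}_i - \psi = S_i^{-1} T_i, \quad i=1,2,
\end{equation*}
where $S_i = N_i^{-1}\sum_n \underline{Y}_n \underline{Y}_n^{\T}$ and $T_i = N_i^{-1}\sum_n \underline{Y}_n \v_n$, with the sums running over $n\in\{1,\ldots,N_1\}$ and $n\in\{N_1+1,\ldots,N\}$ respectively. Under (A.1), the AR($L$) recursion admits a unique stationary and ergodic solution (by standard stochastic recurrence arguments of the Bougerol--Picard type, with the log-moment condition supplying integrability and the absolutely continuous component ruling out degenerate invariant distributions). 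The pointwise ergodic theorem then gives $S_i \to \Sigma := E[\underline{Y}_n \underline{Y}_n^{\T}]$ almost surely, and the absolute continuity of $\v_n$ ensures $\Sigma$ is positive definite. By Slutsky it will therefore suffice to establish joint convergence of $(\sqrt{N_1}T_1, \sqrt{N_2}T_2)$ to independent centered Gaussians with common covariance $\sigma^2 \Sigma$, yielding the final covariance $\Gamma = \sigma^2 \Sigma^{-1}$ for each $\sqrt{N_i}(\hat{\psi}_i - \psi)$.

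The core step is a joint martingale central limit theorem. With respect to $\mathcal{F}_n = \sigma(\v_k : k \le n)$, the vector $V_n := \underline{Y}_n \v_n$ is a stationary martingale difference sequence, since $\underline{Y}_n \in \mathcal{F}_{n-1}$ and $E[\v_n \mid \mathcal{F}_{n-1}]=0$, with per-step conditional covariance $\sigma^2 \Sigma$. I would invoke the Cramer--Wold device: for arbitrary $a,b\in\mathbb{R}^{L+1}$ the scalar sum
\begin{equation*}
W_N = \frac{1}{\sqrt{N_1}} \sum_{n=1}^{N_1} a^{\T} V_n \;+\; \frac{1}{\sqrt{N_2}} \sum_{n=N_1+1}^{N} b^{\T} V_n
\end{equation*}
is itself a martingale with deterministic time-varying coefficients, and its predictable quadratic variation converges almost surely to $\sigma^2 a^{\T} \Sigma a + \sigma^2 b^{\T} \Sigma b$. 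Crucially no cross term appears, precisely because the two time ranges are disjoint. A standard martingale CLT (Hall--Heyde) then gives $W_N \Rightarrow \mathcal{N}(0, \sigma^2 a^{\T}\Sigma a + \sigma^2 b^{\T}\Sigma b)$, and Cramer--Wold converts this to the desired joint limit with independent components.

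The main obstacle will be the boundary coupling between the two segments: the regressor $\underline{Y}_{N_1+j}$ for $1 \le j \le L$ involves observations from the first segment, so the two empirical sums are not formally decoupled random objects. Because $L$ is a fixed finite constant independent of $N$, the contribution of these $O(L)$ overlapping indices to $T_i$ is $O(L/N_i)$, and after scaling by $\sqrt{N_i}$ it is $o(1)$ and hence asymptotically negligible by a Slutsky-type truncation argument. A secondary technical point is verification of the Lindeberg condition for the martingale CLT; this follows from a routine uniform integrability argument once $E[\v_n^2]<\infty$ is available (implicit in posing least squares), combined with the stationarity and finite-second-moment property of $\underline{Y}_n$ inherited from the stable AR representation. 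Combining the a.s.\ limit $S_i^{-1} \to \Sigma^{-1}$ with the joint weak limit of $(\sqrt{N_1}T_1,\sqrt{N_2}T_2)$ through Slutsky's theorem then delivers the claim.
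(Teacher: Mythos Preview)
Your argument is correct and takes a genuinely different route from the paper. The paper proceeds by a buffering trick: it introduces auxiliary estimators $\hat{\psi}_1',\hat{\psi}_2'$ built from slightly shrunk windows $\{Y_1,\ldots,Y_{N_1'}\}$ and $\{Y_{N-N_2'+1},\ldots,Y_N\}$ with $N_i-N_i'\to\infty$ but $N_i'/N_i\to 1$, invokes \emph{strong mixing} of the AR process under Assumption~(A.1) (citing Athreya--Pantula) to conclude that these physically separated estimators are asymptotically independent, and then shows via direct matrix algebra that $\sqrt{N_i}(\hat{\psi}_i-\psi)=\sqrt{N_i'}(\hat{\psi}_i'-\psi)+o_p(1)$, so the buffer is negligible. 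You instead extract asymptotic independence directly from the martingale-difference structure of $\underline{Y}_n\v_n$: the joint Cram\'er--Wold/Hall--Heyde CLT applied to $W_N$ has a predictable quadratic variation with no cross term simply because the two index ranges are disjoint, and that alone produces the block-diagonal limiting covariance. Your route is more elementary---no mixing rates, no buffer bookkeeping---and incidentally shows that the absolute-continuity clause of~(A.1) is not essential for the independence conclusion; in your sketch it only serves to guarantee nondegeneracy of $\Sigma$. The paper's route, by contrast, is more modular and would transfer to settings where the martingale structure is unavailable but a mixing condition survives. A minor remark: within your own framework the boundary-coupling worry is moot, since the martingale CLT applies to the full sequence $\{V_n\}_{n\ge 1}$ without any need to excise the $L$ overlap terms; the absence of a cross term in the predictable quadratic variation already handles it.
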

\vspace{0.1cm}	

Assumption (A.1) is mild, as for instance, it is satisfied by the Gaussian distribution.
Theorem~\ref{thm:independency} implies that if a data from the same autoregression is split into two (or more) parts, and each part gives an estimate of the true filter, then the estimators are asymptotically independent (up to a rescaling). 

Now suppose that the stochastic process consists of two parts: the first $N_1$ points are generated from one AR($L$)  and the rest $N_2$ are from another AR($L$). If a window size $w$ that satisfies $2L < w < \min\{N_1,N_2\}$ is chosen, the estimated AR filters are approximately independent points in $\mathbb{R}^{L+1}$ and they contain a change point around the ($N_1/w$)th point. Here and afterwards, we assume that $N_1/w,N_2/w$ are integers. Extension to more general cases is straightforward.  
We propose a multi-window (MW) change detection algorithm that chooses different $w$'s and collect the information of the detected change points for each $w$ in a proper way, in order to obtain a more accurate estimation of the change points of the stochastic process. From a computational point of view, starting from a large $w$ also helps to reduce the cost, which is especially helpful in cases where massive time series data is involved. 

Algorithm~\ref{algo:multiWindow} is a pseudo-code for MW method, followed by two subroutines: Algorithms~\ref{algo:oracle} and \ref{algo:peak}. 
Illustrating experiments are provided in Section~\ref{sec:experiments}.
Algorithm~\ref{algo:multiWindow} uses a sequence of $R$ window sizes $w_1>\cdots>w_R$ (discussed below) in order to capture any true segment of small size. For each $w_r$, the original data is turned into a sequence of $L+1$ dimensional data that can be approximated as independent. 
By calling Algorithm~\ref{algo:oracle}, we obtain a set of change points $\hat{\ell}_1,\ldots,\hat{\ell}_M$; 
By further mapping these change points back to the original scale $\{1,\ldots,N\}$, we obtain several short ranges (intervals) $I^{(r)}_k$ (each of size $2w_r$) that ``probably'' contain the desired change points. 
We repeat the above procedure for different $w_r$, and combine the information in the following way: 
the detected ranges of change points from each window size are scored by one, the scores are aggregated, and the ranges with highest score or around the highest score (determined by the tolerance parameter $\tau$) are finally selected. 
The output of the algorithm is  $\hat{M}$ number of ``peak'' ranges that are most likely to  contain the true change points.

\vspace{-0.2cm}%

\begin{algorithm}[H]
\vspace{-0.0 cm}
\small
\caption{change detection by multi-window  method}
\label{algo:multiWindow}
\begin{algorithmic}[1]
\INPUT $\{ y_n \in \mathbb{R}, n=1,\ldots , N\}$, $L$ (lag order), $M_{\text{max}}$ (the largest size of candidate models), 
	$w_{1}>\cdots>w_R$ (window sizes)
\OUTPUT  $\widehat{cp}=\{ \hat{I}_1,\ldots,\hat{I}_{\hat{M}} \}$ (ranges containing change points)
\STATE $s^{(0)}_n=0, n=1,\ldots,N$ (initialized score)
\FOR {$r = 1 \to R$ } 
	\STATE Let $N_{r} = N/w_r$. Estimate $\hat{ \psi}_{n_{r}} \in \mathbb{R}^{L+1} $ from $\{Y_{n}:n=(n_r-1)w_r+1,\ldots,n_r w_r\}$, $n_r=1,\ldots,N_r$.
   	 \STATE Call Algorithm~\ref{algo:oracle} with input $\hat{ \psi}_{n_r}:n_r=1,\ldots,N_r$, $M_{\text{max}}$, selected $f(N)$, $\beta(N)$, 
   	 		and obtain output $\hat{\ell}_1,\ldots,\hat{\ell}_{M_r}$
     \STATE Define scores $s^{(r)}_n = s^{(r-1)}_n+{\bf 1}_{n \in \bigcup_{k=1}^{M_r} I^{(r)}_k}$, $n=1,\ldots,N$, where $
     	I^{(r)}_k =   [(\hat{\ell}_{k}-1) w_r+1, (\hat{\ell}_{k}-1) w_r+2, \ldots, (\hat{\ell}_k+1) w_r ] 
     $, 
     and 
     ${\bf 1}_{n \in A}$ equals one if $n$ belongs to the set $A$ and zero otherwise.
\ENDFOR
\STATE Call Algorithm~\ref{algo:peak} to obtain the peak ranges $\widehat{cp}=\{ \hat{I}_1,\ldots,\hat{I}_{\hat{M}} \}$ ($\hat{M} \leq M_{\text{max}}$), whose associated scores are at least $S- \tau$ where $S=\max_{n=1,\ldots,N}\{s^{(R)}_n\}$ is the highest score 
\end{algorithmic}
  \vspace{-0.0cm}%
\end{algorithm}

\vspace{-0.2cm}%

\vspace{-0.2cm}%

\begin{algorithm}[H]
\vspace{0.0 cm}
\small
\caption{(generic) change detection by minimizing the sum of within-segment quadratic loss}
\label{algo:oracle}
\begin{algorithmic}[1]
\INPUT $\{ x_n \in \mathbb{R}^D, n=1,\ldots , N\}$, $M=M_{\text{max}} \in \mathbb{N}$ (the largest candidate number of change points), $f(N)$ (penalty term), $\beta(N)$ (minimal segment size)
\OUTPUT  $\hat{M}$, $\hat{\ell}_1,\ldots,\hat{\ell}_{\hat{M}}$ (discovered change points).
    \FOR {$k = 0 \to M_{\textrm{max}} $ }
   	 \STATE Define $\ell_0=0,\ell_{k+1}=N$; minimize  
    	$ 
    	e_k=\sum_{j=1}^{k+1}\textit{Loss}_q(x_{\ell_{j-1}+1},\ldots,x_{\ell_j})
		$ 
		over $\ell_j \in \mathbb{N}$ and record the optimum $\hat{\ell}_j : j=1,\ldots,k , \ \hat{e}_k$
		\IF { size of the smallest segment $<$ $\beta(N)$}
     		\STATE Let $M=k-1$; break the for loop
     	\ENDIF
     \ENDFOR
\STATE Choose 
		$\hat{M} = \argmin_{k=0,\ldots,M} 
			(\hat{e}_k + k f(N))$,
	and $\{\hat{\ell}_j\}$ to be the solution to Step 2 
	under $k=\hat{M}$.
    
\end{algorithmic}
  \vspace{0.0cm}%
\end{algorithm}

\vspace{-0.2cm}%



\vspace{-0.2cm}%
\begin{algorithm}[H]
\vspace{0.0 cm}
\small
\caption{peak range selection}
\label{algo:peak}
\begin{algorithmic}[1]
\INPUT $s_n^{(r)}, n=1,\ldots,N, r = 1,\ldots,R$ (recorded scores), $\tau \in \mathbb{N} \cup \{0\}$ (tolerance level)
\OUTPUT  $\widehat{cp}=\{ \hat{I}_1,\ldots,\hat{I}_{\hat{M}} \}$ (the output of Algorithm~\ref{algo:multiWindow})
\FOR {$r = R \to 1$ } 
	\STATE Let $cp=\cup_{m=1}^{M^*_r} J_m $ be the union of all the peak ranges whose associated scores are no less than $S-\tau$. In other words, there exist positive integers $u_{m},v_{m}$, $m=1,\ldots,M^*_r$ such that 
	$u_{1} < v_{1} < u_{2} <  \cdots  < v_M$, 
	$J_m=[u_{m}+1,u_{m}+2 , \ldots,v_{m}]$,  
	$s^{(r)}_{u_{m}+1}=\cdots=s^{(r)}_{v_{m}}  \geq S- \tau$ for each $m=1,\ldots,M^*_r$, 
	and $s^{(r)}_n < S- \tau $ for all $n \not\in cp$. (Note that the above conditions have been designed to obtain peak ranges as narrow as possible.) 
	\IF { $M^*_r \leq M_{\text{max}}$} 
		\STATE Let $\widehat{cp}=cp$, namely $\hat{M} = M^*_r$ and $\hat{I}_m = J_m$ for each $m= 1, \ldots, \hat{M}$; 
		 break the for loop
	\ENDIF 
\ENDFOR
\end{algorithmic}
  \vspace{0.0cm}%
\end{algorithm}
\vspace{-0.2cm}%

Some detailed discussions of Algorithm~\ref{algo:multiWindow} are given below.

{\bf \textit{Algorithm~\ref{algo:oracle}}:} This subroutine is called by Algorithm~\ref{algo:multiWindow} at Step 4. It detects the number and locations of change points based on minimizing within-segment quadratic loss and applying penalized model selection approach (with $D=L+1$). For clarity, we focus on Algorithm~\ref{algo:multiWindow} in this section, and defer detailed discussions of Algorithm~\ref{algo:oracle} to Section~\ref{sec:consistency}. 

{\bf \textit{Algorithm~\ref{algo:peak}}:} This subroutine is called by Algorithm~\ref{algo:multiWindow} at Step 7. It aims to selects the narrow ranges with score at least $S- \tau$, which are most likely to contain change points. 
	Its ``for'' loop (from $R$ to $1$) 
	is a backward pruning procedure in order to ensure $\hat{M} \leq \m$. 
	The pruning was done by neglecting scores produced by the smallest window sizes, which are less reliable as the estimated AR filters from those windows have larger variances.

{\bf \textit{Window sizes}:}
Intuitively speaking,  more reliable change detection results can be obtained by using multiple window sizes (instead of only one), since in practice we do not know what the true segment sizes are, and an inappropriately chosen $w_r$ may be so large that a true segment is ``missed''.
On the other hand, a small $w_r$ leads to larger variance of AR filter estimates.
A properly designed MW method strikes a tradeoff between estimation accuracy (since larger window sizes reduce variance of the estimated AR filters) and the resolution of the detected change points (since smaller window sizes produce narrower ranges).

{\bf \textit{Computing the estimator $\hat{ \psi}_{n_{r}}$}:}
For a specified AR order $L$, $\hat{ \psi}_{n_{r}}$ can be obtained either by least squares method, or by the Yule-Walker method (which requires slightly more data points, but supports fast computation by, e.g., the Levinson-Durbin recursion~\cite{brockwell2013time}).  

{\bf \textit{Tolerance parameter}:} The main purpose of introducing the tolerance parameter $\tau$ in step 8 of Algorithm~\ref{algo:multiWindow} is to ensure that the scoring produces fair comparisons among different ranges. Otherwise, small segments may be  ``missed'' by some initial large window sizes. For example, suppose that $\tau=0$, $w_1=200$ and there is only one true change point at $N_1=50$ in $N=1000$ data points. Then in this scenario, it is harder to discover a change point from $N/w_1=5$ estimated filters.



It is worth mentioning that the output of MW method is a set of $\hat{M}$ narrow ranges instead of single points. In the cases where $\hat{M}$ exact change points are desired, we  could use the results from Algorithm~\ref{algo:multiWindow} as starting point to further search optimal points within those ranges. 
In that sense, MW method can serve as a fast prescreening approach. 
In addition, the multiple windows can be implemented in parallel for massive time series, and it can be applied to independent data as well. 

\section{Strong Consistency of Penalized Methods} \label{sec:consistency}

In this section, 
we discuss subroutine Algorithm~\ref{algo:oracle}. 
This subroutine discovers change points by minimizing the within-segment sum of quadratic loss $\hat{e}_{k}$. 
We will show that when applied to a segment-wise independent data, Algorithm~\ref{algo:oracle} outputs $\hat{M}$ such that $\hat{M} \overset{a.s.}\longrightarrow M_0$ as data size tends to infinity. 
  
Algorithm~\ref{algo:oracle} computes $\hat{e}_{k}$ for each candidate number of change points $k=\{0,\ldots, M\}$, where $M$ is determined by the largest candidate number of segments $\m$ and minimal segment length $\beta(N)$.
After that, the optimal number of change points is estimated according to a penalized method. 
Further details are given below. 

{\bf \textit{Parameter $\beta(N)$}:} It is introduced for two purposes: for the technical convenience in deriving asymptotic results, and for faster implementation in practice. 
$\beta(N)$ must be selected such as $\lim_{N \rightarrow \infty}\beta(N) = \infty$.
The rate of growth of $\beta(N)$ will be selected depending on the theoretical results we wish to prove, as we shall discuss this later. 

{\bf \textit{Penalty function}:}
The common choice of penalty function is a linear function in the form of 
$k f(N)$, where $f(N)$ is referred to as the penalty term. 
For brevity, we consider the linear function in this paper, but the results can be applied to more general penalty functions. 
Three commonly used types of penalty terms are related to AIC, HQ, and BIC.
In a parametric change detection problem, if there are $k$ change points and $p$ parameters in each segment, the total number of parameters to appear in AIC and BIC is $k+p(k+1)$. If the quadratic loss is treated as twice the negative log-likelihood of a Gaussian probability density function with variance equal to the identity matrix, the total number of parameters is $k+D(k+1) = k(D+1)+constant$.
The penalty terms $f(N) \propto 1$, $f(N) \propto \log\log N$, and $f(N) \propto \log N$ are referred to as the
 variants of AIC, HQ, and BIC, respectively. 

{\bf \textit{Strong consistency}:}
A penalized model selection approach is referred to be strongly consistent if $\hat{M} \overset{a.s.}\longrightarrow M_0$ as data size tends to infinity. We may also say that $\hat{M}$ is strongly consistent. 

We make the following assumption about a segment-wise independent time series.	

\vspace{0.1cm}

(M.2) 
	The sequence $\{X_n:n=1,\ldots,N\}$ are $D$-dimensional ($D \in \mathbb{N}$) and independent random variables. Moreover, for each $k=1,\ldots,M_0+1$, we have $\lim_{N \rightarrow \infty}N_k = \infty$,  
	and $\{X_n: n=L_{k-1}+1, \ldots,L_{k}\}$ are i.i.d. distributed according to $\mathcal{G}_k\sim [\mu_k, V_k]$. 
	When $M_0 \geq 1$, 
	$\mu_{k} \neq \mu_{k+1}$, $k=1,\ldots,M_0$.

\subsection{Necessary conditions for strongly consistent model selection} \label{sec:necessary}

We start by examining the case when the true data generating process has no change point.

%

\vspace{0.1cm}
\begin{theorem} \label{thm:loglog0changepoint}
	Assume that the data generating model is given by (M.2) with $M_0=0$. 
	Then the smallest penalty term $f(N)$ that guarantees strong consistency of $\hat{M}$ in Algorithm~\ref{algo:oracle} is at least $\Theta(\log\log N)$. 
	If we additionally assume $\beta(N) = \Theta (N)$, then there exists a constant $C>0$ such that $f(N) = C \log\log N$ 
	guarantees strong consistency of $\hat{M}$.
\end{theorem}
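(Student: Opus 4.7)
The plan is to exploit the law of the iterated logarithm (LIL) for i.i.d.\ partial sums. Under (M.2) with $M_0=0$, the $\{X_n\}$ are i.i.d.\ with mean $\mu_1$ and covariance $V_1$. Set $Z_n=X_n-\mu_1$ and $\tilde S_\ell=\sum_{n=1}^\ell Z_n$. A direct ANOVA-style decomposition of the within-segment sum of squares yields, for any admissible configuration $0=\ell_0<\ell_1<\cdots<\ell_k<\ell_{k+1}=N$,
\begin{equation*}
\hat{e}_0 \;-\; \sum_{j=1}^{k+1}\textit{Loss}_q(x_{\ell_{j-1}+1},\ldots,x_{\ell_j})
\;=\; \sum_{j=1}^{k+1}\frac{|\tilde S_{\ell_j}-\tilde S_{\ell_{j-1}}|^2}{\ell_j-\ell_{j-1}} \;-\; \frac{|\tilde S_N|^2}{N}.
\end{equation*}
Hence $\hat e_0-\hat e_k$ equals the supremum of the right-hand side over admissible $\{\ell_j\}$, and the problem reduces to controlling suprema of CUSUM-type quadratic forms in $\tilde S_\ell$.

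For the \emph{necessary} direction, I would restrict to $k=1$, where the supremum becomes the standard statistic $\sup_\ell|N\tilde S_\ell-\ell\tilde S_N|^2/\bigl(N\ell(N-\ell)\bigr)$. By the Hartman--Wintner LIL, almost every sample path admits a random subsequence $\ell_j\to\infty$ with $|\tilde S_{\ell_j}|^2/\ell_j\ge c\log\log\ell_j$ for some constant $c>0$ depending only on $V_1$. Pair each $\ell_j$ with a sample size $N_j$ such that $\ell_j\in[\beta(N_j),N_j-\beta(N_j)]$ and $\ell_j/N_j\to 0$; this is possible since $\beta(N)\to\infty$ but $\beta(N)=o(N)$ in any non-degenerate regime. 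Using the matching upper LIL bound $|\tilde S_{N_j}|^2=O(N_j\log\log N_j)$ a.s.\ to dominate the cross term $\ell_j\tilde S_{N_j}$ in the CUSUM ratio, a short calculation shows $\hat e_0-\hat e_1\ge c'\log\log N_j$ along the subsequence. Consequently any $f(N)=o(\log\log N)$ is exceeded infinitely often, forcing $\hat M\ge 1$ i.o.\ and ruling out strong consistency. The main obstacle in this direction is the alignment: the LIL supplies a random subsequence of indices $\ell_j$, and these must be coupled with a deterministic sequence of sample sizes $N_j$ in a way that respects the $\beta$-constraint while preserving the $\log\log$ growth rate.

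For the \emph{sufficient} direction, the assumption $\beta(N)=\Theta(N)$ gives $\ell_j-\ell_{j-1}\ge c_0 N$ for some $c_0>0$ and caps the number of admissible segments uniformly by $1+1/c_0$. The Hartman--Wintner LIL also provides the uniform bound $\sup_{\ell\le N}|\tilde S_\ell|^2\le C_0 N\log\log N$ a.s.\ for a constant $C_0=C_0(V_1)$. Substituted into the identity above, this yields, uniformly over all admissible configurations,
\begin{equation*}
\frac{|\tilde S_{\ell_j}-\tilde S_{\ell_{j-1}}|^2}{\ell_j-\ell_{j-1}} \;\le\; \frac{4\sup_{\ell\le N}|\tilde S_\ell|^2}{c_0 N} \;\le\; \frac{4 C_0}{c_0}\log\log N,
\end{equation*}
so $\hat e_0-\hat e_k\le C^\star\log\log N$ with a single constant $C^\star$ depending only on $c_0$ and $V_1$ and every admissible $k\ge 1$. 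Taking $C>C^\star$ in $f(N)=C\log\log N$ forces $kf(N)>\hat e_0-\hat e_k$ for every $k\ge 1$ and all sufficiently large $N$, so Algorithm~\ref{algo:oracle} returns $\hat M=0=M_0$ almost surely. The only subtlety here is that the upper LIL bound must be used in its uniform (supremum) form; once this is in place, boundedness of the segment count under $\beta(N)=\Theta(N)$ closes the argument.
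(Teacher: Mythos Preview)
Your sufficient direction is correct and close in spirit to the paper's. Both rest on a uniform law of the iterated logarithm; the paper invokes Strassen's invariance principle (as in the derivation of its inequality for $g^{(1)}_{0,n_1,n_2}$), while you use $\sup_{\ell\le N}|\tilde S_\ell|^2\le C_0 N\log\log N$ directly. Your ANOVA identity gives a clean global bound on $\hat e_0-\hat e_k$; the paper instead argues that $\hat M=m>0$ forces the single pairwise gain $g^{(1)}_{0,n_1,n_2}\ge f(N)$ for the first two detected segments, then bounds that one gain. Either organization works once $\beta(N)=\Theta(N)$ caps the segment count.

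The necessary direction has a real gap. Your alignment step needs $\ell_j/N_j\to 0$ together with $\ell_j\ge\beta(N_j)$, and you justify this by claiming ``$\beta(N)=o(N)$ in any non-degenerate regime.'' That claim is false here: the second half of the theorem takes $\beta(N)=\Theta(N)$ explicitly, and the lower bound in the first half is meant to hold for every admissible $\beta$. When $\beta(N)=c_0 N$ with $c_0\in(0,1/2)$, admissibility forces $\ell_j/N_j\in[c_0,1-c_0]$, so $\ell_j/N_j\to 0$ is impossible and your control of the cross term $\ell_j\tilde S_{N_j}$ collapses. The paper avoids the whole coupling problem by using the \emph{deterministic} split at $N/2$ for every $N$: the decomposition gain at that split is
\[
g^{(1)}_{0,N/2,N}=\Bigl|\sqrt{\tfrac12}\,\tfrac{\tilde S_{N/2}}{\sqrt{N/2}}-\sqrt{\tfrac12}\,\tfrac{\tilde S_N-\tilde S_{N/2}}{\sqrt{N/2}}\Bigr|^2,
\]
and applying the LIL separately to the two independent halves (this is the paper's Lemma~\ref{lemma_2}) yields $\limsup_N g^{(1)}_{0,N/2,N}/\log\log N=C>0$ a.s. Hence $\hat M\neq 0$ infinitely often whenever $f(N)<C\log\log N$, with no random subsequence of cut points and no interaction with $\beta(N)$ beyond the trivial requirement that the midpoint split be admissible.
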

\vspace{0.1cm}


Theorem~\ref{thm:loglog0changepoint} implies that the smallest penalty for strong consistency is  $\Theta(\log \log N)$ (given by variants of HQ criterion).
A by-product of its proof is a technical lemma (Lemma~\ref{lemma_2} in the appendix) 
that implies that an AIC-like criterion (with constant penalty) always produces a non-vanishing overfitting probability. We recap this observation after Lemma~\ref{lemma_2}. Interestingly, these observations are similar to those found for order selection of autoregressive models, even though an autoregressive model is purely parametric, and the proof in those cases require different technical approaches \cite{hannan1979determination,shibata1976selection}.  

Next, we consider $f(N)=\Theta( \log\log N )$ for the case $M_0>0$. 
We define 
 \begin{align} \label{eq31}
	\bar{\Delta}_{\mu} = \max_{k=1,\ldots,M_0}\{|\mu_k-\mu_{k+1}|\} , \,
	\underline{\Delta}_{\mu} = \min_{k=1,\ldots,M_0}\{|\mu_k-\mu_{k+1}|\}  
 \end{align}



	
\begin{theorem} \label{thm:loglog}
	Under the model assumption (M.2) with $M_0>0$, suppose that $\beta(N)=\Theta(N)$ and 
	
	(A.2) The largest candidate number of change points $M_{\text{max}}$ is finite and $M_{\text{max}} \geq M_0+3$,
	
	(A.3) The true segment sizes satisfy $\beta(N)\leq N_k/4$, $N_k=\Theta(N)$ for $k=1,\ldots,M_0+1$. In addition, $f(N) = o(N)$.
	
	Then there exists a positive constant $C_0$ such that whenever $f(N) \geq C_0 \log\log N$,  the estimated number $\hat{M}$ satisfies $M_0 \leq \hat{M} \leq 2M_0$ for sufficiently large $N$ almost surely, namely
	$$
	\P \biggl\{ \limsup\limits_{N \rightarrow \infty} 
	(\hat{M} < M_0) \cup (\hat{M} > 2M_0) \biggr\}=0.
	$$
	Moreover, the distances between the estimated change points and true ones satisfy 
	\begin{align} \label{eq91}
		 \limsup\limits_{N\rightarrow \infty}\min_{k=1,\ldots,\hat{M}} \frac{|\hat{L}_{k}-L_{j}| }{ 2\beta(N)}  \leq 1 \quad (a.s.) 
	\end{align}
	for each $j=1,\ldots,M_0$. 
\end{theorem}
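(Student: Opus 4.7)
The plan is to decompose the result into three pieces: (i) underfitting is impossible, so $\hat{M} \geq M_0$ eventually almost surely; (ii) any estimated configuration with $k = \hat{M}$ must cluster near the true change points, yielding~(\ref{eq91}); and (iii) overfitting is bounded, so $\hat{M} \leq 2 M_0$ eventually almost surely. All three rely on relating the within-segment quadratic loss gap $\hat{e}_k - \hat{e}_{M_0}$ to the penalty gap $(M_0 - k) f(N)$ and applying Borel--Cantelli-type reasoning simultaneously across the finitely many candidate values of $k \leq M_{\text{max}}$.

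For the underfitting part, I would fix any $k < M_0$ and observe that every $k$-segment partition respecting $\beta(N) = \Theta(N)$ must force at least one estimated segment to straddle a true change point, thereby merging $\Theta(N)$ points from two distributions whose means differ by at least $\underline{\Delta}_\mu > 0$. A direct expectation calculation shows such a merged segment inflates the quadratic loss by at least $c_1 N$ for some constant $c_1 > 0$ depending only on $\underline{\Delta}_\mu$ and the ratios $N_k/N$. Concentration for sample means of independent variables, combined with Borel--Cantelli across the polynomially many candidate partitions, then yields $\hat{e}_k - \hat{e}_{M_0} \geq c_1 N / 2$ eventually almost surely. Since $f(N) = o(N)$, the penalty difference $(M_0 - k) f(N)$ cannot compensate, and underfitting is ruled out for every $k < M_0$.

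The location statement~(\ref{eq91}) and the overfitting bound I would handle together. A variant of the underfitting argument shows that if, in the optimal $\hat{M}$-partition with $\hat{M} \geq M_0$, some true $L_j$ is farther than $2\beta(N)$ from every $\hat{L}_k$, then a single estimated segment of length $\Theta(N)$ straddles $L_j$ and pays a loss penalty of order $N$ that cannot be absorbed by $f(N) = o(N)$; this gives~(\ref{eq91}). For the overfitting step, assume toward contradiction that $\hat{M} > 2 M_0$. Using~(\ref{eq91}) one can show that at most $2 M_0$ estimated change points can be associated with a true change point (at most two neighbours flanking each true $L_j$ in a minimising partition), so at least $\hat{M} - 2 M_0 \geq 1$ estimated change points lie strictly inside a single true i.i.d. segment, producing purely spurious splits. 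The technical lemma underlying Theorem~\ref{thm:loglog0changepoint}, namely Lemma~\ref{lemma_2}, bounds the within-segment quadratic-loss reduction from any such spurious split by $O(\log\log N)$ almost surely. Summing over spurious splits gives $\hat{e}_{M_0} - \hat{e}_{\hat{M}} \leq C_1 (\hat{M} - M_0) \log\log N$ for a deterministic $C_1$, and choosing $C_0$ strictly larger than $C_1$ in the theorem makes the penalty dominate, contradicting the optimality of $\hat{M}$.

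The main technical obstacle is the bookkeeping in the overfitting step: the optimal $\hat{M}$-partition can place change points arbitrarily close to a true $L_j$, so decomposing the loss reduction as a sum of independent contributions over within-true-segment i.i.d. blocks requires a careful comparison. One clean route is to introduce an auxiliary partition that first inserts the true change points $L_1, \ldots, L_{M_0}$ and then adds the remaining $\hat{M} - M_0$ estimated change points; by~(\ref{eq91}), swapping the near-boundary estimated change points for the exact $L_j$'s perturbs the loss by a lower-order amount relative to $N$, after which the residual loss reduction genuinely decomposes over i.i.d. blocks where Lemma~\ref{lemma_2} applies. Carrying out this comparison, and in particular justifying why at most $2 M_0$ of the estimated change points can be ``absorbed'' by true boundaries (so that the rest truly lie inside i.i.d. blocks), is in my view the crux of the argument.
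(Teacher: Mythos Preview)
Your three-part plan and the orders of magnitude you identify are right, but two steps do not go through as written. In the underfitting part, ``concentration combined with Borel--Cantelli across the polynomially many candidate partitions'' is not available under (M.2), which assumes only finite second moments: Chebyshev tails are $O(1/N)$ and the union bound over $\Theta(N^k)$ partitions diverges (sub-Gaussianity enters only later, as (A.4) in Theorem~\ref{thm:strongConsistency}). The paper obtains the needed uniform almost-sure control via Strassen's invariance principle, packaged as Lemma~\ref{lemma_4}, and it compares $\hat{e}_m$ not to $\hat{e}_{M_0}$ but to $\hat{e}_{m+3}$, by explicitly splitting the straddling detected segment into four pieces whose middle two land in distinct true segments; Lemma~\ref{lemma_4} then gives $\hat{e}_m - \hat{e}_{m+3} \geq c\,\min\{N_{j-1},N_j\}$ almost surely, uniformly in the split positions. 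This specific construction is precisely why (A.2) requires $M_{\text{max}} \geq M_0+3$, which your route leaves unexplained.

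For overfitting, the paper's argument is far simpler than yours and does not use (\ref{eq91}) or any auxiliary partition. A direct pigeonhole shows that if $m > 2M_0$ then, since at most $M_0$ of the $m+1$ detected segments can straddle a true change point, more than $M_0+1$ detected segments lie entirely inside the $M_0+1$ true segments, so some true segment contains two \emph{adjacent} detected segments. Merging just that pair yields $\hat{e}_{m-1} - \hat{e}_m \leq g^{(k)}_{n_1,n_2,n_3}$, and the uniform almost-sure bound $g^{(k)}_{n_1,n_2,n_3} < C_0 \log\log N$ over the random $n_i$'s (each at least $\beta(N)=\Theta(N)$) again comes from the Strassen-type estimate~(\ref{eq_8}), not from Lemma~\ref{lemma_2}, which is stated only for deterministic split points. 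Comparing $\hat{e}_{m-1}$ to $\hat{e}_m$ rather than $\hat{e}_{M_0}$ to $\hat{e}_{\hat{M}}$ eliminates all the bookkeeping you identify as the crux.
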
 
\begin{remark}
The requirement $M_{\text{max}} \geq M_0+3$ (instead of $M_{\text{max}} \geq M_0$) in (A.2) is for technical convenience in the proof of Theorem~\ref{thm:loglog}. 
	Theorem~\ref{thm:loglog} shows that   $f(N)=\Theta(\log \log N)$ suffices to guarantee no underfitting. Although we cannot prove it avoids overfitting as well, we proved that the extent of overfitting is bounded (since $\hat{M} \leq 2M_0$ holds almost surely).  In addition, Inequality (\ref{eq91}) implies that each true change point is ``almost'' captured, since its nearest discovered change point is within distance  $\beta(N)$, which can be chosen to be arbitrarily small compared with $N$ (or each $N_j$). 
	In the next subsection, we relax the assumption on $\beta(N)$ and obtain strongly consistent $\hat{M}$ by increasing the penalty to be BIC-like. 
\end{remark}

\subsection{Sufficient conditions for strongly consistent model selection} \label{sec:sufficient}

\textit{Definition}: A real-valued random variable $X$ is said to be sub-Gaussian if it has the property that there exists a constant $b>0$ such that for every $t \in \mathbb{R}$, one has 
$E(e^{t\{X-E(X)\}}) \leq e^{b^2t^2/2}$. 
It is easy to prove using Markov inequality that there is some $c_0 > 0$ such that for every $a \in \mathbb{R}$,
\begin{align} \label{eq92}
	\P( |\bar{X}-E(X)| \geq a ) \leq 2e^{-c_0 a^2 N}
\end{align}
where $\bar{X}$ is the mean of i.i.d. random variables $\{X_n:n=1,\ldots,N\}$.
Assuming that $X_n$ follows a sub-Gaussian distribution,  it is possible to prove the strong consistency of $\hat{M}$. Intuitively speaking, it is because of the quick decay of tail probability densities that enables union bounds for the behavior of discovered change points.   


\begin{theorem} \label{thm:strongConsistency}
	Under the model assumption (M.2) with $M_0 \geq 0$, suppose that Assumptions~(A.2), (A.3) in Theorem~\ref{thm:loglog} hold and that 
	
	(A.4) $\mathcal{G}_k , k=1,\ldots,M_0+1$ are marginally sub-Gaussian. In other words, there exists a constant $c_0 > 0$ such that (\ref{eq92}) holds for each marginal distribution of $\mathcal{G}_k$.
	
	If 
	\begin{align}
		f(N) \geq 100 \bar{\Delta}_{\mu}^2 \eta^{*}(N),  
		\label{condition1}
	\end{align} 
	where 
	$$
	\eta^{*}(N) = \frac{250 D  c^{M_0-1}}{ c_0 \underline{\Delta}_{\mu}^2} \log N, \quad
		c=4/(\sqrt{2}-1)^{2},
	$$ 
	then	 
	$\hat{M}$ is strongly consistent. 
	Moreover,
	$$
	\limsup\limits_{N \rightarrow \infty} \biggl( \max_{k=1,\ldots,M_0} \frac{|\hat{L}_{k} - L_{k}|}{\eta^{*}(N)} \biggr) \leq 1  \quad (a.s.) 
	$$
\end{theorem}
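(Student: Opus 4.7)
The plan is to build the proof on two pillars: a uniform sub-Gaussian concentration inequality that controls segment-wise empirical means, and the elementary quadratic-loss identity that turns every comparison between candidate segmentations into a comparison between differences of sample means. The key deterministic identity is
\begin{equation*}
\sum_{n \in I} |X_n - \bar X_I|^2 \;-\; \sum_{n \in I_1} |X_n - \bar X_{I_1}|^2 \;-\; \sum_{n \in I_2} |X_n - \bar X_{I_2}|^2 \;=\; \frac{|I_1|\,|I_2|}{|I|}\, |\bar X_{I_1} - \bar X_{I_2}|^2,
\end{equation*}
valid whenever $I$ is split into disjoint intervals $I_1,I_2$. Under (A.4), applying (\ref{eq92}) coordinate-wise with a union bound over the $O(N^2)$ candidate intervals and invoking Borel--Cantelli shows that, almost surely for all sufficiently large $N$ and uniformly over disjoint intervals $I_1, I_2$ contained inside a single true segment,
\begin{equation*}
|\bar X_{I_1} - \bar X_{I_2}|^2 \;\le\; \frac{C D \log N}{\min(|I_1|,|I_2|)},
\end{equation*}
for a constant $C$ depending only on $c_0$. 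This is the only stochastic step; everything after it is deterministic.

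For the \emph{underfitting} direction, I would argue by induction on the number of true change points enclosed by a single estimated segment $J=[\hat L_{j-1}+1,\hat L_j]$. If $J$ contains a true change point $L_k$ with both one-sided pieces of length at least $\eta^{*}(N)$, the identity above shows that splitting at $L_k$ decreases the SSE by a term proportional to $\underline\Delta_\mu^2$ times the smaller side, hence by $\Theta(N)$ in light of (A.3). This dominates the penalty saving $f(N)=o(N)$, so any segmentation with $\hat M<M_0$ is strictly worse than one that inserts such a split. The factor $c^{M_0-1}$ in $\eta^{*}(N)$ arises because when several true change points sit inside $J$, the contrast between opposite-side sample means can be diluted by nested interior change points; a careful peeling of one nested change point at a time shows the dilution is bounded by the factor $c=4/(\sqrt 2-1)^2$ per layer, leaving a useable gain after $M_0-1$ layers.

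For \emph{overfitting} and \emph{localization} I would use the concentration bound in the opposite direction. Once $\hat M \ge M_0$ and each true $L_k$ has an estimate within $\eta^{*}(N)$, any extra spurious split inserted inside a region of constant mean reduces the SSE by at most $\frac{|I_1||I_2|}{|I|} \cdot \frac{C D \log N}{\min(|I_1|,|I_2|)} \le C D \log N$, which by condition (\ref{condition1}) is strictly smaller than $f(N)$, so the penalized criterion rejects every additional change point and forces $\hat M \le M_0$. Conversely, if some true $L_k$ has no estimate within $\eta^{*}(N)$, the same identity produces an unavoidable between-segment term of order $\underline\Delta_\mu^2\,\eta^{*}(N)$ inside the nearest estimated segment, and relocating a change point next to $L_k$ strictly improves the penalized criterion; this gives both $\hat M \ge M_0$ (already essentially proved in the previous paragraph) and the localization bound.

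The main obstacle is the inductive bookkeeping behind the $c^{M_0-1}$ factor. A one-shot split at a true change point inside an estimated segment only guarantees a drop proportional to the shorter side of that change point \emph{within the segment}, and adjacent true change points perturb the segment-wide mean in ways that can mask the drop. Tracking how this perturbation shrinks the detectable gap as change points are peeled off one by one is what fixes both the constant $c$ and the exponent $M_0-1$ in the statement, and is the only place where the statement's constants really matter. Once that step is in hand, underfitting is ruled out by an $\Omega(N)$-versus-$o(N)$ comparison, overfitting by a $C\log N$-versus-$f(N)$ comparison, and the localization rate $\eta^{*}(N)$ is read off the same family of inequalities.
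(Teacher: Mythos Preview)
Your proposal has the right elementary ingredients (the quadratic identity, coordinate-wise sub-Gaussian concentration, a union bound over $O(N^2)$ intervals, Borel--Cantelli), and your underfitting argument via ``$\Omega(N)$ gain versus $o(N)$ penalty'' is essentially what the paper does in Theorem~\ref{thm:loglog}. But two things are off.

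\emph{First, you locate the $c^{M_0-1}$ factor in the wrong place.} The paper's underfitting argument needs no such factor: a pigeonhole step guarantees that some true change point $L_j$ sits inside an estimated segment with at least $N_{j-1}/2$ and $N_j/2$ points on each side, so a single split gains $\Theta(N)$ regardless of how many other change points are nested inside. The constant $c=4/(\sqrt 2-1)^2$ and the exponent $M_0-1$ arise instead in the \emph{overfitting/localization} analysis, through a recursive definition $\eta_k(N) = \max\{c\,\eta_{k+1}(N)\,|\mu_{k+1}-\mu_{k+2}|^2/|\mu_k-\mu_{k+1}|^2,\; 2\tilde q_{k+1}(N)\}$ that tracks how far a detected change point can drift from $L_k$ when the adjacent detected segment is itself contaminated by up to $\eta_{k+1}(N)$ points from segment $k+2$. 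Unrolling this recursion gives $\eta_k(N)\le \eta^*(N)$.

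\emph{Second, and more seriously, your overfitting step has a gap that is exactly what forces condition~(\ref{condition1}).} You write that a spurious split inside ``a region of constant mean'' reduces the SSE by at most $CD\log N$. But after localization to within $\eta^*(N)$, the estimated segments are \emph{not} regions of constant mean: each may carry up to $\eta^*(N)$ points from an adjacent true segment at one or both ends. When you split such a contaminated segment, the decomposition gain picks up a deterministic term of order $\bar\Delta_\mu^2\,\eta^*(N)$ from the contaminating points, on top of the $O(\log N)$ stochastic term. This is precisely why the theorem needs $f(N)\ge 100\,\bar\Delta_\mu^2\,\eta^*(N)$ rather than merely $f(N)\ge C\log N$, and why the paper proves separate lemmas (its Lemmas~\ref{lemma:onesideContaminatedSplit}, \ref{lemma:twosideContaminatedSplit}, \ref{lemma:onesideContaminatedSplit_version2}) bounding the gain from splitting a segment with one- or two-sided contamination of size $\le\eta$. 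Your concentration bound as stated only covers intervals $I_1,I_2$ lying inside a \emph{single} true segment and therefore does not apply here. The paper then couples these contamination lemmas with a forward induction on $k$ (its Step~2): if the algorithm run on the first $k$ true segments plus $\le\eta_k(N)$ extra points returns exactly $k-1$ change points with the stated localization, the same holds for $k+1$. That induction is where overfitting, localization, and the $c^{M_0-1}$ factor are handled simultaneously; your plan separates them in a way that leaves the contaminated-split case unaddressed.
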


\begin{remark} \label{understand_condition}
	Assumption~(A.4) is satisfied by Gaussian, any bounded random variables, etc. 
	By the conditions of Theorem~\ref{thm:strongConsistency}, both the minimal distance and the minimal penalty required for strong consistency are no more than $\Theta(\log N)$. Note that we do not need the requirement $\beta(N)=\Theta(N)$.
	The constant term for $f(N)$ is proportional to the dimension $D$ and 
	 the ratio $\bar{\Delta}_{\mu}^2/\underline{\Delta}_{\mu}^2$. 
	Intuitively,  higher dimension and larger variance require stronger penalties. Besides this, it is interesting to observe that $f(N)$ depends on the ratio $\bar{\Delta}_{\mu}^2/\underline{\Delta}_{\mu}^2$ which is scale invariant, while $\eta^{*}(N)$ only depends on the smallest distance between two neighboring distributions (in terms of the means). 
\end{remark}

\subsection{Implementation of Algorithm~\ref{algo:oracle}} \label{sec:finish}

Implementations of Algorithm~\ref{algo:oracle} based on popular methods such as binary segmentation \cite{scott1974cluster}, segment neighborhood \cite{auger1989algorithms}, and optimal partitioning \cite{yao1984estimation,jackson2005algorithm} are possible.
But since our loss function is quadratic, it is possible to have an algorithm that takes full advantage of this fact. We propose such a computationally efficient algorithm, which 
is analogous to but also differs from the usual k-means algorithm (in that each segment/cluster contains points with consecutive indices). 
It can then be regarded as an ``ordered k-means'' algorithm.
The algorithm reduces the within-segment quadratic loss in each step by moving the change points based on the following result.
\begin{proposition} \label{lemma_new1}
	Suppose that $\{X_{n}:n=1,\ldots,N_1\}$	and $\{X_{n}:n=N_1+1,\ldots,N_1+N_2\}$ are two segments.
	Consider the operation that shifts the change point from $N_1$ to $N_1-t$ where $0<t<N_1$: the two segments become $\{X_{n}:n=1,\ldots,N_1-t\}$	and $\{X_{n}:n=N_1-t+1,\ldots,N_1+N_2\}$. 
	The within-segment quadratic loss will be reduced after the operation if  and only if 
	\begin{align*} 
		\frac{N_1 |\bar{X}_{0,N_1}-\bar{X}_{N_1-t,N_1}|^2}{N_1-t}	>
		\frac{N_2|\bar{X}_{N_1,N_1+N_2}-\bar{X}_{N_1-t,N_1}|^2}{N_2+t} ,
	\end{align*}
 where $\bar{X}_{n_1,n_2}$ denotes the sample mean of $\{X_n:n=n_1+1,\ldots,n_2 \}$.
\end{proposition}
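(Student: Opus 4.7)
The plan is to compute the change in the total within-segment quadratic loss directly using the standard ANOVA (``between + within'') decomposition of the sum of squares. To set up notation, I would write $A=\{1,\dots,N_1\}$, $B=\{N_1+1,\dots,N_1+N_2\}$, $C=\{N_1-t+1,\dots,N_1\}$, $A'=A\setminus C$, $B'=C\cup B$, with corresponding sample means $\bar A,\bar B,\bar C,\bar A',\bar B'$ (so $\bar A=\bar X_{0,N_1}$, $\bar B=\bar X_{N_1,N_1+N_2}$, $\bar C=\bar X_{N_1-t,N_1}$). The old loss is $L_{\mathrm{old}}=\sum_{n\in A}|X_n-\bar A|^2+\sum_{n\in B}|X_n-\bar B|^2$ and the new loss, after moving the break point from $N_1$ to $N_1-t$, is $L_{\mathrm{new}}=\sum_{n\in A'}|X_n-\bar A'|^2+\sum_{n\in B'}|X_n-\bar B'|^2$.

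Next I would invoke the elementary identity: for a set $S$ of size $m$ partitioned into $S_1,S_2$ of sizes $m_1,m_2$,
\begin{equation*}
\sum_{n\in S}|X_n-\bar X_S|^2=\sum_{n\in S_1}|X_n-\bar X_{S_1}|^2+\sum_{n\in S_2}|X_n-\bar X_{S_2}|^2+\frac{m_1 m_2}{m_1+m_2}\,|\bar X_{S_1}-\bar X_{S_2}|^2.
\end{equation*}
Applying this to the partition $A=A'\cup C$ and, in reverse, to $B'=C\cup B$, and substituting back into $L_{\mathrm{old}}$ and $L_{\mathrm{new}}$, the two appearances of $\sum_{n\in C}|X_n-\bar C|^2$ cancel. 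What remains is the clean identity
\begin{equation*}
L_{\mathrm{old}}-L_{\mathrm{new}}=\frac{(N_1-t)\,t}{N_1}\,|\bar A'-\bar C|^2-\frac{t N_2}{N_2+t}\,|\bar B-\bar C|^2,
\end{equation*}
so that the loss strictly decreases if and only if the right-hand side is positive.

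Finally, I would eliminate $\bar A'$ (which does not appear in the statement) by using the averaging relation $N_1\bar A=(N_1-t)\bar A'+t\bar C$, which gives $\bar A'-\bar C=\frac{N_1}{N_1-t}(\bar A-\bar C)$. Plugging this in, the coefficient $\frac{(N_1-t)t}{N_1}$ collapses to $\frac{N_1 t}{N_1-t}$ times $|\bar A-\bar C|^2$; dividing the resulting inequality through by $t>0$ yields exactly the stated criterion
\begin{equation*}
\frac{N_1\,|\bar X_{0,N_1}-\bar X_{N_1-t,N_1}|^2}{N_1-t}>\frac{N_2\,|\bar X_{N_1,N_1+N_2}-\bar X_{N_1-t,N_1}|^2}{N_2+t}.
\end{equation*}

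There is no real technical obstacle here; the proposition is essentially an algebraic identity. The only thing to be careful about is keeping the five different means bookkept correctly and noticing that the natural output of the ANOVA step involves $\bar A'$, not $\bar A$, so the last rewriting via $N_1\bar A=(N_1-t)\bar A'+t\bar C$ is what produces the exact form appearing in the statement.
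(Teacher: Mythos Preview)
Your argument is correct. The paper does not actually supply a proof of this proposition (it is stated and then used without justification), but the ANOVA-type identity you invoke is precisely the decomposition the paper derives as equation~(\ref{eq55}) in the proof of Lemma~\ref{lemma_2}, so your approach is fully in line with the paper's own tools; the only extra step is your clean substitution $\bar A'-\bar C=\tfrac{N_1}{N_1-t}(\bar A-\bar C)$ to rewrite the result in terms of the original segment means.
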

From the above lemma, it follows that in order to decide whether a subsequence of data should be moved from one segment to its neighboring one, it only suffices to compute its mean and also the means of the original two segments. 
By iterative application of the above lemma, a local optimum of 
step 2 in Algorithm~\ref{algo:oracle} could be achieved.

\section{Experiments} \label{sec:experiments}

In this section, we present experimental results to demonstrate the above theoretical results, and the advantages of MW method on both synthetic and real-world datasets.
The algorithms were implemented in Matlab and run on a PC with 3.1 GHz dual-core CPU. The source codes and related data will be made public online in the future. 
In the experiments, we rescale the penalty term $kf(N)$ in Algorithm~\ref{algo:oracle} to $var(X)kf(N)$. Although it does not affect our theoretical results, 
it is convenient for implementation in practice. 

\subsection{Independent data}
In a synthetic data experiment, we generated data of two change points: $X_n \sim \mathcal{N}(\mu_1,\sigma^2)$, $n = 1,\ldots,0.2N$, $X_n \sim \mathcal{N}(\mu_2,\sigma^2)$, $n = 0.2N+1,\ldots,0.8N$, $X_n \sim \mathcal{N}(\mu_3,\sigma^2)$, $n = 0.8N+1,\ldots,N$. 
Let $[\mu_1,\mu_2,\mu_3,\sigma^2]=[-1,0,1,1]$, $M_{\text{max}}=10$, $f(N)=2 \log N$, $\beta(N)=\log\log N$.
For illustration purpose, an example dataset with $N=100$ is plotted in Fig.~\ref{fig:synthetic:numcp}(a).
For each $N=100,500,1000$, we generate 100 independent datasets and summarize the detected change points (normalized by $N$) in Fig.~\ref{fig:synthetic:numcp}(b). We also summarized the percentage frequencies of $\hat{M}<2$, $\hat{M}=2$, and $\hat{M}>2$, respectively denoted by $f=(f_1,f_2,f_3)$. They are $f=(38,60,2)$ for $N=100$, $f=(0,89,11)$ for $N=300$, and $f=(0,95,5)$ for $N=1000$.  
The results show that both the estimated number of and locations of change points become more and more accurate as the sample size grows. 

\begin{figure}
\centering
  \includegraphics[width=0.6\linewidth]{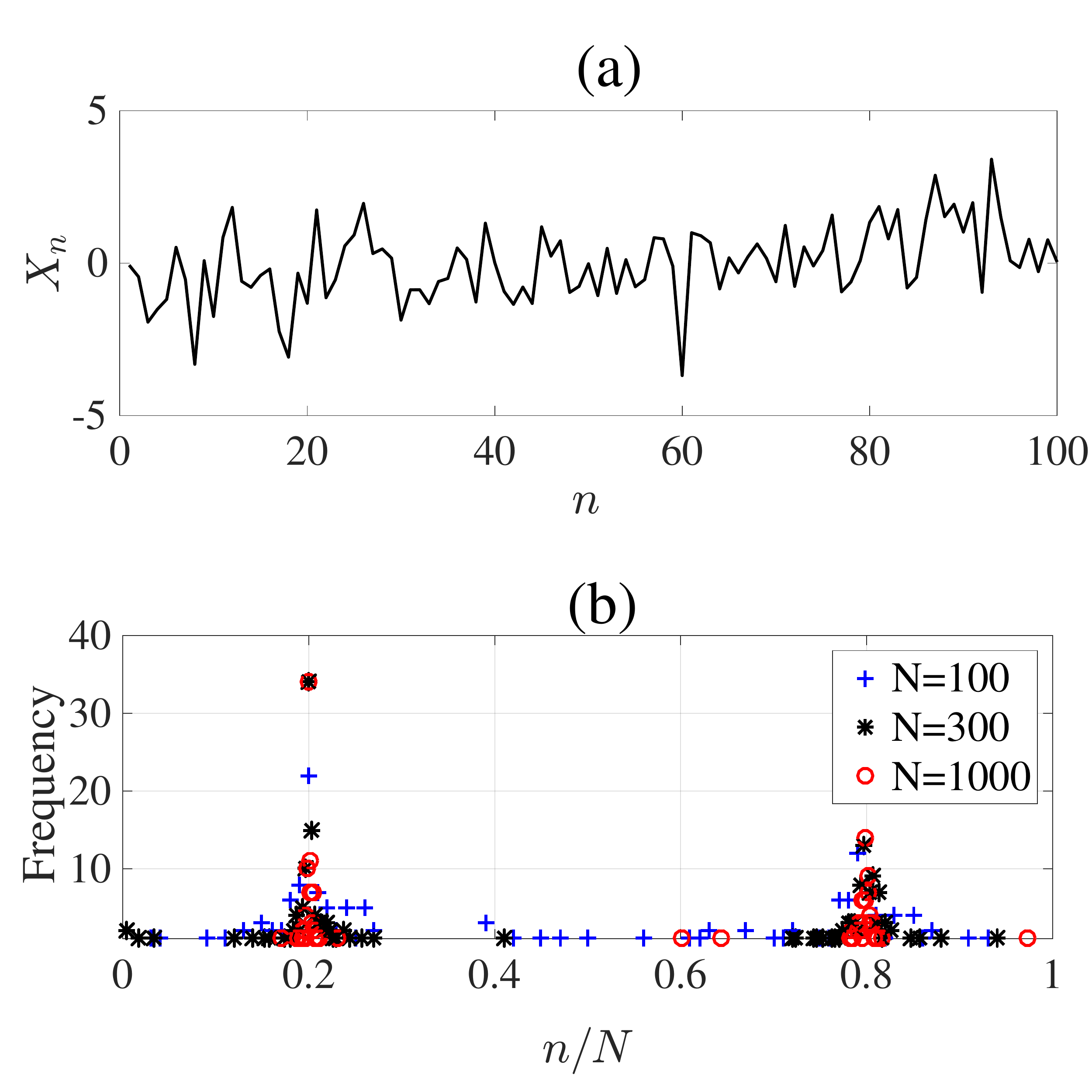}
  \vspace{-0.2in}
  \caption{(a) A sequence of independent data that contains two change points, and (b) the frequencies of discovered change points for each $N=100,300,1000$}
  \label{fig:synthetic:numcp}
  \vspace{-0.2in}
\end{figure}


\subsection{Dependent data}

In a synthetic data experiment for dependent data, we generated data of two change points at $0.1N$ and $0.3N$. 
Data is generated from a zero mean autoregression in each of the three segments, and the associated AR filters  are respectively $[\psi^{(1)}_1,\psi^{(1)}_2]= [0.8,-0.3]$, $[\psi^{(2)}_1,\psi^{(2)}_2]=[-0.5,0.1]$,$[\psi^{(3)}_1,\psi^{(3)}_2]=[0.5,-0.5]$.
Suppose that the noises are $\mathcal{N}(0,1)$ and $M_{\text{max}}=5$, $f(N)= \log N$, $\tau=1$. 
Fig.~\ref{fig:synthetic:ARDataExample}(a) illustrates one dataset with $N=1000$. 
We set window sizes to be $[w_1,w_2,w_3,w_4]=[100 ,50, 20, 10]$ and apply Algorithm~\ref{algo:multiWindow} to that dataset.
The score is plotted in Fig.~\ref{fig:synthetic:ARDataExample}(b).

\begin{figure}
\centering
  \includegraphics[width=0.6\linewidth]{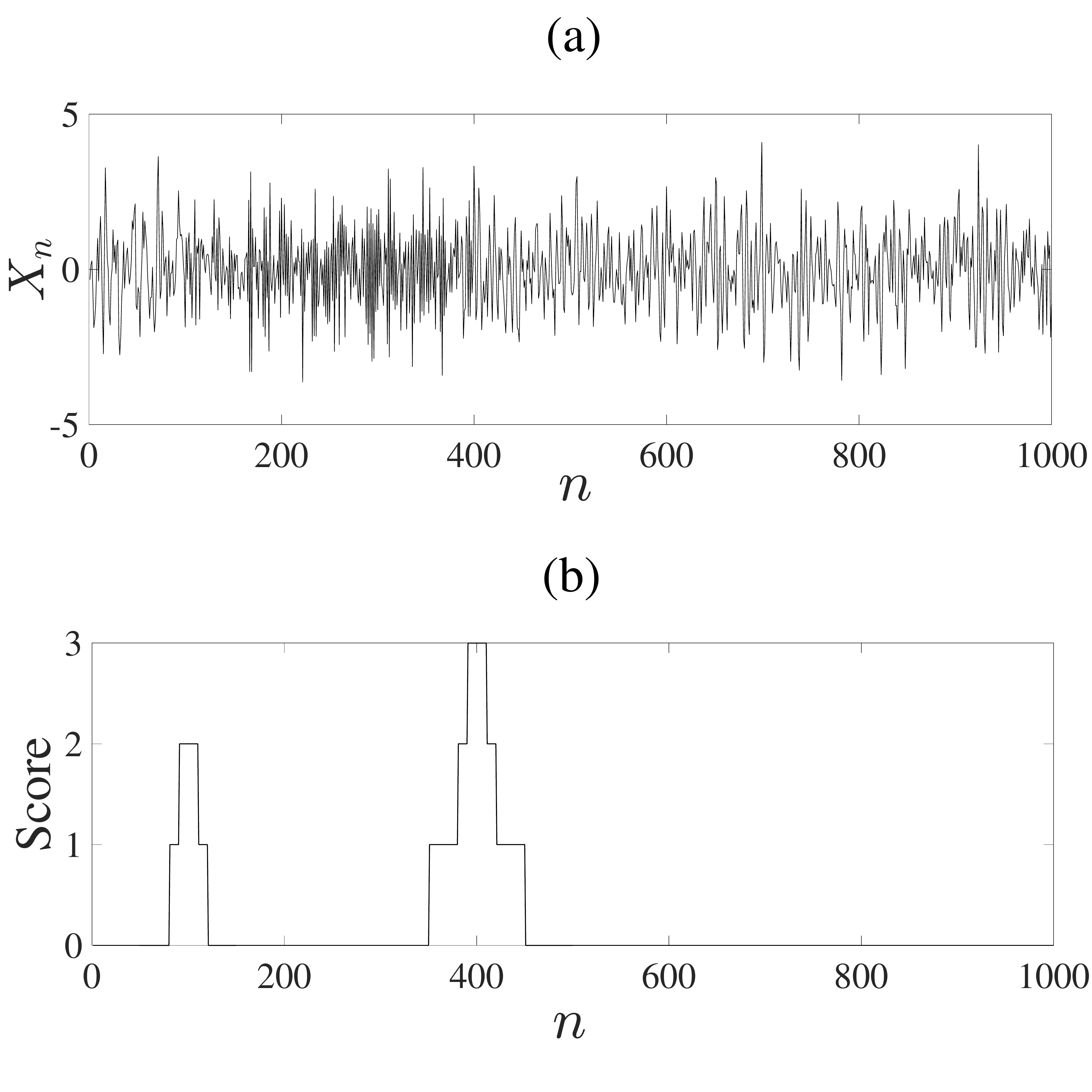}
  \vspace{-0.2in}
  \caption{(a) A time series that consists of three segments of various autoregressions, and (b)  score plot for change detection}
  \label{fig:synthetic:ARDataExample}
  \vspace{-0.2in}
\end{figure}

Next, we compare MW method with binary segmentation (BS) method (which is perhaps the most widely applied approach in the literature).
The BS method first scans all the points and finds a single change point that minimizes the sum of within-segment loss defined in (\ref{eq95}), and then extends to multiple change points discovery  by iteratively repeating the method on different subsets of the series. 
	This procedure is repeated until the maximal number of change points is reached or no more change point is detected.
By assuming that $L$ is a constant, the complexity of BS algorithm for segment-wise AR of size $N$ is calculated to be in the order of $\Theta(N^2)$, while MW method is of $\Theta(N+N^2/w_R^2)$ ($w_R$ is the smallest window size).
To compare the performance of MW and BS, we repeat the above experiment for 50 iterations. In each iteration, we generated three autoregressive filters of order $L=2$ that are independent and uniformly distributed in the space of all stable AR(2) filters.\footnote{In general, for a stationary AR($L$) processes with coefficients $ \psi = [\psi_1,\ldots,\psi_L] $, $ \psi$ stays in a bounded subspace $S_L \subset \mathbb{R}^L$. 
	For the purpose of fair comparison, in the experiment we draw AR filters that are uniformly distributed on $S_L$, using the technique proposed in \cite{ding2015data}.}
The change points are still $0.1N$ and $0.3N$. The number of points is $N=10^4$. 
The discovered change points are plotted in Fig.~\ref{fig:compare}(a).
In order to compare the computational speed,
we repeat the above experiment for each $N = [10^3,5\times 10^3,10^4,5\times 10^4, 10^5]$.  
For the MW method, we use fixed number of windows $\{w_r\}_{r=1}^4 = N/10, N/20, N/50, N/100$ and tolerance parameter $\tau = 2$. 
	We set the minimal length for BS method to be $10L$ (which is used to guarantee stability involved in matrix computations). 
	For both methods, $\m=4$. The comparison is plotted in Fig.~\ref{fig:compare}(b).
The average numbers of detected change points (with standard deviation inside the parenthesis) under each $N$ are respectively $2.48(0.88)$, $1.98(0.31)$, $1.98(0.23)$, $1.98(0.31)$ for MW method, and $2.56(0.88)$,  $3.2(0.75)$, $3.46(0.97)$, $3.7(0.97)$ for BS method. 
    Here, if a discovered range has size no larger than twice the smallest window size and it contains a true change point, it is regarded as a successfully detected change point.  
 
The simulation results shows that MW is more robust and computationally efficient than BS method. As was pointed out in the previous section, MW is robust because it looks into the data at different resolutions, thus reducing the risks of overfitting or underfitting (which the BS method suffers from).

\begin{figure}
\centering
  \includegraphics[width=0.8\linewidth]{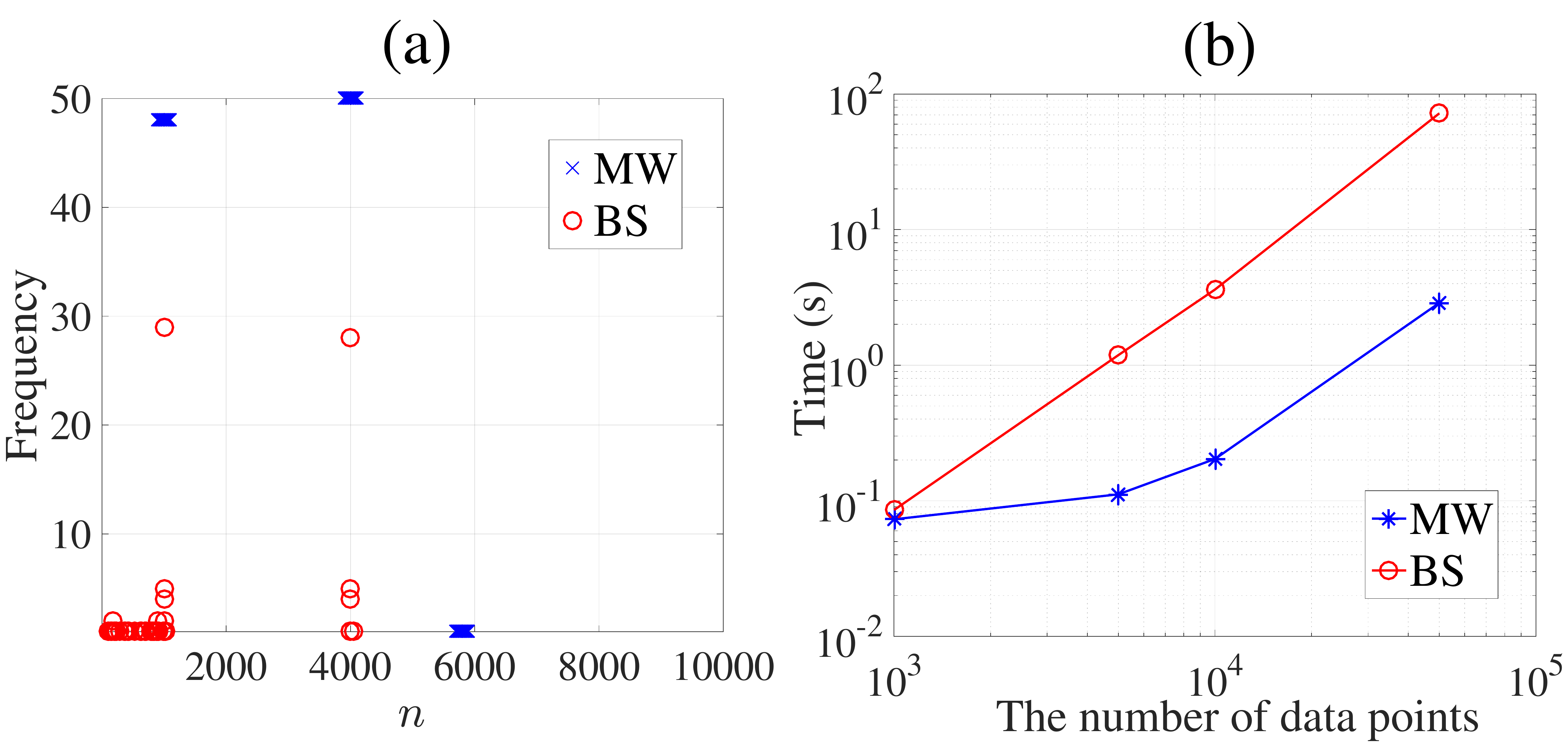}
  \vspace{-0.1in}
  \caption{(a) Frequencies of detected change points (or its ranges) by BS and MW methods,  and (b) log-log plot of the computation time on multiple change points analysis }
  \label{fig:compare}
  \vspace{-0.1in}
\end{figure}

\subsection{Eastern US temperature from 1895 to 2015}

In this subsection, we investigate the temporal variability of the summer-time temperature over the Eastern US  for 1895-2015 (plotted in Fig.~\ref{fig:AMO:yearParcorr}(a)) with our change detection algorithm. The temperature data is obtained from National Climatic Data Center (NCDC, http://www.ncdc.noaa.gov/) and averaged over the Eastern US (east of $100^{\circ}$W). Fig.~\ref{fig:AMO:yearParcorr} shows the data  and its sample partial autocorrelations, from which we recognize the data as independent.
We choose $\m = 7$, and try a range of penalty terms $f(N) = j \log \log N, j=1,...,5$. We start with $j=1,2$; the penalty is so small that it gives the maximally possible $7$ change points. Then we increase $f(N)$ to $3 \log \log N$, and obtain $5$ change points at  years 1901, 1929, 1944, 2009, 2012 (marked in solid lines in Fig.~\ref{fig:AMO:yearData}(a)). If $f(N)$ is increased to $4 \log \log N$, the change points are the years 1929, 1944, 2004 (marked in dashed lines in Fig.~\ref{fig:AMO:yearData}(a)). If $f (N )$ is further increased to $j \log \log N, j\geq 5$, there is no change point detected. 
The segmentation of the time series of the Eastern US temperature over the past century matches the phase shift of the Atlantic Multi-decadal Oscillation (AMO), defined as the North Atlantic sea surface temperature after removing the long-term warming trend \cite{sutton2005atlantic}. As seen from Fig.~\ref{fig:AMO:yearData}(b), since the early $20$th century, there are warm phases from 1929 to 1960 and from 1990 to 2015, and cool phases from 1901 to 1929 and from 1965 to 1990, in synchrony with the segmentation of the Eastern US temperature time series defined by the change points. As the ocean has much larger heat capacity than the continent, this implies that the multi-decadal variability of Eastern US temperature is modulated by the AMO. The dynamic link between AMO and Eastern US climate has previously been reported. For example, based upon a global climate model, it was  indicated in \cite{sutton2005atlantic,sutton2007climate} that the AMO plays an important role in driving the summer-time temperature in the Eastern US. This validates our conclusion derived from the change point detection algorithm.

\begin{figure}
\centering
  \includegraphics[width=0.6\linewidth]{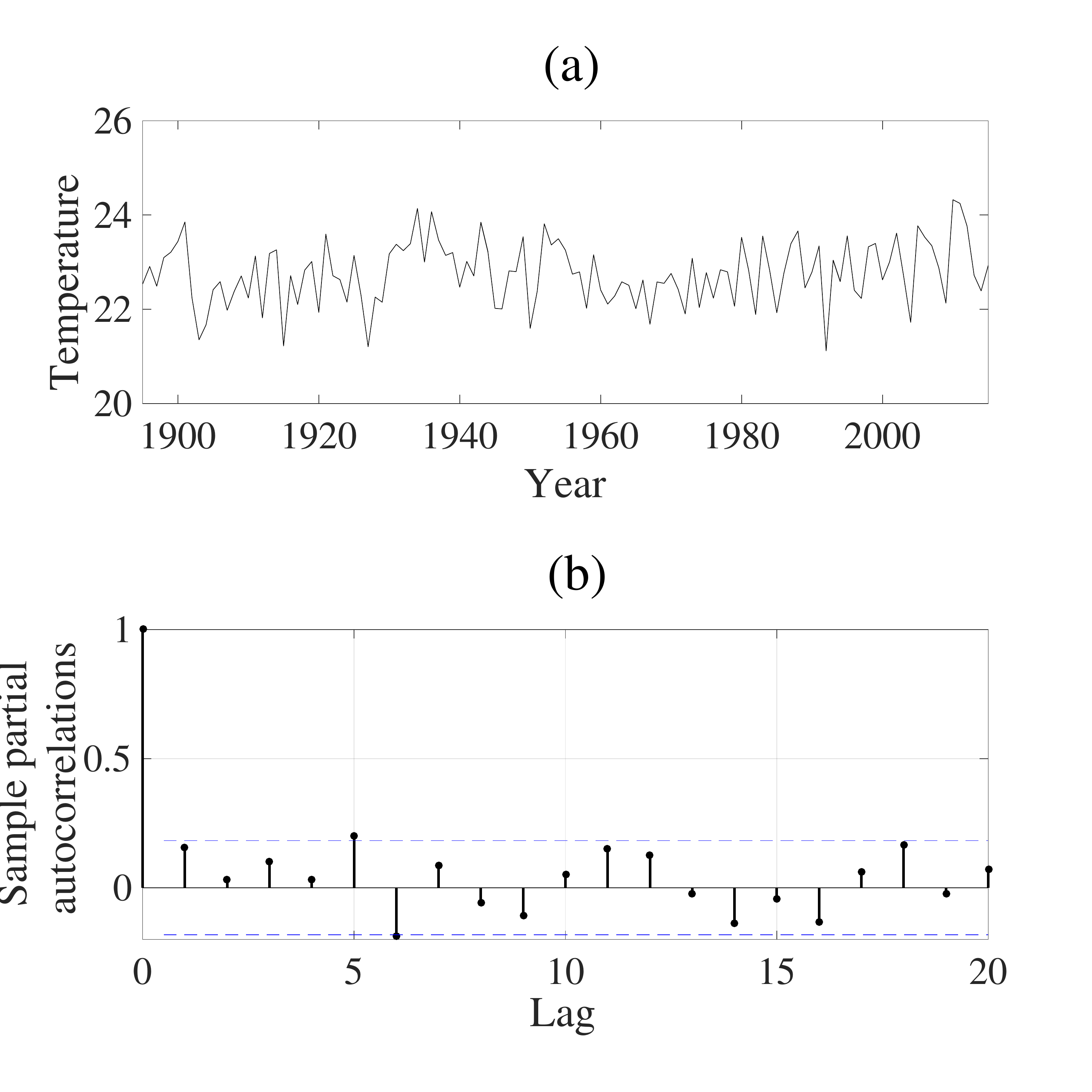}
  \vspace{-0.2in}
  \caption{(a) 1895-2015 summer-time temperature over the Eastern US (unit: $^{\circ}$C), and (b) its sample partial autocorrelations}
  \label{fig:AMO:yearParcorr}
  \vspace{-0.2in}
\end{figure}

\begin{figure}
\centering
  \includegraphics[width=0.6\linewidth]{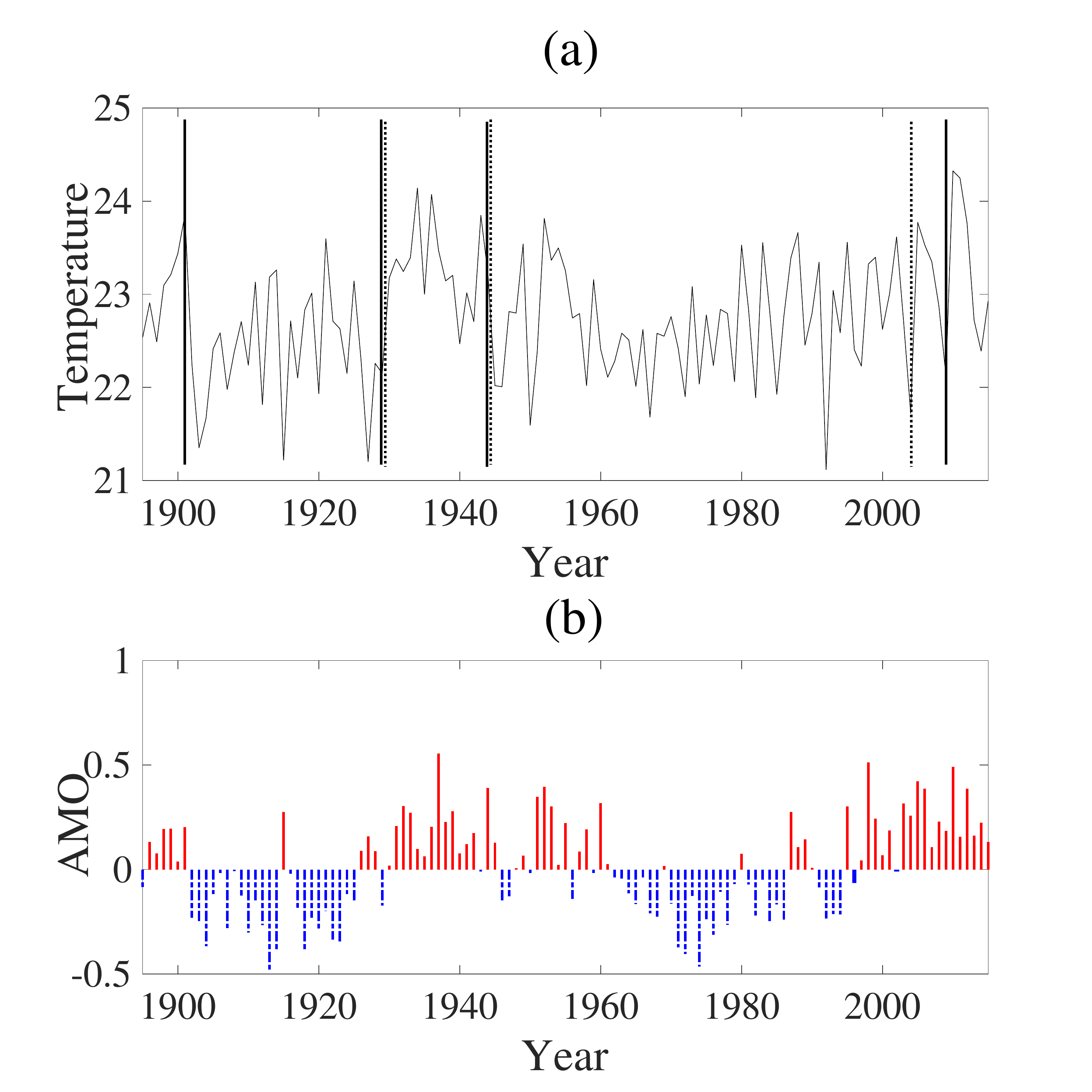}
  \vspace{-0.2in}
  \caption{(a) Detected change points of the Eastern US temperature, and (b) phase shifts of the AMO }
  \label{fig:AMO:yearData}
  \vspace{-0.2in}
\end{figure}

\subsection{El Nino data from 1854 to 2015}
As the largest climate pattern,  El Nino serves as the most dominant factor of oceanic influence on climate. The NINO3 index, defined as the area averaged sea surface temperature from $5^{\circ}$S-$5^{\circ}$N and $150^{\circ}$W-$90^{\circ}$W, is calculated from HadISST1 from 1854 to 2015 \cite{rayner2003global}, as shown in Fig.~\ref{fig:NINO:monthData}(a) (with 1944 points). By looking at the partial autocorrelation of the complete dataset in Fig.\ref{fig:NINO:monthData}(b), we tentatively set autoregression order $L = 2$ (in fact, we also experimented the cases $L = 3,4,5$ and the final results did not differ much). We apply Algorithm~\ref{algo:multiWindow} with window sizes $300, 250, 200, 150, 100, 50$, and  $\m = \lfloor N/300 − 1 \rfloor = 5$ (where$\lfloor a \rfloor$ denotes the largest integer that is no larger than $a$). 
We start with $f (N ) = 2 \log \log N$ and obtain the score plot as shown in Fig.~\ref{fig:NINO:score}(a). The plots show that the time period from June 1979 to September 1987 most likely contains one change point. We change the penalty to smaller or larger values, or use other window sizes, and found that the range is detected most of the time. 
In fact, we can trace how the AR coefficients change in Fig.~\ref{fig:NINO:score}(b), where each point is the AR coefficient estimated from a sliding window of size 300 and sliding step size 20. In other words, the windows are $\{X_1,\ldots,X_{300}\},\{X_{21},\ldots,X_{320}\},\ldots,\{X_{1641},\ldots,X_{1941}\}$. The green diamond, blue star, and red circle indicate respectively the first 37 windows, the second 37 windows, and the last 9 windows. As illustrated from the plot, the red circles deviate nontrivially from other points, which means that the data has a structural change after 74 windows, and that time is exactly the year 1979. The shift of the Pacific Decadal Oscillation (PDO) from a long cold phase (1940-1978) to a warm phase (1979-present) is likely to explain why this year is unique in the past 150 years. The PDO can have a strong influence on the climate in the Northern hemisphere, including the drought frequency in the North America \cite{mccabe2004pacific}, ecosystem productivity \cite{francis1998effects}, as well as the Bermuda High pressure system in Atlantic ocean \cite{li2012variation} . 

\begin{figure}
\centering
  \includegraphics[width=0.6\linewidth]{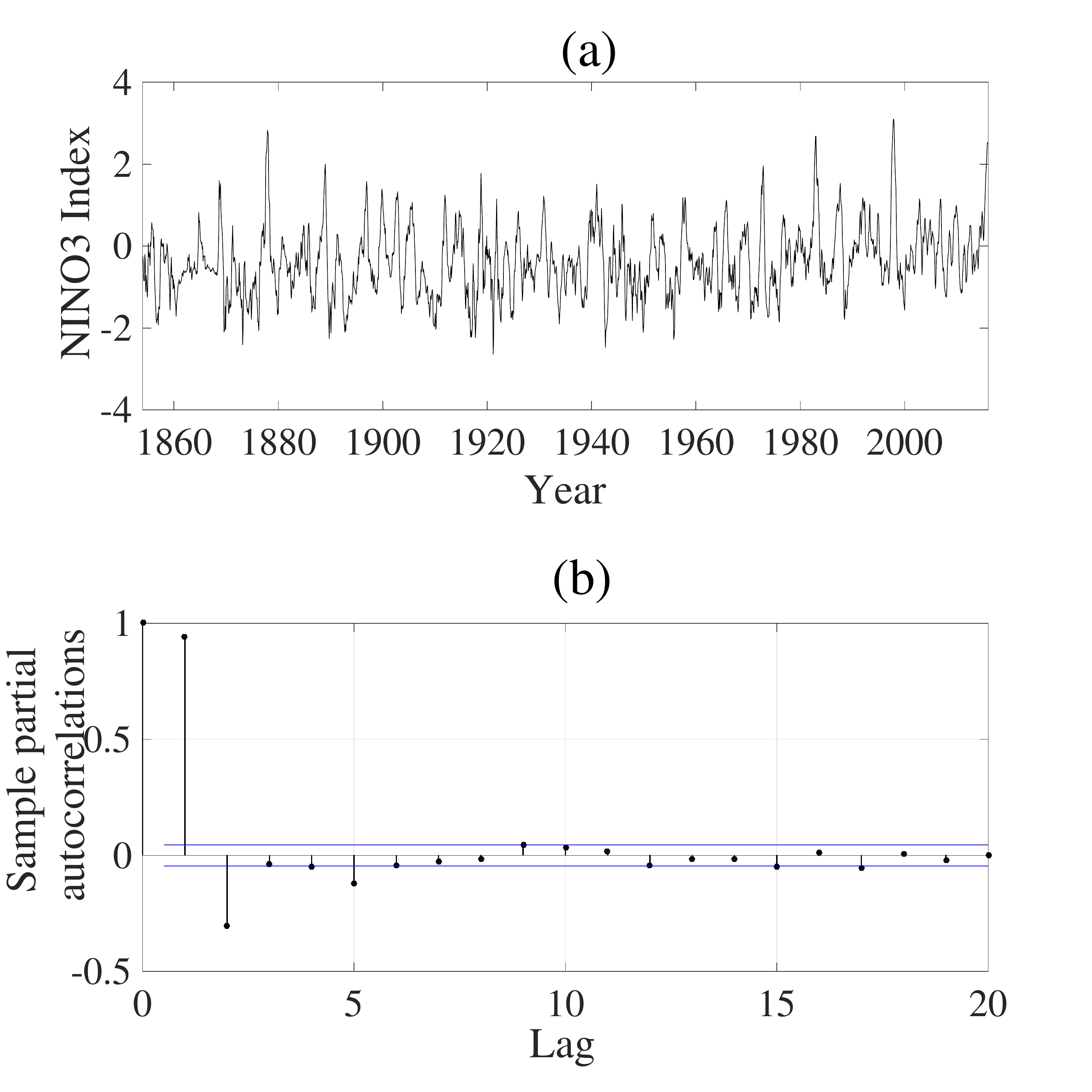}
  \vspace{-0.2in}
  \caption{(a) Monthly El Nino (Nino3) index  from 1854 to 2015, and (b) its sample partial autocorrelations}
  \vspace{-0.2in}  
  \label{fig:NINO:monthData}
\end{figure}

\begin{figure}
\centering
  \includegraphics[width=0.8\linewidth]{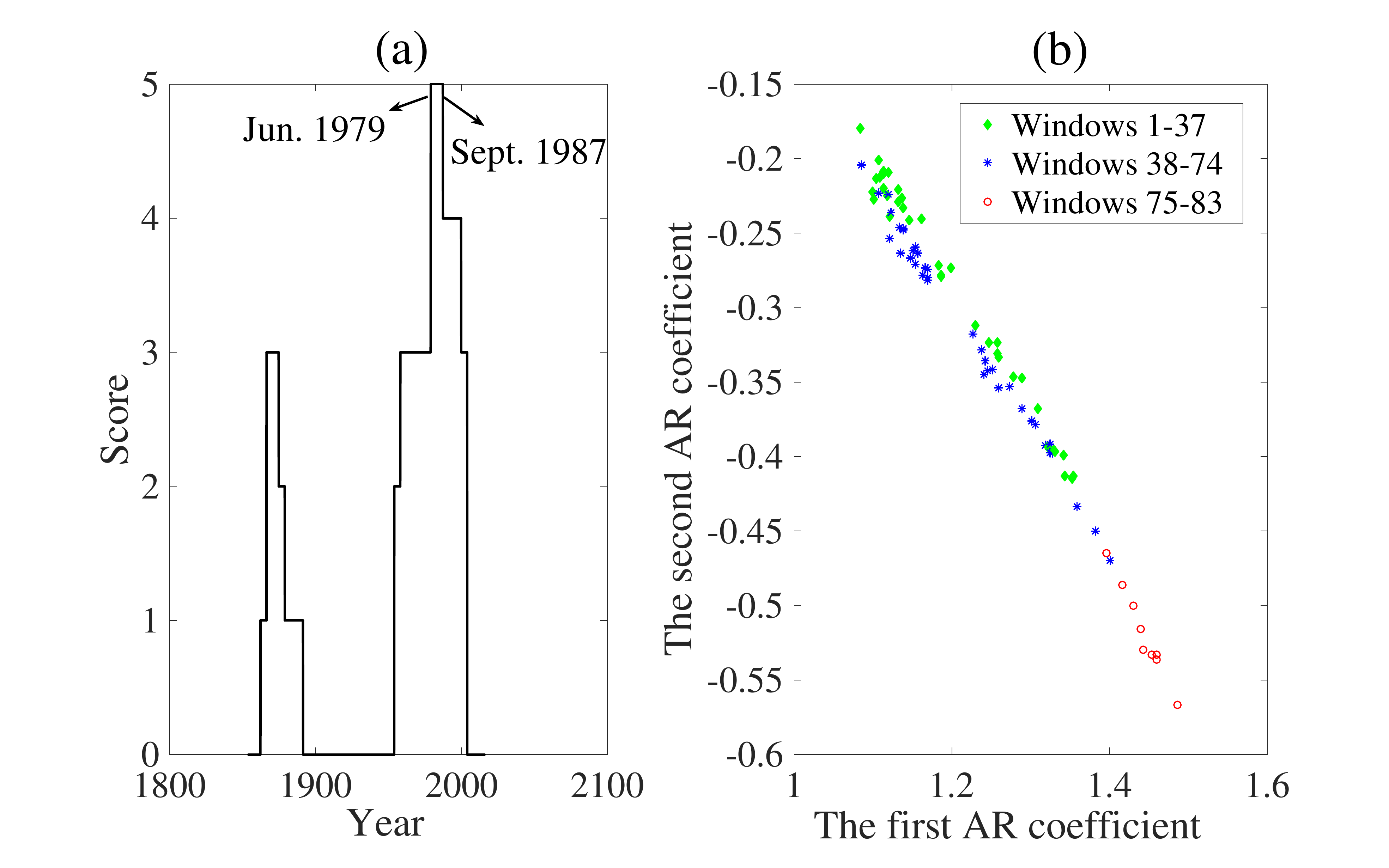}
  \vspace{-0.15in}
  \caption{(a) Score plot of El Nino data obtained from Algorithm~\ref{algo:multiWindow} which indicates the ranges of change points, and (b) the trace plot that illustrates how the coefficients of AR(2) vary with time  }
  \label{fig:NINO:score}
  \vspace{-0.2in}
\end{figure}

\section{Conclusion}

Although prior work has extensively focused on the consistency issues involved in multiple change points analysis, little attention has been paid to deriving strong consistency. This work investigated the necessary and  sufficient conditions under which a model selection criterion is strongly consistent. 
Our analysis is under the assumption of the independence of data, and for the quadratic loss function. Nevertheless, it appears that our proposed technical tools can be applied to studying richer data structures. 
Furthermore, we modeled a general stochastic process by segment-wise autoregressions, and proposed an effective and efficient multi-window technique for change detection. The main idea was to turn the change detection problem into that of independent data scenario. 
Different window sizes are applied and properly combined to achieve a good trade-off between estimation accuracy and the resolution of detection.
Generalization to other loss functions or procedures is possible 
and will be considered a future work. 

\pagebreak

\appendices
\section*{Appendix }


Let $S^{(k)}_{n_1:n_2} = \sum_{n=L_{k-1}+n_1+1}^{L_{k-1}+n_2} (X_n-\mu_k)$,
and $S^{(k)} = S^{(k)}_{0:N_k}$.
Let $S^{(k_1,k_2)}_{n_1:n_2}=S^{(k_1)}_{n_1:N_{k_1}}+S^{(k_1+1)}+\cdots+S^{(k_2-1)}+S^{(k_2)}_{0:n_2}$ for $k_1<k_2$
and $S^{(k_1,k_2)}_{n_1:n_2}=S^{(k_1)}_{n_1:n_2}$ for $k_1=k_2$.

We define the within-segment loss 
$Q^{(k)}_{n_1:n_2} =$ $\textit{Loss}_q(x_{L_{k-1}+n_1+1},\ldots,x_{L_{k-1}+n_2})$, and 
the cross-segment loss 
$Q^{(k_1,k_2)}_{n_1:n_2} = \textit{Loss}_q(x_{L_{k_1-1}+n_1+1},\ldots,x_{L_{k_2-1}+n_2})$. 
Let 
\begin{align}
	g^{(k_1,k_2,k_3)}_{n_1,n_2,n_3}=Q^{(k_1,k_3)}_{n_1:n_3}-(Q^{(k_1,k_2)}_{n_1:n_2}+Q^{(k_2,k_3)}_{n_2:n_3}), \label{eq_break}
\end{align}
referred to as the \textit{decomposition gain}. 
In the case of $k_1=k_2<k_3,n_2=N_{k_1}$, 
we denote $g^{(k_1,k_2,k_3)}_{n_1,n_2,n_3}$ by $g^{(k_1,k_3)}_{n_1,n_3}$;
In the case of $k_1=k_2=k_3$, we denote $g^{(k_1,k_2,k_3)}_{n_1,n_2,n_3}$ by $g^{(k_1)}_{n_1,n_2,n_3}$. Thus,
\begin{align}
		g^{(k)}_{n_1,n_2,n_3}
	=&Q^{(k)}_{n_1:n_3}-(Q^{(k)}_{n_1:n_2}+Q^{(k)}_{n_2:n_3}) ,\nonumber \\
		g^{(k,k+1)}_{n_1,n_2}
	=& Q^{(k,k+1)}_{n_1:n_2} - 
	(Q^{(k)}_{n_1:N_{k}}+Q^{(k+1)}_{0:n_2})	. \label{eq_break}
\end{align} 

If $n_1 \geq n_2$ or $n_2 \geq n_3$ in the above definitions, the corresponding values are understood to be zeros. 
For each $d=1, \ldots, D$, let $X_{n,d}$ and $S^{(k)}_{n_1:n_2,d}$ denote the $d$th component of $X_n$ and $S^{(k)}_{n_1:n_2}$, respectively.

\vspace{0.3cm}
{\bf \textit{Technical Lemmas}}
\vspace{0.3cm}

\begin{lemma} \label{lemma_5}
	Suppose that $n_2 \geq n_1 \geq 34$. Then 
	$$\frac{n_1}{n_1+n_2} \log\log(n_2) \leq \frac{1}{2}\log \log (n_1).$$
\end{lemma}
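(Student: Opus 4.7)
The plan is to reformulate the claimed inequality as a statement about a single increasing function. Clearing denominators, the inequality is equivalent to
\[
h(n_2) \;:=\; (n_1+n_2)\log\log n_1 \;-\; 2n_1\log\log n_2 \;\geq\; 0,
\]
so it suffices to show $h$ is nonnegative on $[n_1,\infty)$ whenever $n_1 \geq 34$.

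The first observation is that $h(n_1) = 2n_1\log\log n_1 - 2n_1\log\log n_1 = 0$, so the problem reduces to showing $h$ is nondecreasing on $[n_1,\infty)$. Differentiating with respect to $n_2$ (with $n_1$ fixed) gives
\[
h'(n_2) \;=\; \log\log n_1 \;-\; \frac{2n_1}{n_2 \log n_2}.
\]
Since $n_2 \log n_2$ is strictly increasing in $n_2$, for any $n_2 \geq n_1$ one has $\dfrac{2n_1}{n_2 \log n_2} \leq \dfrac{2n_1}{n_1 \log n_1} = \dfrac{2}{\log n_1}$, so it is enough to verify the pointwise bound $\log\log n_1 \geq \dfrac{2}{\log n_1}$, equivalently $(\log n_1)(\log\log n_1) \geq 2$.

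This last inequality is an elementary numerical check. At $n_1 = 34$, $\log 34 > 3.5$, hence $\log\log 34 > \log 3.5 > 1.25$, and the product exceeds $4 > 2$; both factors are increasing in $n_1$, so the bound extends to all $n_1 \geq 34$. Combining, $h'(n_2) \geq 0$ on $[n_1,\infty)$, and integrating from $n_1$ yields $h(n_2) \geq h(n_1) = 0$, which is precisely the claim.

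There is no serious obstacle here: the lemma is a one-variable calculus estimate, and the only care needed is to track the direction of the inequality when replacing $n_2 \log n_2$ by its lower bound $n_1 \log n_1$, and to confirm that the threshold $n_1 \geq 34$ is comfortably sufficient for $(\log n_1)(\log\log n_1) \geq 2$ (the true cutoff is somewhat smaller, around $n_1 \approx 11$, so $34$ leaves ample slack).
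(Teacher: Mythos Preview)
Your proof is correct. Both your argument and the paper's are one-variable monotonicity computations, but they differ in which variable is frozen. The paper sets $n=n_1+n_2$, defines the ratio $h(x)=x\log\log(n-x)/\{n\log\log x\}$, and shows it is increasing on $[3,n/2]$, whence $h(n_1)\le h(n/2)=\tfrac12$. You instead freeze $n_1$, form the difference $h(n_2)=(n_1+n_2)\log\log n_1-2n_1\log\log n_2$, check $h(n_1)=0$, and show $h'\ge 0$ via the clean sufficient condition $(\log n_1)(\log\log n_1)\ge 2$. Your parametrization gives a simpler derivative and makes the numerical threshold transparent (indeed your observation that the true cutoff is near $n_1\approx 11$ is sharper than what the paper states); the paper's version, on the other hand, packages the estimate as a single ratio bound that is slightly more convenient when the lemma is invoked symmetrically in $n_1,n_2$.
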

\begin{proof}
	Define $h(x)=x \log\log(n-x) / \{n \log\log (x)\}$ for $3 \leq x \leq n/2$. Let $y=\log(n-x) \geq \log(n/2)$. By simple calculation, $dh(x)/dx \geq 0$ is equivalent to $y \log(y) \geq (2*[1-\{\log(x)\}^{-2}])^{-1}$ which can be guaranteed if $y \geq \log(17)$. 
	Thus, for $n \geq 34$, $h(x)$ is an increasing function on $x \in [3, n/2]$ with maximum value $1/2$. Lemma~\ref{lemma_5} follows from taking $n=n_1+n_2,x=n_1$. 
\end{proof}


\vspace{0.2cm}

\begin{lemma} \label{lemma_2}
	Suppose that $\{X_n:n=1,\ldots,N_1\}$ and $\{X_n:n=N_1+1,\ldots,N\}$ are independent random variables from the same distribution $\mathcal{G}$, with mean $\mu$ and variance $V$. Let $N_2 = N-N_1$. 
	Assume that $N_1,\, N_2 \rightarrow \infty$ as $N \rightarrow \infty$, and $N_1,N_2$ depend only on $N$. 
	Then $g^{(1,2)}_{0,N_2}$ (the decomposition gain)
	converges in distribution to $Z^\T V Z$, where $Z \in \mathcal{N}_D(0,I)$. 
	Moreover,
	\begin{equation} 
	 \limsup_{N\rightarrow \infty} 
	 \frac{g^{(1,2)}_{0,N_2}}{\log\log (\min\{N_1,N_2\})} =C \quad (a.s.) \label{eq2}
	\end{equation}
	for some positive constant $C \leq 8 \ tr(V)$.
\end{lemma}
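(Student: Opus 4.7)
The starting point is the one-way ANOVA identity for quadratic loss. Writing $\bar X_i$ for the sample mean of block $i$ and $\bar X=(N_1\bar X_1+N_2\bar X_2)/N$ for the pooled mean, a direct expansion of the definition of the decomposition gain, together with the vanishing of the within-block cross terms $\sum_{n\in\text{block }i}(X_n-\bar X_i)=0$, gives
\begin{equation*}
 g^{(1,2)}_{0,N_2}=N_1|\bar X_1-\bar X|^2+N_2|\bar X_2-\bar X|^2=\frac{N_1N_2}{N}|\bar X_1-\bar X_2|^2.
\end{equation*}
Both claims of the lemma then reduce to asymptotic statements about $\bar X_1-\bar X_2$.

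\textbf{Distributional limit.} Let $\alpha_N^2=N_1/N$ and $\beta_N^2=N_2/N$, so $\alpha_N^2+\beta_N^2=1$. Since the two blocks are independent, the multivariate CLT gives $\sqrt{N_i}(\bar X_i-\mu)$ converging in distribution to two \emph{independent} $\mathcal N(0,V)$ variables $W_1,W_2$, and
\begin{equation*}
 \sqrt{N_1N_2/N}\,(\bar X_1-\bar X_2)=\beta_N\sqrt{N_1}(\bar X_1-\mu)-\alpha_N\sqrt{N_2}(\bar X_2-\mu).
\end{equation*}
Although $\alpha_N,\beta_N$ need not converge, any subsequence contains a further subsequence along which $\alpha_N\to\alpha$, $\beta_N\to\beta$ with $\alpha^2+\beta^2=1$; the corresponding limit is $\beta W_1-\alpha W_2\sim\mathcal N(0,(\alpha^2+\beta^2)V)=\mathcal N(0,V)$, independent of the chosen subsequence. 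Writing this limit as $V^{1/2}Z$ with $Z\sim\mathcal N(0,I)$, the continuous mapping theorem yields $g^{(1,2)}_{0,N_2}\to Z^{\T}VZ$ in distribution.

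\textbf{Almost-sure limsup.} For the second claim, apply the elementary inequality $|\bar X_1-\bar X_2|^2\leq 2|\bar X_1-\mu|^2+2|\bar X_2-\mu|^2$ to obtain
\begin{equation*}
 g^{(1,2)}_{0,N_2}\leq 2\frac{N_2}{N}\,N_1|\bar X_1-\mu|^2+2\frac{N_1}{N}\,N_2|\bar X_2-\mu|^2.
\end{equation*}
Applying the Hartman--Wintner LIL coordinatewise, and using that the $\limsup$ of a finite sum of nonnegative sequences is at most the sum of the $\limsup$s, gives $\limsup_{N_i\to\infty}N_i|\bar X_i-\mu|^2/\log\log N_i\leq 2\,tr(V)$ almost surely. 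Since $N_2/N\leq 1$, the first term is eventually at most $(4\,tr(V)+o(1))\log\log N_1$.

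\textbf{Main obstacle.} The delicate point is the second term, whose natural LIL bound contains $\log\log N_2$ rather than $\log\log N_1$. Assuming without loss of generality $N_1\leq N_2$, Lemma~\ref{lemma_5} supplies exactly the bridging inequality needed, namely $\frac{N_1}{N}\log\log N_2\leq \frac{1}{2}\log\log N_1$ (valid once $N_1\geq 34$, which holds eventually). Hence the second term is eventually at most $(2\,tr(V)+o(1))\log\log N_1$. Combining the two contributions yields $\limsup g^{(1,2)}_{0,N_2}/\log\log(\min\{N_1,N_2\})\leq 6\,tr(V)\leq 8\,tr(V)$ almost surely. Positivity of the limsup follows from the coordinatewise LIL applied to any $d$ with $V_{dd}>0$, together with $\frac{N_1N_2}{N}\geq N_1/2$ when $N_1\leq N_2$.
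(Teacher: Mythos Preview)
Your argument is correct and follows essentially the same route as the paper: the same ANOVA identity $g^{(1,2)}_{0,N_2}=\frac{N_1N_2}{N}|\bar X_1-\bar X_2|^2$, the CLT for the distributional limit, and the coordinatewise LIL combined with Lemma~\ref{lemma_5} for the almost-sure bound. Two minor differences are worth noting: your subsequence argument for the distributional limit is actually more careful than the paper (which tacitly treats $N_1/N$ as if it converged), and your use of $|a-b|^2\le 2|a|^2+2|b|^2$ in place of the paper's triangle-inequality bound yields the slightly sharper constant $6\,tr(V)$ rather than $8\,tr(V)$; both are acceptable since the lemma only asserts $C\le 8\,tr(V)$. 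Your sketch of positivity is at the same level of informality as the paper's inequality~(\ref{sandwich1}).
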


\begin{proof}
	Denote $P=\{1,2,\ldots,N\}, P_1=\{1,2,\ldots,N_1\}, P_2=\{N_1+1,2,\ldots,N\}$.
	By direct calculation, we obtain 
	\begin{align}
		g^{(1,2)}_{0,N_2}  
		&=\biggl(\sum_{n \in P} |X_n|^2-N\biggl|\frac{\sum_{n \in P}X_n}{N} \biggr|^2\biggr) 
		- \biggl(\sum_{n\in P_1} |X_n|^2-N_1\biggl|\frac{\sum_{n\in P_1} X_n}{N_1}\biggr|^2 
		+ \sum_{n\in P_2} |X_n|^2-N_1\biggl|\frac{\sum_{n\in P_2}X_n}{N_2}\biggr|^2 \biggr) \nonumber \\
		&=\frac{1}{N_1 N_2 N} \left| N_2 \sum_{n\in P_1} X_n - N_1 \sum_{n^{'} \in P_2}X_{n^{'}}\right|^2 
		= \left|\sqrt{\frac{N_2}{N}} \frac{\sum_{n\in P_1} X_n}{\sqrt{N_1}}-\sqrt{\frac{N_1}{N}} \frac{\sum_{n\in P_2} X_n}{\sqrt{N_2}}\right|^2  \label{eq55} \\
		&= \left|\sqrt{\frac{N_2}{N}} \frac{S^{(1)}_{0:N_1}+N_1\mu}{\sqrt{N_1}}-\sqrt{\frac{N_1}{N}} \frac{S^{(2)}_{0:N_2}+N_2\mu}{\sqrt{N_2}}\right|^2 
		= \left|\sqrt{\frac{N_2}{N}} Y^{(1)}_N- \sqrt{\frac{N_1}{N}} Y^{(2)}_N \right|^2 \label{eq1}
	\end{align}
	where $Y^{(k)}_N = S^{(k)}_{0:N_k}/\sqrt{N_k}, k=1,2$. 
	By the central limit theorem (CLT), $Y^{(1)}_N, Y^{(2)}_N$ converge in distribution to two independent $\mathcal{N}_D(0,V)$ random variables, respectively denoted by $Y^{(1)}, Y^{(2)}$.
	Therefore, 
	$$ \sqrt{\frac{N_2}{N}} Y^{(1)}_N - \sqrt{\frac{N_1}{N}} Y^{(2)}_N  = \sqrt{\frac{N_2}{N}} Y^{(1)} - \sqrt{\frac{N_1}{N}} Y^{(2)} + o_p(1)$$
	converges in distribution to a random variable $W \sim \mathcal{N}_D(0,V)$. Let  $W = V^{1/2}Z$, then $Z \sim  \mathcal{N}_D(0,I)$.
	It follows that  $g^{(1,2)}_{0,N_2}$ in (\ref{eq1}) converges to $Z^\T V Z$.
	Furthermore, by the law of the iterated logarithm, 
	\begin{align} \label{eq3}
		\limsup_{N_k\rightarrow \infty}Y^{(k)}_{N,d}/\sqrt{2 V_{dd} \log\log N_k} = 1 , \quad (a.s.), \quad k=1,2 , \, d=1,\ldots,D.
	\end{align}
	where $Y^{(k)}_{N,d}$ and $V_{dd}$ denote the $d$th entry of $Y^{(k)}_{N}$ and the $(d,d)$th entry of $V$, respectively. 
	Note that 
	$$ \biggl| \sqrt{\frac{N_2}{N}} Y^{(1)}_{N,1} - \sqrt{\frac{N_1}{N}} Y^{(2)}_{N,1} \biggr|^2 
	\leq g^{(1,2)}_{0,N_2} 
	\leq \sum_{d=1}^D \biggl\{ \sqrt{\frac{N_2}{N}}|Y^{(1)}_{N,d}| + \sqrt{\frac{N_1}{N}} |Y^{(2)}_{N,d}| \biggr\}^2 $$
	where the second inequality follows from triangle inequality.
	We infer from (\ref{eq3}) that for any fixed $\delta \in (0,1)$, almost surely
	\begin{align} 
	 &g^{(1,2)}_{0,N_2} \geq \biggl(\sqrt{\frac{2N_2}{N}(1-\delta) V_{11} \log\log N_1} + \sqrt{\frac{2N_1}{N} (1-\delta) V_{11} \log\log N_2}\biggr)^2  \quad i.o. \label{sandwich1} \\
	 &\limsup_{N \rightarrow \infty} g^{(1,2)}_{0,N_2} \biggl[
	   \sum_{d=1}^D \biggl(\sqrt{\frac{2N_2}{N} V_{dd} \log\log N_1} + \sqrt{\frac{2N_1}{N} V_{dd} \log\log N_2}\biggr)^{2} \biggr]^{-1} 
	    \leq 1	  \label{sandwich2}
	\end{align}
	From (\ref{sandwich1}), it is easy to observe (with $\delta = 1/2$) that
	\begin{align} 
	 &g^{(1,2)}_{0,N_2} > V_{11} \log\log (\min\{N_1,N_2\})  \quad i.o. \label{sandwich3} 
	\end{align}
	It follows from (\ref{sandwich2}) and Lemma~\ref{lemma_5} that
	\begin{equation} 
		\limsup_{N \rightarrow \infty} g^{(1,2)}_{0,N_2} \ \biggl[
	   8 \ tr(V) \log\log (\min\{N_1,N_2\}) \biggr]^{-1} 
	    \leq 1	\quad (a.s.)   \label{sandwich4}
	\end{equation}
	Furthermore, since $V_{dd} > 0$, Inequalities (\ref{sandwich3}) and (\ref{sandwich4}) imply the desired equality (\ref{eq2}). 
	
\end{proof}

\begin{remark}
Lemma~\ref{lemma_2} implies that splitting a sequence of i.i.d. points into two halves increases the goodness of fit (measured by quadratic loss) by $O_p(1)$. Therefore,  an AIC-like criterion (with constant penalty) always produces a non-vanishing overfitting probability.	
\end{remark}

\vspace{0.2cm}

\begin{lemma} \label{lemma_4}
	 Under model assumption (M.2), 
	 for any $j \in \{1,\ldots,M_0\}$ and $n_1,n_2$ satisfying 
	 $$c^{-1} \leq \frac{n_1}{ N_j} \leq 1 , \quad c^{-1} \leq \frac{n_2}{N_{j+1} }\leq 1, $$ 
	 where $c>1$ is some constant, 
	 we have 
	\begin{align} \label{eq90}
		g^{(j,j+1)}_{N_j-n_1,n_2} > \frac{1}{3} |\mu_{j}-\mu_{j+1}|^2 \min\{n_1,n_2\} \quad \textrm{ for sufficiently large $N$  } \, (a.s.)
	\end{align} 
	In other words, for each $\omega$ from a set of probability one, there exists a positive constant $N_{\omega}$ such that Inequality (\ref{eq90}) holds for all $N > N_{\omega}$.
\end{lemma}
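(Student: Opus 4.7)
The plan is to exploit the closed-form expression for the decomposition gain derived inside the proof of Lemma~\ref{lemma_2}. Specializing identity (\ref{eq55}) to the tail of segment $j$ (length $n_1$) glued to the head of segment $j{+}1$ (length $n_2$), I would write
\begin{equation*}
g^{(j,j+1)}_{N_j-n_1,n_2} \;=\; \frac{n_1 n_2}{n_1+n_2}\,\bigl|\bar X_1 - \bar X_2\bigr|^2,
\end{equation*}
where $\bar X_1$ and $\bar X_2$ are the sample means of the two sub-segments. Substituting $\bar X_1 = \mu_j + S^{(j)}_{N_j-n_1:N_j}/n_1$ and $\bar X_2 = \mu_{j+1} + S^{(j+1)}_{0:n_2}/n_2$ decomposes the sample-mean difference into the population mean gap $\mu_j-\mu_{j+1}$ plus a centered fluctuation.

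Next, I would apply Young's inequality $|a+b|^2 \geq (1-\epsilon)|a|^2 - (\epsilon^{-1}-1)|b|^2$ with a fixed $\epsilon \in (0,1/3)$ (say $\epsilon = 1/4$), treating $a = \mu_j-\mu_{j+1}$ and $b$ as the centered mean difference, followed by the crude bound $|b|^2 \leq 2|S^{(j)}_{N_j-n_1:N_j}|^2/n_1^2 + 2|S^{(j+1)}_{0:n_2}|^2/n_2^2$. Using $n_1 n_2/(n_1+n_2) \geq \min\{n_1,n_2\}/2$, the deterministic component is bounded below by $\tfrac{1-\epsilon}{2}|\mu_j-\mu_{j+1}|^2 \min\{n_1,n_2\}$, which for $\epsilon = 1/4$ yields $\tfrac{3}{8}|\mu_j-\mu_{j+1}|^2 \min\{n_1,n_2\}$---strictly larger than the target $\tfrac{1}{3}|\mu_j-\mu_{j+1}|^2 \min\{n_1,n_2\}$ by a fixed positive multiple of $\min\{n_1,n_2\}$.

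All that is left is to check that the stochastic error terms are $o(\min\{n_1,n_2\})$ almost surely as $N \to \infty$. Since $n_1 \geq N_j/c$ and $n_2 \geq N_{j+1}/c$, both indices tend to infinity with $N$. Decomposing $S^{(j)}_{N_j-n_1:N_j} = S^{(j)}_{0:N_j} - S^{(j)}_{0:N_j-n_1}$ and invoking the Hartman--Wintner law of the iterated logarithm (applied, for safety, to the running maximum of the partial sums over segment $j$), I obtain $|S^{(j)}_{N_j-n_1:N_j}|^2 \leq C N_j \log\log N_j$ and analogously $|S^{(j+1)}_{0:n_2}|^2 \leq C N_{j+1} \log\log N_{j+1}$ almost surely for all sufficiently large $N$. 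Substituting these bounds and simplifying via $N_k \leq c\, n_k$ shows both error terms are $O(\log\log N)$, hence $o(\min\{n_1,n_2\})$, which gives (\ref{eq90}).

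The main obstacle is the \emph{uniform} control of the centered sums: the LIL is a pointwise statement indexed by a single growing parameter, whereas here $n_1$ (and hence the secondary index $N_j-n_1$) can vary arbitrarily with $N$ inside the admissible range $[N_j/c, N_j]$. The cleanest workaround is to pass to the running-maximum version $\limsup_n \max_{m \leq n}|S^{(j)}_{0:m}|/\sqrt{2n\log\log n} = \sigma_j$, which collapses the index uncertainty into a single $N_j$-indexed statement. The lower-bound constraint $n_k \geq N_{j+k-1}/c$ is essential: it is what converts the $N_j\log\log N_j$ bound into something negligible relative to $\min\{n_1,n_2\}$, and it also allows the exceptional null set (one for each of the finitely many $j$) to be absorbed into a single null set.
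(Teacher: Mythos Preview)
Your proposal is correct and follows essentially the same route as the paper: express the gain via identity~(\ref{eq55}), separate the deterministic mean-gap term from the centered fluctuation, and show the fluctuation is $O(\log\log N)$ uniformly in the admissible $n_1,n_2$. The only cosmetic differences are that the paper uses the reverse triangle inequality $g\geq(|B|-|A|)^2$ in place of your Young inequality, and it invokes Strassen's invariance principle directly (rather than the running-maximum LIL you propose) to get the uniform-in-$n_1$ control---both lead to the same $O(\log\log N)$ bound on the random part.
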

\begin{proof}
	From Equation (\ref{eq55}) (note that its derivation does not require the two segments to have the same mean), we obtain 
	$$g^{(j,j+1)}_{N_j-n_1,n_2} = \biggl|\sqrt{\frac{n_2}{n}} Y^{(1)}_n- \sqrt{\frac{n_1}{n}} Y^{(2)}_n + \sqrt{\frac{n_1 n_2}{n}} (\mu_j-\mu_{j+1}) \biggr|^2,
	$$ 
	where 
	$$n=n_1+n_2, \quad
	Y^{(1)} = \sum_{i=L_{j}-n_1+1}^{L_{j}}\frac{X_i-\mu_j}{\sqrt{n_1}}, \quad
	Y^{(2)} = \sum_{i=L_{j}+1}^{L_{j}+n_2}\frac{X_i-\mu_{j+1}}{\sqrt{n_2}}.
	$$   
	By triangle inequality $g^{(j,j+1)}_{N_j-n_1,n_2} \geq (|B|-|A|)^2 $, where 
	$$
	A = \sqrt{\frac{n_2}{n}} Y^{(1)}- \sqrt{\frac{n_1}{n}} Y^{(2)}, \quad
	B= \sqrt{\frac{n_1 n_2 }{n}} (\mu_j-\mu_{j+1}).
	$$ 
	By Strassen's invariance principle \cite[Chapter 5]{billingsley2013convergence}, for each individual $\omega$ in a set of probability one, for each $d=1,\ldots,D$   
	\begin{align*} 
		&\limsup\limits_{N_j\rightarrow \infty} \frac{\sum_{i=L_{j}-n_1(\omega)+1}^{L_{j}}(X_{i,d}(\omega)-\mu_{j,d})}{\sqrt{2 V_{j,dd} N_j \log\log N_j}} \leq 1, \quad 
		\limsup\limits_{N_{j+1}\rightarrow \infty} \frac{\sum_{i=L_{j}+1}^{L_{j}+n_2(\omega)}(X_{i,d}(\omega)-\mu_{j+1,d})}{\sqrt{2 V_{j+1,dd} N_{j+1} \log\log N_{j+1}}} \leq 1	
	\end{align*}
	which implies that 
	\begin{align}
		\frac{Y^{(1)}_{d}(\omega) }{\sqrt{2 V_{j,dd} \log\log n_1} }
		=\frac{\sqrt{n_1} \ Y^{(1)}_{d}(\omega) }{\sqrt{2 V_{j,dd} N_j \log\log N_j}} 
		\frac{\sqrt{N_j \log\log N_j}}{\sqrt{n_1 \log\log n_1 }} 
		\leq  \sqrt{c+1},
		\quad  \frac{Y^{(2)}_{d}(\omega) }{\sqrt{2 V_{j+1,dd} \log\log n_2 } } \leq  \sqrt{c+1} \nonumber 
	\end{align}
	for sufficiently large $N$ (thus $N_j,N_{j+1}$). For brevity, we have simplified $n_1(\omega),n_2(\omega)$ to $n_1,n_2$.  
	From the above inequalities and Lemma~\ref{lemma_5}, we obtain  
	\begin{align}
		|A|^2 &\leq \sum_{d=1}^D \left(\sqrt{\frac{n_2}{n}} \sqrt{2(c+1) V_{j,dd}\log\log n_1} + \sqrt{\frac{n_1}{n}} \sqrt{2(c+1) V_{j+1,dd} \log\log n_2} \right)^2 \nonumber \\
		&\leq \sum_{d=1}^D 2 (c+1) \max\{V_{j,dd},V_{j+1,dd}\} \left(\sqrt{\frac{n_2}{n}\log\log n_1}  + \sqrt{\frac{n_1}{n}\log\log n_2}  \right)^2 \nonumber \\
		&< \sum_{d=1}^D 8(c+1) \max\{V_{j,dd},V_{j+1,dd}\} \log\log (\min\{n_1,n_2\})  \label{eq_8}
	\end{align}
	for sufficiently large $N$ almost surely. 
	It follows from 
	$$|B| = \sqrt{\frac{n_1 n_2}{n}} |\mu_j-\mu_{j+1}| \geq \sqrt{\frac{\min\{n_1,n_2\}}{2}} |\mu_{j}-\mu_{j+1}|
	$$ 
	that $g^{(j,j+1)}_{N_j-n_1,n_2} > |\mu_{j}-\mu_{j+1}|^2 \min\{n_1,n_2\}/3$ for sufficiently large $N$ almost surely. 
\end{proof}

\vspace{0.4cm}
{\bf \textit{Proof of Theorem~\ref{thm:independency}}}
\vspace{0.2cm}

%
	For the case $L=0$, $\{Y_n:n=1,\ldots,N\}$ are independent, and $\hat{\psi}_1 = \sum_{n=1}^{N_1} Y_n/N_1$, $\hat{\psi}_2 = \sum_{n=N-N_2+1}^{N} Y_n/N_2$. Thus, $\sqrt{N_1}(\hat{\psi}_1-\psi)$ and $\sqrt{N_2}(\hat{\psi}_2-\psi)$ converge to Gaussian random variables that are independent. It remains to prove for the case $L>0$. 
	Choose $N_{1}^{'}, N_{2}^{'}$ such that $N_{1}^{'}/N_1, N_{2}^{'}/N_2 \rightarrow 1, N_1-N_{1}^{'}, N_2-N_{2}^{'} \rightarrow \infty$.
	Let $\hat{\psi}_1^{'}, \hat{\psi}_2^{'}\in \mathcal{R}^{L+1}$ respectively denote the estimated filters from $\{X_1,\cdots,X_{N_1^{'}}\}$ and $\{X_{N-N_2^{'}+1},\cdots,X_{N}\}$ using the least squares method. 
	It is well known that $\sqrt{N_1^{'}}(\hat{\psi}_1^{'}-\psi), \sqrt{N_2^{'}}(\hat{\psi}_2^{'}-\psi)$ respectively converge in distribution to $Z_1,Z_2 \sim \mathcal{N}(0,\sigma^2(\Gamma_L^{*})^{-1})$, where 
	$
	\Gamma_L^{*} = 
	\begin{pmatrix}
	1  & 0 \\
	0 &   \Gamma_L    
	\end{pmatrix}
	$ 
	and $\Gamma_L$ is the covariance matrix of order $L$ \cite[Appendix 7.5]{box2011time}. 
	Because ${X_n}$ is strongly mixing under Assumption (A.1) 
	 \cite{athreya1986note},  $\sqrt{N_1^{'}}(\hat{\psi}_1^{'}-\psi)$ and $\sqrt{N_2^{'}}(\hat{\psi}_2^{'}-\psi)$ are asymptotically independent. Thus, $Z_1$ and $Z_2$ are independent. 
	It remains to prove that $\sqrt{N_1}(\hat{\psi}_1-\psi)=\sqrt{N_1^{'}}(\hat{\psi}_1^{'}-\psi)+o_p(1)$ and $\sqrt{N_2}(\hat{\psi}_2-\psi)=\sqrt{N_2^{'}}(\hat{\psi}_2^{'}-\psi)+o_p(1)$.
	We prove the former equation since the latter one can be similarly proved.
	Let  
\begin{align} \label{det}
Z_1=
\begin{pmatrix}
	y_{N_1-1}  & \cdots & y_{N_1-L}\\
	\vdots &   \ddots  & \vdots 	\\
	y_{L}  &  \cdots & y_{1}
\end{pmatrix},
W_1=
\begin{pmatrix}
	y_{N_1}  \\
	\vdots  \\
	y_{L+1}
\end{pmatrix},
E_1=
\begin{pmatrix}
	\v_{N_1}  \\
	\vdots  \\
	\v_{L+1}
\end{pmatrix}.
\end{align}
Since $\hat{\psi}_1$ is estimated from least squares method, it can be written in the matrix form  $\hat{\psi}_1=(Z_1^\T Z_1)^{-1} Z_1^\T W_1 = (Z_1^\T Z_1)^{-1} Z_1^\T (Z_1 \psi_1 + E_1) = \psi_1 + (Z_1^\T Z_1)^{-1} Z_1^\T E_1 $. 
We similarly define $Z_1^{'},W_1^{'}, E_1^{'}$ by substituting $N_1$ with $N_1^{'}$ in (\ref{det}), and write $\hat{\psi}_1^{'}= \psi_1 + \{(Z_1^{'})^\T Z_1^{'}\}^{-1} (Z_1^{'})^\T E_1^{'}$.
	Therefore,
	\begin{align} \label{eq:101}
		\sqrt{N_1} (\hat{\psi}_1 - \psi_1) 
			&= \biggl(\frac{Z_1^\T Z_1}{N_1} \biggr)^{-1} \frac{ Z_1^\T E_1 }{\sqrt{N_1}} , \quad
		\sqrt{N_1} (\hat{\psi}_1^{'} - \psi_1) 
			= \biggl\{\frac{(Z_1^{'})^\T (Z_1^{'})}{N_1} \biggr\}^{-1} \frac{ (Z_1^{'})^\T E_1^{'} }{\sqrt{N_1}} .
	\end{align}
	Recall that 
	$$
	 \frac{Z_1^\T Z_1}{N_1}, \, 
	\frac{(Z_1^{'})^\T Z_1^{'} }{ N_1^{'}}, \,
	\frac{ Z_1^\T Z_1 - (Z_1^{'})^\T Z_1^{'} }{ N_1-N_1^{'} }
	\rightarrow \Gamma_L^* \,\,
	\textrm{in probability, as }  \, N \rightarrow \infty, 
	$$  
    where $\Gamma_L$ is the covariance matrix of order $L$, and 
	\begin{align} \label{eq:103}
	\frac{Z_1^\T E_1 }{ \sqrt{N_1} }, \, 
	\frac{(Z_1^{'})^\T E_1^{'} }{ \sqrt{N_1^{'}}}, \,
	\frac{Z_1^\T E_1 - (Z_1^{'})^\T E_1^{'} }{ \sqrt{N_1 - N_1^{'}}} \rightarrow \mathcal{N}(0,\sigma^2 \Gamma_L^*)  \,\,
	\textrm{in distribution, as }  \, N \rightarrow \infty, 
	\end{align} 
	due to the central limit theorem for  martingale difference sequences \cite[Appendix 7.5]{box2011time}. Therefore, 
	$$ \frac{ (Z_1^{'})^\T E_1^{'} }{ \sqrt{N_1} } = \frac{ Z_1^\T E_1 + \sqrt{N_1 - N_1^{'}} O_p(1) }{ \sqrt{N_1}}
	= \frac{ Z_1^\T E_1 }{ \sqrt{N_1}} + o_p(1),
	$$ and Equation (\ref{eq:101}) further implies that
	\begin{align} \label{eq:102}
		\sqrt{N_1} (\hat{\psi}_1 - \psi_1) = O_p(1), \quad
		\sqrt{N_1} (\hat{\psi}_1^{'} - \psi_1)	= \biggl\{\frac{(Z_1^{'})^\T (Z_1^{'})}{N_1} \biggr\}^{-1} \frac{ Z_1^\T E_1 }{\sqrt{N_1}} +o_p(1) .	
	\end{align}
	Straightforward calculations using (\ref{eq:101}) and (\ref{eq:102}) give  
	\begin{align*}
		\sqrt{N_1}(\hat{\psi}_1-\psi)-\sqrt{N_1^{'}}(\hat{\psi}_1^{'}-\psi)
		&=\biggl\{\sqrt{N_1}(\hat{\psi}_1-\psi)-\sqrt{N_1}(\hat{\psi}_1^{'}-\psi)\biggr\}
		+\biggl\{\sqrt{N_1}(\hat{\psi}_1^{'}-\psi)-\sqrt{N_1^{'}}(\hat{\psi}_1^{'}-\psi) \biggr\} \\
		&= \biggl(\frac{Z_1^\T Z_1}{N_1} \biggr)^{-1} \frac{ Z_1^\T E_1 }{\sqrt{N_1}} - \biggl\{ \frac{(Z_1^{'})^\T (Z_1^{'})}{N_1^{'}} \biggr\}^{-1} \frac{ Z_1^\T E_1 }{\sqrt{N_1}} + o_p(1) +  \biggl(\sqrt{\frac{N_1}{N_1^{'}} } - 1 \biggr) O_p(1) \\
		&= \biggl[ \biggl(\frac{Z_1^\T Z_1}{N_1}\biggr)^{-1} - \biggl\{\frac{(Z_1^{'})^\T (Z_1^{'})}{N_1}\biggr\}^{-1}  \biggr] O_p(1) + o_p(1).
	\end{align*}
	To finish the proof, it suffices to prove that 
	$$
	\biggl(\frac{Z_1^\T Z_1}{N_1}\biggr)^{-1} - \biggl(\frac{(Z_1^{'})^\T (Z_1^{'})}{N_1}\biggr)^{-1}
	$$
	is $o_p(1)$. 
	In fact, from (\ref{eq:103}), the above matrix equals 
	\begin{align*}
	&\biggl\{\frac{(Z_1^{'})^\T (Z_1^{'})}{N_1}\biggr\}^{-1} 
	\biggl\{\frac{(Z_1^{'})^\T (Z_1^{'})-Z_1^\T Z_1}{N_1}\biggr\}
	\biggl\{\frac{Z_1^\T Z_1}{N_1}\biggr\}^{-1} \\
	&=
	\frac{N_1}{N_1^{'}} \biggl\{ \frac{(Z_1^{'})^\T (Z_1^{'})}{N_1^{'}}\biggr\}^{-1} 
	\frac{N_1^{'} - N_1}{N_1}  \biggl\{ \frac{Z_1^\T Z_1 - (Z_1^{'})^\T (Z_1^{'})}{N_1-N_1^{'}}\biggr\}
	\biggl(\frac{Z_1^\T Z_1}{N_1^{'}}\biggr)^{-1}
	= o_p(1).
	\end{align*} 

\vspace{0.4cm}
{\bf \textit{Proof of Theorem~\ref{thm:loglog0changepoint}}}
\vspace{0.2cm}

	We first prove that $f(N)$ should be at least  $\Theta (\log\log N)$ to ensure strong consistency. 
	The event $\hat{M}=0$ implies the event $Q^{(1)}_{0:N/2}+Q^{(1)}_{N/2,N}+f(N) \geq Q^{(1)}_{0:N} $. In other words, $g^{(1)}_{0,N/2,N} > f(N)$ implies the event $\hat{M}\neq 0$. 	
	By Lemma~\ref{lemma_2}, there exists $C_1>0$ such that $g^{(1)}_{0,N/2,N} \geq C_1 \log\log N \ i.o. $ This implies that if $f(N) < C_1 \log\log N$, then $g^{(1)}_{0,N/2,N} > f(N) \ i.o. $ and thus $\hat{M}\neq M \ i.o.$
	 
	On the other hand, the event $\hat{M}>0$ implies the event that there exist $0<n_1<n_2$ such that $g^{(1)}_{0,n_1,n_2} \geq f(N)$ and that $n_1,n_2-n_1 \geq \beta(N) = \Theta(N)$. By a similar derivation to that of (\ref{eq_8}) in Lemma~\ref{lemma_4}, we can show that for sufficiently large $N$
	\begin{align}
	g^{(1)}_{0,n_1,n_2} &<  8(c+1) tr(V_1) \log\log N  \quad (a.s.)  
	\end{align}
	where $c > 1$ is some constant. 
	Thus, given that $f(N)=C_2 \log\log N$ for large enough $C_2>0$, $g^{(1)}_{0,n_1,n_2} < f(N)$ for sufficiently large $N$ almost surely. 
	This implies that  $\hat{M} \overset{a.s.}\longrightarrow 0$ as $N \rightarrow \infty$.

\vspace{0.4cm}
{\bf \textit{Proof of Theorem~\ref{thm:loglog}}}
\vspace{0.2cm}

	We first prove that there is no under-fitting, i.e. $\hat{M} \geq M_0$. It suffices to prove that for each $\omega$ from a set of probability one, there exists a positive integer  $N_{\omega}$  such that for all $N > N_{\omega} $, $\hat{M} \neq m $ for each $m=1,\ldots,M_0-1$. 
	We prove the result by contradiction. 
	Assume that $\hat{M} = m < M_0$. Then there exists at least one detected segment that consists of points from at least two neighboring segments, say the $(j-1)$th and $j$th, and that the numbers of points from the two segments are at least $N_{j-1}/2$ and $N_{j}/2$, respectively. 
	Without loss of generality, we assume $N_1,\ldots,N_{M_0+1}$ to be even. In other words, the points $\{X_n:n=L_{j-1}-N_{j-1}/2+1, \ldots,  L_{j-1}+N_{j}/2\}$ are contained in the $k$th detected segment for some $k = 1,\ldots,m+1$. 
	Following the notation of Algorithm~\ref{algo:oracle}, 
	let $\hat{e}_m$ denote the minimal within-segment quadratic loss given $m$ segments.  
	We consider another configuration of change points: for the set of change points that give $\hat{e}_m$, keep all other segments except for the $k$th segment unchanged, and split the $k$th segment into four segments the middle two of which are $\{X_n:n=L_{j-1}-N_{j-1}/4+1, \ldots,  L_{j-1}\}$ and $\{X_n:n=L_{j-1}+1, \ldots,  L_{j-1} +N_{j}/4\}$. 
	Then the number of segments will increase from $m$ to $m+3$, and we obtain from Lemma~\ref{lemma_4} that for sufficiently large $N$, the increased within-segment quadratic loss is larger than $C_1 \min \{N_{j-1},N_j\} $ almost surely, where the constant $C_1=\underline{\Delta}_{\mu}^2/12$. Since $\hat{e}_{m+3}$ is the global minimum of the within-segment quadratic loss under $m+3$ change points, we obtain 
	\begin{align} \label{eq:new1} 
		\hat{e}_{m} - \hat{e}_{m+3} > C_1 \min \{N_{j-1},N_j\} \quad (a.s.)
	\end{align}
	
	On the other hand, because $m+3 \leq M_{\text{max}}$ and the condition in step 3 of Algorithm~\ref{algo:oracle} is satisfied (since each new segment is at least  $\min_{k=1,\ldots,M_0+1} N_{k}/4 \geq \beta(N)$ for sufficiently large N), $\hat{e}_{m+3}$ is a valid output of Algorithm~\ref{algo:oracle}. Furthermore, the event $\hat{M}=m$ implies the event 
	$\hat{e}_{m} - \hat{e}_{m+3} \leq  3f(N)$. 
	In addition, $3f(N) < C_1 \min \{N_{j-1},N_j\} $ for sufficiently large $N$ due to Assumption~(A.3). Thus, $\hat{e}_{m} - \hat{e}_{m+3} < C_1 \min \{N_{j-1},N_j\} $ which contradicts the inequality in (\ref{eq:new1}). Therefore, $\hat{M} \neq m$ for sufficiently large $N $ almost surely. 
	By similar reasoning we can prove Inequality (\ref{eq91}).
	
	Second, we prove the over-fitting part by contradiction. 
	Assume that $\hat{M} = m > 2M_0$, by the pigeonhole principle there are two detected segments that are adjacent and that belong to the same true segment. Without loss of generality, suppose that $\{ X_{n}:n=\tau+1,\ldots,\tau+n_1 \}$ and $\{X_{n}:n=\tau+n_1+1,\ldots,\tau+n_1+n_2\}$ are from distribution $\mathcal{G}_k$.
	We consider the configuration that merges the aforementioned two segments into one while keeping other segments unchanged. 
	Since $n_1, n_2 \geq \beta(N)=\Theta(N)$, via a similar derivation of (\ref{eq_8}), it can be proved that for sufficiently large $N$
	\begin{align} \label{eq:dj1}
	\hat{e}_{m-1} - \hat{e}_{m} 
	&<  
	C_0 \log \log N \quad (a.s.)  
	\end{align}
	for some constant $C_0 > 1$. 
	On the other hand, the event $\hat{M}=m$ implies that 
	$\hat{e}_{m-1} - \hat{e}_{m} \geq f(N) $. Whenever $f(N) \geq C_0\log\log (N) $, $\hat{e}_{m-1} - \hat{e}_{m} \geq  C_0\log\log (N) $ which contradicts the inequality in (\ref{eq:dj1}). 
	Therefore, we obtain 
	\begin{align*}
		&\P\biggl\{ \limsup\limits_{N \rightarrow \infty} (\hat{M} > 2M_0) \biggl\} 
		\leq \P \biggl\{
		\limsup\limits_{N \rightarrow \infty} 
		(\hat{e}_{m-1} - \hat{e}_{m} < C_0\log\log N)
		\biggr\} = 0 .
	\end{align*} 

\vspace{0.4cm}
{\bf \textit{Proof of Theorem~\ref{thm:strongConsistency}}}
\vspace{0.2cm}

To prove Theorem~\ref{thm:strongConsistency}, we need the following additional technical lemmas. The lemmas serve to enumerate various configurations of change points (events) that will not eventually happen given sufficiently large sample size. Loosely speaking, in those configurations, either ``there exists a detected change point that is redundant'' or ``a true change point is too far away from all the detected change points''. The functionality  of each lemma  will be clearly seen in the final proof of Theorem~\ref{thm:strongConsistency}.  
For notational convenience, for each $k=1,\ldots,M_0+1$, we define $P_{k} = \{L_{k-1}+1,\ldots,L_{k}\}$, and use $X_n^{(k)} \ (n=1,\ldots,N_k)$ to represent the points in the $k$th true segment, namely $\{ X_{L_{k-1}+1}, \ldots,X_{L_{k-1}+N_k} \}$.

\vspace{0.1cm}

\begin{lemma} \label{lemma:simpleSplit}
	For each $k=1,2,\ldots,M_0+1$, let $E_{k,N}$ denote the event that Algorithm~\ref{algo:oracle} produces two neighboring segments that are both subsets of $\{X_n, n \in P_k\}$, the true $k$th segment.
	In other words, 
	\begin{align*}
	E_{k,N} = &\biggl\{ \textrm{There exist integers} \ n_1,n_2,n_3  \textrm{ such that } 0 \leq n_1 < n_2 < n_3 \leq N_k , \textrm{ and }  \\
	&\{X^{(k)}_{n}: n = n_1+1,n_1+2, \ldots,n_2\}, \{X^{(k)}_n: n=n_2+1,n_2+2,\ldots,n_3\} \textrm{ are two detected segments.} \biggr\}
	\end{align*}
	Assume that 
	\begin{align}
		f(N) \geq  C \log N \label{eq54}
	\end{align} 
	where $C > 16D  /c_0$ is a constant.
	Then $\P(\limsup_{N \rightarrow \infty} E_{k,N})=0$. 
\end{lemma}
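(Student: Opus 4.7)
The plan is to turn $E_{k,N}$ into a deterministic inequality on the within-segment decomposition gain, then to bound that gain uniformly over all triples of split points via sub-Gaussian concentration, and finally apply Borel-Cantelli.

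First, on $E_{k,N}$ there exist $0 \leq n_1 < n_2 < n_3 \leq N_k$ such that the two chunks indexed by $(n_1,n_2]$ and $(n_2,n_3]$ are adjacent detected segments returned by Algorithm~\ref{algo:oracle}. Consider the alternative configuration obtained by merging these two chunks into a single segment and leaving all the other detected segments unchanged. The alternative has $\hat{M}-1 \leq M_{\text{max}}$ change points, and since the merged segment is strictly longer than either of the two chunks (each already of size $\geq \beta(N)$) while every other segment is untouched, it satisfies the $\beta(N)$ minimum segment size; hence it is genuinely among the configurations Algorithm~\ref{algo:oracle} evaluates. Its total within-segment quadratic loss is exactly $\hat{e}_{\hat{M}} + g^{(k)}_{n_1,n_2,n_3}$ by definition of the decomposition gain, so $\hat{e}_{\hat{M}-1} \leq \hat{e}_{\hat{M}} + g^{(k)}_{n_1,n_2,n_3}$. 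Optimality of $\hat{M}$, $\hat{e}_{\hat{M}} + \hat{M} f(N) \leq \hat{e}_{\hat{M}-1} + (\hat{M}-1)f(N)$, then forces $g^{(k)}_{n_1,n_2,n_3} \geq f(N)$. Therefore $E_{k,N} \subseteq \bigl\{ \max_{0\leq n_1 < n_2 < n_3 \leq N_k} g^{(k)}_{n_1,n_2,n_3} \geq f(N) \bigr\}$.

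Next I would concentrate a single gain. The algebraic identity used to derive (\ref{eq55}) in the proof of Lemma~\ref{lemma_2}, applied to the iid points $X_n^{(k)}$ of mean $\mu_k$, yields $g^{(k)}_{n_1,n_2,n_3} = \sum_{d=1}^D \bigl|\sqrt{m_2/m}\,A_{1,d} - \sqrt{m_1/m}\,A_{2,d}\bigr|^2$, where $m_1 = n_2-n_1$, $m_2 = n_3-n_2$, $m = m_1+m_2$, and $A_{j,d}$ is the $d$th coordinate of the centered partial sum on chunk $j$ divided by $\sqrt{m_j}$. Since $(\alpha u - \beta v)^2 \leq 2(\alpha^2 u^2 + \beta^2 v^2) \leq 2(u^2+v^2)$ whenever $\alpha^2,\beta^2 \leq 1$, we obtain $g^{(k)}_{n_1,n_2,n_3} \leq 2\sum_{j\in\{1,2\},\, d\in\{1,\ldots,D\}} A_{j,d}^2$, so the event $g^{(k)}_{n_1,n_2,n_3} \geq f(N)$ forces at least one of the $2D$ terms to satisfy $A_{j,d}^2 \geq f(N)/(4D)$. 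Because $A_{j,d} = \sqrt{m_j}(\bar{X}_d - \mu_{k,d})$ is a rescaled sample mean of $m_j$ iid draws from a marginal of $\mathcal{G}_k$, Assumption~(A.4) (applied with $a = \sqrt{t/m_j}$ and sample size $m_j$) gives $\P(A_{j,d}^2 \geq t) \leq 2 e^{-c_0 t}$ uniformly in $m_j$.

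Finally, the number of triples $(n_1,n_2,n_3)$ with $0 \leq n_1 < n_2 < n_3 \leq N_k$ is at most $N^3$. Union-bounding over these triples and over the $2D$ indices $(j,d)$ gives $\P(E_{k,N}) \leq 4D\, N^3 \exp\bigl\{-c_0 f(N)/(4D)\bigr\}$. When $f(N) \geq C\log N$ with $C > 16D/c_0$ we have $Cc_0/(4D) > 4$, so the bound is $4D\, N^{3 - Cc_0/(4D)}$ with exponent strictly less than $-1$; hence $\sum_N \P(E_{k,N}) < \infty$ and the Borel-Cantelli lemma yields $\P(\limsup_N E_{k,N}) = 0$. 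The main obstacle is the deterministic reduction in paragraph~1: one must check that the merger is genuinely a competitor that Algorithm~\ref{algo:oracle} compared against, which rests on (i) feasibility of the merged configuration under the $\beta(N)$ constraint and (ii) the fact that $\hat{M} \geq 1$ on $E_{k,N}$. Everything downstream is routine sub-Gaussian bookkeeping, and the specific threshold $16D/c_0$ arises from the factor of $2$ lost to Cauchy-Schwarz (contributing $4$ in the exponent) balanced against the $N^3$ budget in the union bound.
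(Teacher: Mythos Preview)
Your proposal is correct and follows essentially the same route as the paper: reduce $E_{k,N}$ to the inequality $g^{(k)}_{n_1,n_2,n_3}\ge f(N)$, express the gain via the identity (\ref{eq55}), split into $2D$ scalar terms, apply the sub-Gaussian tail bound from (A.4) to get the exponent $c_0 f(N)/(4D)$, union-bound over $O(N^3)$ triples, and finish with Borel--Cantelli. Your justification of the reduction (feasibility of the merged configuration and $\hat M\ge 1$) is more explicit than the paper's, which simply asserts that $E_{k,N}$ implies the merging loss exceeds $f(N)$; otherwise the two arguments coincide up to the cosmetic choice between your bound $(\alpha u-\beta v)^2\le 2(u^2+v^2)$ and the paper's triangle-inequality splitting.
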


\begin{proof}
	Since $E_{k,N}$ implies the event that the loss of merging the two segments into one is larger than $f(N)$,
	we obtain from Equality (\ref{eq1}) and the union bound that 
	\begin{align}
		\P(E_{k,N} )
		&\leq 
		\P \biggl\{ \bigcup\limits_{1 \leq n_1 < n_2 < n_3 \leq N_k} 
		\biggl|\sqrt{\frac{n_3-n_2}{n_3-n_1}}\frac{S^{(k)}_{n_1:n_2}}{\sqrt{n_2-n_1}} - \sqrt{\frac{n_2-n_1}{n_3-n_1}}\frac{S^{(k)}_{n_2:n_3}}{\sqrt{n_3-n_2}} \biggr|^2 > f(N) \biggr\}\nonumber \\
		&\leq \sum\limits_{1 \leq n_1 < n_2 < n_3 \leq N_k} 
		\P \biggl\{  
		\biggl|\sqrt{\frac{n_3-n_2}{n_3-n_1}}\frac{S^{(k)}_{n_1:n_2}}{\sqrt{n_2-n_1}} - \sqrt{\frac{n_2-n_1}{n_3-n_1}}\frac{S^{(k)}_{n_2:n_3}}{\sqrt{n_3-n_2}} \biggr|^2 > f(N) \biggr\} \label{eq51}
	\end{align}
	For any tuple $(n_1,n_2,n_3)$, 
	\begin{align}
		\P &\biggl\{  
		\biggl|\sqrt{\frac{n_3-n_2}{n_3-n_1}}\frac{S^{(k)}_{n_1:n_2}}{\sqrt{n_2-n_1}} - \sqrt{\frac{n_2-n_1}{n_3-n_1}}\frac{S^{(k)}_{n_2:n_3}}{\sqrt{n_3-n_2}} \biggr|^2 > f(N) \biggr\} \nonumber \\
		&\leq 
		\P \biggl\{ \bigcup_{d=1}^D \biggl\{  
		\biggl(\sqrt{\frac{n_3-n_2}{n_3-n_1}}\frac{S^{(k)}_{n_1:n_2,d}}{\sqrt{n_2-n_1}} - \sqrt{\frac{n_2-n_1}{n_3-n_1}}\frac{S^{(k)}_{n_2:n_3,d}}{\sqrt{n_3-n_2}} \biggr)^2 > \frac{f(N)}{D} \biggr\} \biggr\} \nonumber \\
		&\leq 
		\sum\limits_{d=1}^D \P \biggl\{  
		\biggl(\sqrt{\frac{n_3-n_2}{n_3-n_1}}\frac{S^{(k)}_{n_1:n_2,d}}{\sqrt{n_2-n_1}} - \sqrt{\frac{n_2-n_1}{n_3-n_1}}\frac{S^{(k)}_{n_2:n_3,d}}{\sqrt{n_3-n_2}} \biggr)^2 > \frac{f(N)}{D} \biggr\}\label{eq50}
	\end{align}
	besides these, from triangular inequality and $n_3-n_2,n_2-n_1<n_3-n_1$, each term in the summation of (\ref{eq50}) is further upper bounded by 
	\begin{align}
		&\P \biggl\{ \bigcup_{\substack{(n^{'},n^{''})=(n_1,n_2)\\\textrm{ or } (n_2,n_3)}} \biggl\{ \biggl| \frac{S^{(k)}_{n^{'}:n^{''},d}}{\sqrt{n^{''}-n^{'}}} \biggl| > \frac{1}{2}\sqrt{\frac{f(N)}{D}}\biggr\} \biggr\}
		\leq \sum\limits_{\substack{(n^{'},n^{''})=(n_1,n_2)\\\textrm{ or } (n_2,n_3)}} \P \biggl\{ \biggl| \frac{S^{(k)}_{n^{'}:n^{''},d}}{n^{''}-n^{'}} \biggl| > \frac{1}{2}\sqrt{\frac{f(N)}{D(n^{''}-n^{'})}}\biggr\} \nonumber \\
		&< 2\exp \biggl\{-c_0 (n^{''}-n^{'}) \frac{1}{4}\frac{f(N)}{D(n^{''}-n^{'})} \biggr\} 
		\leq 2\exp \biggl\{-\frac{c_0 f(N)}{4D }\biggr\} \label{eq52}
	\end{align}
	where the last inequality is due to Assumption (A.4). 
	Combining (\ref{eq50}) and (\ref{eq52}) with (\ref{eq51}), we obtain
	$$
	\P(E_{k,N} ) \leq N_k^3 (2D)\exp \biggl\{-\frac{c_0 f(N)}{4D }\biggr\} 
	\leq 2D N^3 \exp \biggl\{-\frac{c_0 f(N)}{4D }\biggr\}
	\leq 2D N^{-C^{'}}
	$$
	for a constant $C^{'}>1$, where the last inequality follows from (\ref{eq54}).
	Therefore $\sum_{N=1}^{\infty} \P(E_{k,N}) < \infty$ and by Borel-Cantelli lemma $\P(\limsup_{N \rightarrow \infty} E_{k,N})=0$.
\end{proof} 


\begin{remark} 
	Lemma~\ref{lemma:simpleSplit} shows that if there are two neighboring segments that consist of points from the same underlying true segment and if the random variables are sub-Gaussian, then Algorithm~\ref{algo:multiWindow} will almost surely merge them. As a follow up result to Lemma~\ref{lemma:simpleSplit}, Lemma~\ref{lemma:onesideContaminatedSplit} (resp. Lemma~\ref{lemma:twosideContaminatedSplit}) shows that if there are at most $\eta $ points from another true segment from one side (resp. two sides) involved, then Algorithm~\ref{algo:multiWindow} will still merge them almost surely as long as $\eta $ is small in terms of the penalty increment $f(N)$.   	
\end{remark}

\vspace{0.2cm}

\begin{lemma} \label{lemma:onesideContaminatedSplit}
	Suppose that $M_0>0$. For each $k=1,2,\ldots,M_0$, let $E_{k,N}$ denote the event that Algorithm~\ref{algo:oracle} produces two neighboring segments the first of which is a subset of $\{X_n, n \in P_k\}$ and the second of which consists of points from $\{X_n, n \in P_k\}$ and at most $\eta $ points from $\{X_n, n \in P_{k+1}\}$, where $1\leq \eta \leq N_{k+1}$. 
	In other words, 
	\begin{align*}
	E_{k,N} = &\biggl\{ \textrm{There exist integers} \ n_1,n_2,n_3  \textrm{ such that } 0 \leq n_1 < n_2 < N_k, 1\leq n_3 \leq \eta, \textrm{ and }  \\
	&\{X^{(k)}_{n}: n = n_1+1,\ldots,n_2\}, \{X^{(k)}_n: n=n_2+1,\ldots,N_k\} \cup \{X^{(k+1)}_n: n=1,\ldots,n_3\} \\
	& \textrm{ are two detected segments.} \biggr\}
	\end{align*}
	Assume that 
	\begin{align}
		f(N) \geq  \max\{ 16 |\mu_k-\mu_{k+1}|^2 \eta , \ 
		C \log N \}
		\label{eq57}
	\end{align} 
	where $C > 64D  /c_0$ is a constant.
	Then 
	\begin{align} \label{eq74}
		\P\biggl(\limsup_{N\rightarrow \infty} E_{k,N}\biggr)=0.
	\end{align} 
	If we define the event
	\begin{align*}
	\tilde{E}_{k,N} = &\biggl\{ \textrm{There exist integers} \ n_1,n_2,n_3  \textrm{ such that } 1 \leq n_1 < n_2 \leq N_k, 1\leq n_3 \leq \eta, \textrm{ and }  \\
	&\{X^{(k-1)}_{n}: n = N_{k-1}-n_3+1,\ldots,N_{k-1}\}\cup \{X^{(k)}_n: n=1,\ldots,n_1\}, \{X^{(k)}_n: n=n_1+1,\ldots,n_2\} \\
	& \textrm{ are two detected segments.} \biggr\}
	\end{align*}
	where $1\leq \eta \leq N_{k-1}$.
	Assume that $ f(N) \geq  \max\{ 16 |\mu_{k-1}-\mu_{k}|^2 \eta , \ C \log N \} $,
	where $C > 64D  /c_0$ is a constant.
	Then 
	\begin{align} \label{eq75}
		\P\biggl(\limsup_{N\rightarrow \infty} \tilde{E}_{k,N}\biggr)=0.
	\end{align} 
\end{lemma}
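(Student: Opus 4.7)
The plan is to extend the argument of Lemma~\ref{lemma:simpleSplit} to the cross-segment setting, where the second detected segment mixes points from $\mathcal{G}_k$ and $\mathcal{G}_{k+1}$. The new feature is a deterministic drift in the decomposition gain driven by $\mu_k - \mu_{k+1}$; the first task is to show that the hypothesis $f(N) \geq 16 |\mu_k - \mu_{k+1}|^2 \eta$ is calibrated to absorb this drift, after which the sub-Gaussian machinery of Lemma~\ref{lemma:simpleSplit} carries through.

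\emph{Decomposition of the gain.} Fix $(n_1,n_2,n_3)$ as in $E_{k,N}$ and set $m = n_2 - n_1$, $M = N_k - n_2 + n_3$, $T = m + M$. Since the algorithm prefers the two detected segments to the merged one, the decomposition gain $g = Q^{(k,k+1)}_{n_1:n_3} - Q^{(k)}_{n_1:n_2} - Q^{(k,k+1)}_{n_2:n_3}$ must exceed $f(N)$. Using formula (\ref{eq55}) (which is valid for any pair of neighboring segments) and replacing $X^{(k)}_n$ and $X^{(k+1)}_n$ by their centered versions, I would rewrite $g = |A_{\text{rand}} + A_{\text{bias}}|^2$, where
\begin{align*}
A_{\text{rand}} &= \sqrt{M/(Tm)}\, S^{(k)}_{n_1:n_2} - \sqrt{m/(TM)}\,\bigl(S^{(k)}_{n_2:N_k} + S^{(k+1)}_{0:n_3}\bigr), \\
A_{\text{bias}} &= n_3\sqrt{m/(TM)}\,(\mu_k - \mu_{k+1}).
\end{align*}
The bounds $m \leq T$ and $n_3 \leq M$ give $|A_{\text{bias}}|^2 \leq (n_3^2/M)|\mu_k - \mu_{k+1}|^2 \leq \eta|\mu_k - \mu_{k+1}|^2 \leq f(N)/16$. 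Hence on $E_{k,N}$ one must have $|A_{\text{rand}}| \geq \sqrt{f(N)} - \tfrac{1}{4}\sqrt{f(N)} = \tfrac{3}{4}\sqrt{f(N)}$, i.e., $|A_{\text{rand}}|^2 \geq \tfrac{9}{16} f(N)$.

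\emph{Tail bound and union bound.} The three sums comprising $A_{\text{rand}}$ have disjoint index ranges and are thus independent under (M.2). If $|A_{\text{rand}}|^2 \geq \tfrac{9}{16} f(N)$, some coordinate $d$ satisfies $|A_{\text{rand},d}| \geq \tfrac{3}{4}\sqrt{f(N)/D}$, and the triangle inequality then forces one of the three centered sub-Gaussian summands to exceed $\tfrac{1}{4}\sqrt{f(N)/D}$ in absolute value. Absorbing the prefactors $\sqrt{M/(Tm)}$ or $\sqrt{m/(TM)}$ against the effective length of each sum (using $T/M, T/m \geq 1$ and $n_3, N_k - n_2 \leq M$) and invoking (A.4), each single-summand tail is bounded by $2\exp\{-c_0 f(N)/(16D)\}$. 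A union bound over $d=1,\ldots,D$, over the three summands, and over the at most $N^3$ choices of $(n_1,n_2,n_3)$ (since $n_3 \leq \eta \leq N$) yields
\[ \P(E_{k,N}) \leq 6D\,N^3 \exp\{-c_0 f(N)/(16D)\}. \]
With $f(N) \geq C \log N$ and $C > 64D/c_0$, this bound is summable in $N$; Borel-Cantelli then delivers (\ref{eq74}).

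\emph{Symmetric case and main obstacle.} The event $\tilde E_{k,N}$ is treated identically after relabelling: the contaminating segment is $k-1$ rather than $k+1$, the two detected segments swap roles, and $\mu_{k-1}$ replaces $\mu_{k+1}$; the same bound holds and yields (\ref{eq75}). I anticipate the main obstacle to be careful constant tracking. Specifically, the $1/16$ allotted to the bias and the $9/16$ remaining for the random part must combine with the three-term triangle inequality and the dimension factor $D$ to produce a tail exponent of exactly $c_0 f(N)/(16D)$, so that the $N^3$ union bound is dominated under the assumed $C > 64D/c_0$. One must also verify that $m \leq T$ and $n_3 \leq M$ telescope cleanly to $(n_3^2/M)|\mu_k - \mu_{k+1}|^2 \leq \eta|\mu_k - \mu_{k+1}|^2$, since the two-part hypothesis on $f(N)$ in the lemma hinges on precisely this reduction.
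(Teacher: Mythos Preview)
Your proposal is correct and follows essentially the same route as the paper: both arguments write the decomposition gain as $|A_{\text{rand}}+A_{\text{bias}}|^2$ using (\ref{eq55}), bound the bias by $\sqrt{\eta}\,|\mu_k-\mu_{k+1}|\leq \tfrac14\sqrt{f(N)}$ via $m\leq T$ and $n_3\leq M$, then split $A_{\text{rand}}$ into the three normalized sums $S^{(k)}_{n_1:n_2}/\sqrt{n_2-n_1}$, $S^{(k)}_{n_2:N_k}/\sqrt{N_k-n_2}$, $S^{(k+1)}_{0:n_3}/\sqrt{n_3}$, apply the sub-Gaussian tail coordinate-wise, and finish with the $N^3$ union bound and Borel--Cantelli, arriving at the identical estimate $\P(E_{k,N})\leq 6D\,N^3\exp\{-c_0 f(N)/(16D)\}$.
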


\begin{proof}
	We prove (\ref{eq74}). The proof of (\ref{eq75}) is similar.
	Since $E_{k,N}$ implies the event that the loss of merging the two segments into one is larger than $f(N)$
	we obtain from (\ref{eq55}) and the union bound that 
	\begin{align}
		&\P(E_{k,N} )
		\leq 
		\P \biggl\{ \bigcup\limits_{\substack{1 \leq n_1 < n_2 < N_k \\ 1 \leq n_3 \leq \eta}}  
		\biggl|\sqrt{\frac{N_k-n_2+n_3}{N_k-n_1+n_3}}\frac{S^{(k)}_{n_1:n_2}+\mu_k(n_2-n_1)}{\sqrt{n_2-n_1}} - \nonumber \\
		&\qquad \qquad \quad \sqrt{\frac{n_2-n_1}{N_k-n_1+n_3}}\frac{S^{(k)}_{n_2:N_k}+(N_k-n_2)\mu_k + S^{(k+1)}_{0:n_3}+n_3\mu_{k+1} }{\sqrt{N_k-n_2+n_3}} \biggr|^2 > f(N) \biggr\} \nonumber \\
		&\leq 
		\sum\limits_{\substack{1 \leq n_1 < n_2 < N_k \\ 1 \leq n_3 \leq \eta}}  
		\P \biggl\{ \biggl|\sqrt{\frac{N_k-n_2+n_3}{N_k-n_1+n_3}}\frac{S^{(k)}_{n_1:n_2}}{\sqrt{n_2-n_1}} - \sqrt{\frac{n_2-n_1}{N_k-n_1+n_3}}\frac{S^{(k)}_{n_2:N_k} + S^{(k+1)}_{0:n_3} }{\sqrt{N_k-n_2+n_3}} 
		\nonumber \\
		&\qquad \qquad \qquad + const_{n_1,n_2,n_3} \biggr|^2 > f(N) \biggr\} \label{eq56} 
	\end{align}
	where 
	\begin{align*} 
		const_{n_1,n_2,n_3} 
		&= \sqrt{\frac{n_2-n_1}{(N_k-n_1+n_3)(N_k-n_2+n_3)}}n_3(\mu_k-\mu_{k+1}) 
	\end{align*}
	is a constant that depends only on $n_1,n_2,n_3$. 
	Since $n_2-n_1 < N_k-n_1+n_3, n_3 < N_k-n_2+n_3$, we obtain
	\begin{align*}
	| const_{n_1,n_2,n_3}  | 
	< \sqrt{n_3}|\mu_{k+1}-\mu_k|
	\leq \sqrt{\eta}|\mu_{k+1}-\mu_k| = \frac{\sqrt{f(N)}}{4} 
	\end{align*}
	where the last inequality follows from (\ref{eq57}).
	Combining the above result, the inequalities 
	\begin{align*}
		&\sqrt{\frac{N_k-n_2+n_3}{N_k-n_1+n_3}}<1, \quad
	\sqrt{\frac{n_2-n_1}{N_k-n_1+n_3}} \sqrt{\frac{1}{N_k-n_2+n_3}} < \min 
	\biggl\{\frac{1}{\sqrt{N_k-n_2}}, \frac{1}{\sqrt{n_3}} \biggr\}
	\end{align*}
	and Inequality (\ref{eq56}), and using the triangle inequality, we obtain
	\begin{align}
		\P(E_{k,N} )
		&\leq 
		\sum\limits_{\substack{1 \leq n_1 < n_2 < N_k \\ 1 \leq n_3 \leq \eta}} \P \biggl\{ \biggl|\frac{S^{(k)}_{n_1:n_2}}{\sqrt{n_2-n_1}}\biggr|+\biggl|\frac{S^{(k)}_{n_2:N_k}}{\sqrt{N_k-n_2}}\biggr|+\biggl|\frac{S^{(k+1)}_{0:n_3}}{\sqrt{n_3}}\biggr|>\frac{3\sqrt{f(N)}}{4} \biggr\} \label{eq59}
	\end{align}	
	Using the union bound similar to (\ref{eq52}),
	$$
	\P \biggl\{ \biggl|\frac{S^{(k)}_{n_1:n_2}}{\sqrt{n_2-n_1}}\biggr|+\biggl|\frac{S^{(k)}_{n_2:N_k}}{\sqrt{N_k-n_2}}\biggr|+\biggl|\frac{S^{(k+1)}_{0:n_3}}{\sqrt{n_3}}\biggr|>\frac{3\sqrt{f(N)}}{4} \biggr\} 
	$$
	can be upper bounded by
	\begin{align}
	&\sum\limits_{k^{'},n^{'},n^{''}} \P \biggl\{ \biggl| \frac{S^{(k^{'})}_{n^{'}:n^{''}}}{\sqrt{n^{''}-n^{'}}} \biggr| >
		\frac{\sqrt{f(N)}}{4}\biggr\}
		\leq \sum\limits_{k^{'},n^{'},n^{''}}\sum\limits_{d=1}^D \P \biggl\{ \biggl| \frac{S^{(k^{'})}_{n^{'}:n^{''},d}}{n^{''}-n^{'}} \biggr| > \frac{1}{4}\sqrt{\frac{f(N)}{D(n^{''}-n^{'})}} \biggr\} \nonumber \\
	&< 3D \cdot 2 \exp \biggl\{-c_0 (n^{''}-n^{'}) \frac{1}{16}\frac{f(N)}{D(n^{''}-n^{'})} \biggr\}
	\leq 6D\exp \biggl\{- \frac{c_0 f(N)}{16D  } \biggr\} \label{eq58}
	\end{align} 
	where the summation is taken over a tuple $(k^{'},n^{'},n^{''})$ of three possible values: $(k,n_1,n_2)$, $(k,n_2,N_k)$, or $(k+1,0,n_3)$.
	Bringing (\ref{eq58}) into (\ref{eq59}) we obtain
	$$
	\P(E_{k,N} ) \leq N_k^2 \eta (6D)\exp \biggl\{-\frac{c_0 f(N)}{16D }\biggr\} < 6D N^3 \exp \biggl\{-\frac{c_0 f(N)}{16D }\biggr\}
	\leq 6D N^{-C^{'}}
	$$
	for a constant $C^{'}>1$, where the last inequality follows from (\ref{eq57}).
	Therefore $\sum_{N=1}^{\infty} \P(E_{k,N}) < \infty$ and by Borel-Cantelli lemma $\P(\limsup_{N \rightarrow \infty} E_{k,N})=0$.
\end{proof}

\vspace{0.2cm}

\begin{lemma} \label{lemma:twosideContaminatedSplit}
	Suppose that $M_0>1$. For each $k=2,\ldots,M_0$ and $1\leq \eta \leq \min\{N_{k-1},N_{k+1}\}$, define 
	\begin{align*}
	E_{k,N} = &\biggl\{ \textrm{There exist integers} \ n_1,n_2,n_3  \textrm{ such that } 1 \leq n_1 \leq N_k, 1\leq n_2 \leq \eta, 1\leq n_3 \leq \eta, \textrm{ and }  \\
	&\{X^{(k-1)}_{n}: n = N_{k-1}-n_3+1,\ldots,N_{k-1}\}\cup \{X^{(k)}_n: n=1,\ldots,n_1\}, \\
	&\{X^{(k)}_n: n=n_1+1,\ldots,N_k\}\cup \{X^{(k+1)}_n: n=1,\ldots,n_2\} 
	 \textrm{ are two detected segments.} \biggr\}
	\end{align*} 
	Assume that 
	\begin{align}
		f(N) \geq  
		\max\{ 100 |\mu_{k-1}-\mu_{k}|^2 \eta , \ 
		100 |\mu_{k}-\mu_{k+1}|^2 m, \
		C \log N \}
		\label{eq66}
	\end{align} 
	where $C > 100D  /c_0$ is a constant.
	Then $\P(\limsup_{N \rightarrow \infty} E_{k,N})=0$. 
\end{lemma}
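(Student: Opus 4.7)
The plan is to mirror the arguments used in Lemmas~\ref{lemma:simpleSplit} and \ref{lemma:onesideContaminatedSplit}, adapted to the fact that now both endpoints of the pair of detected segments may overlap with neighboring true segments. The event $E_{k,N}$ implies that the decomposition gain $g$ obtained by re-merging the two hypothesised segments back into one must exceed $f(N)$. Letting $N_1^*=n_1+n_3$, $N_2^*=N_k-n_1+n_2$, $N^*=N_1^*+N_2^*$, and applying identity~(\ref{eq55}) to these two groups, I would write $g=|W+B|^2$, where
\[
W = \sqrt{\tfrac{N_2^*}{N^*}}\,\tfrac{S^{(k-1)}_{N_{k-1}-n_3:N_{k-1}}+S^{(k)}_{0:n_1}}{\sqrt{N_1^*}} - \sqrt{\tfrac{N_1^*}{N^*}}\,\tfrac{S^{(k)}_{n_1:N_k}+S^{(k+1)}_{0:n_2}}{\sqrt{N_2^*}}
\]
is the centered stochastic vector and $B$ is a deterministic bias vector involving only the three means $\mu_{k-1},\mu_k,\mu_{k+1}$.

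Next I would compute $B$ explicitly. A short algebraic expansion, using $N_k-n_1=N_2^*-n_2$, produces the clean identity
\[
B = \frac{n_3 N_2^*(\mu_{k-1}-\mu_k) + n_2 N_1^*(\mu_k-\mu_{k+1})}{\sqrt{N^* N_1^* N_2^*}}.
\]
Bounding with $N_i^*\leq N^*$, $n_3\leq N_1^*$, and $n_2\leq N_2^*$ gives $|B|\leq |\mu_{k-1}-\mu_k|\sqrt{n_3}+|\mu_k-\mu_{k+1}|\sqrt{n_2}\leq (|\mu_{k-1}-\mu_k|+|\mu_k-\mu_{k+1}|)\sqrt{\eta}$. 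The two mean-gap conditions in (\ref{eq66}) are then precisely what is needed to obtain $|B|\leq \sqrt{f(N)}/10+\sqrt{f(N)}/10=\sqrt{f(N)}/5$. Since $g\geq f(N)$ on $E_{k,N}$, the reverse triangle inequality forces $|W|\geq 4\sqrt{f(N)}/5$.

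The remaining step is to convert $|W|\geq 4\sqrt{f(N)}/5$ into a tail statement for a single partial sum. Applying the triangle inequality to $W$ together with $\sqrt{N_2^*/N^*}\leq 1$, $\sqrt{N_1^*}\geq \sqrt{n_3}$, and the analogous bounds, at least one of the four normalized vectors
\[
\frac{|S^{(k-1)}_{N_{k-1}-n_3:N_{k-1}}|}{\sqrt{n_3}},\quad \frac{|S^{(k)}_{0:n_1}|}{\sqrt{n_1}},\quad \frac{|S^{(k)}_{n_1:N_k}|}{\sqrt{N_k-n_1}},\quad \frac{|S^{(k+1)}_{0:n_2}|}{\sqrt{n_2}}
\]
must exceed $\sqrt{f(N)}/5$. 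A coordinate-wise union bound and Assumption~(A.4), exactly as in~(\ref{eq58}), bounds the probability of each such event by $2D\exp\{-c_0 f(N)/(25D)\}$. Enumerating the at most $N_k\cdot \eta^2\leq N^3$ admissible triples $(n_1,n_2,n_3)$ yields
\[
\P(E_{k,N}) \;\leq\; 8D\,N^3\exp\!\biggl(-\frac{c_0\,f(N)}{25D}\biggr) \;\leq\; 8D\,N^{\,3-c_0 C/(25D)},
\]
and the constraint $C>100D/c_0$ makes the exponent strictly less than $-1$; Borel--Cantelli then finishes the proof.

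The main technical obstacle I foresee is the bookkeeping for $B$: with two non-trivial mean differences contributing, one must verify that the two $\mu_k$-coefficients in the deterministic pieces cancel so that $B$ is genuinely of order $\sqrt{\eta}$ rather than $\sqrt{N_k}$. Once that cancellation is established, the rest is a mechanical upgrade of Lemma~\ref{lemma:onesideContaminatedSplit}: the multiplier $100$ in (\ref{eq66}) replaces $16$ because two bias pieces now share the deterministic budget $\sqrt{f(N)}/5$, and the constant $100D/c_0$ replaces $64D/c_0$ because the stochastic budget $4\sqrt{f(N)}/5$ must be split among four partial sums rather than three.
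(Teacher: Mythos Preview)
Your proposal is correct and follows essentially the same approach as the paper's proof, which likewise reduces the problem to bounding the deterministic bias term (your $B$, the paper's $const_{n_1,n_2,n_3}$) by $\sqrt{f(N)}/5$ and then invoking the sub-Gaussian tail bounds and Borel--Cantelli exactly as in Lemma~\ref{lemma:onesideContaminatedSplit}. Your explicit identity for $B$ and the verification of the $\mu_k$-cancellation are in fact more detailed than what the paper records; the paper simply writes the constant as $\sqrt{\frac{(n_3+n_1)(n_1'+n_2)}{n_3+N_k+n_2}}\bigl(\frac{n_3}{n_3+n_1}(\mu_{k-1}-\mu_k)+\frac{n_2}{n_1'+n_2}(\mu_k-\mu_{k+1})\bigr)$ with $n_1'=N_k-n_1$, bounds it by $2\sqrt{\eta}\max\{|\mu_{k-1}-\mu_k|,|\mu_k-\mu_{k+1}|\}\leq \sqrt{f(N)}/5$, and declares the remainder ``similar.''
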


\begin{proof}
	The major difference with the proof of Lemma~\ref{lemma:onesideContaminatedSplit} is the treatment of the constant term, which is  
	$$
	const_{n_1,n_2,n_3} = \sqrt{\frac{(n_3+n_1)(n_1'+n_2)}{n_3+N_k+n_2}} \biggl( \frac{n_3 \mu_{k-1}+n_1 \mu_k}{n_3+n_1} - \frac{n_1'\mu_k+n_2\mu_{k+1}}{n_1'+n_2} \biggr)
	$$
	where $n_1'=N_k-n_1$. It can be upper bounded by 
	\begin{align*}
	const_{n_1,n_2,n_3} &= \sqrt{\frac{(n_3+n_1)(n_1'+n_2)}{n_3+N_k+n_2}} \biggl( \frac{n_3}{n_3+n_1}(\mu_{k-1}-\mu_k)+\frac{n_2}{n_1'+n_2}(\mu_{k}-\mu_{k+1}) \biggr) \\
	&\leq \sqrt{\frac{n_3(n_1'+n_2)}{(n_3+n_1)(n_3+N_k+n_2)}} \sqrt{n_3}|\mu_{k-1}-\mu_k|+ \sqrt{\frac{(n_3 +n_1) n_2}{(n_3+N_k+n_2)(n_1'+n_2)}} \sqrt{n_2} |\mu_{k}-\mu_{k+1}| \biggr) \\
	&\leq 2 \sqrt{\eta} \max \{ |\mu_{k-1}-\mu_{k}|, |\mu_{k}-\mu_{k+1}| \} \leq \frac{\sqrt{f(N)}}{5}
	\end{align*}
	The remaining proof is similar to that of Lemma~\ref{lemma:onesideContaminatedSplit}.
\end{proof}

\vspace{0.2cm}

\begin{lemma} \label{lemma:onesideContaminatedSplit_version2}
	Suppose that $M_0>0$. For each $k=1,2,\ldots,M_0$, we define the event 
	\begin{align*}
	E_{k,N} = &\biggl\{ \textrm{There exist integers} \ n_1,n_2,n_3,s  \textrm{ such that } N_k-\eta  \leq n_1 < n_2 < N_k, 1\leq s \leq M_0+1-k, \\
	&\quad   1\leq n_3 \leq N_{k+s},  \textrm{ and } \{X^{(k)}_{n}: n = n_1+1,\ldots,n_2\}, \\
	&\quad \{X^{(k)}_n: n=n_2+1,\ldots,N_k\} \cup \cdots \cup \{X^{(k+s)}_n: n=1,\ldots,n_3\} 
	\textrm{ are two detected segments.} \biggr\}
	\end{align*}
	Assume that 
	\begin{align}
		f(N) \geq  \max\{ (s+3)^2 \bar{\Delta}_{\mu}^2 \eta , \ 
		C \log N \} \label{eq80}
	\end{align} 
	where $C > 4(s+3)^2 D  /c_0$ is a constant.
	Then 
	\begin{align} \label{eq77}
		\P\biggl(\limsup_{N\rightarrow \infty} E_{k,N}\biggr)=0.
	\end{align} 
	If for each $k=2,\ldots,M_0+1$ we define the event
	\begin{align*}
	\tilde{E}_{k,N} = &\biggl\{ \textrm{There exist integers} \ n_1,n_2,n_3  \textrm{ such that } 1 \leq n_1 < n_2 \leq \eta, 1\leq s \leq k-1,   1\leq n_3 \leq N_{k-s}, \textrm{ and }  \\
	&\{X^{(k-s)}_{n}: n = N_{k-s}-n_3+1,\ldots,N_{k-s}\}\cup \cdots \cup \{X^{(k)}_n: n=1,\ldots,n_1\}, \{X^{(k)}_n: n=n_1+1,\ldots,n_2\} \\
	& \textrm{ are two detected segments.} \biggr\}
	\end{align*}
	where $1\leq \eta \leq N_{k}$.
	Assume that 
	\begin{align}
		f(N) \geq  \max\{ (s+3)^2 \bar{\Delta}_{\mu}^2 \eta , \ C \log N \}  \label{eq81}
	\end{align}  
	where $C > 4(s+3)^2 D  /c_0$ is a constant.
	Then 
	\begin{align} \label{eq78}
		\P\biggl(\limsup_{N\rightarrow \infty} \tilde{E}_{k,N} \biggr)=0.
	\end{align} 
\end{lemma}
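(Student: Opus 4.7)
Plan: This lemma extends Lemma~\ref{lemma:onesideContaminatedSplit} to allow the long detected segment to span $s+1$ true segments rather than two, so the overall strategy is the same: the event $E_{k,N}$ forces the quadratic-loss gain from separating the two detected segments to exceed $f(N)$, and I will show this is a large-deviation event whose probabilities are summable in $N$. Using the closed form in (\ref{eq55}) (whose derivation only requires the two segments' sample sums, not identical means), the gain equals
\begin{align*}
g = \left| \sqrt{\tfrac{N_2}{N_1+N_2}}\,\tfrac{\Sigma_1}{\sqrt{N_1}} - \sqrt{\tfrac{N_1}{N_1+N_2}}\,\tfrac{\Sigma_2}{\sqrt{N_2}} \right|^{2},
\end{align*}
where $N_1 = n_2 - n_1 \leq \eta$, $N_2 = (N_k - n_2) + \sum_{j=1}^{s-1} N_{k+j} + n_3$, and $\Sigma_1, \Sigma_2$ are the two raw sums. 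Each $\Sigma_i$ I split into a deterministic mean part plus a centered noise part built from the $S^{(\cdot)}$'s.

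The deterministic contribution collapses to $\sqrt{N_1 N_2/(N_1+N_2)}\,(\mu_k - \bar{\mu}_2)$, where $\bar{\mu}_2$ is the average of $\mu_k, \mu_{k+1}, \ldots, \mu_{k+s}$ weighted by the number of points each true segment contributes to the long detected segment. Telescoping gives $|\mu_k - \mu_{k+i}| \leq i\,\bar{\Delta}_{\mu} \leq s\,\bar{\Delta}_{\mu}$, hence $|\mu_k - \bar{\mu}_2| \leq s\,\bar{\Delta}_{\mu}$, and since $\sqrt{N_1 N_2/(N_1+N_2)} \leq \sqrt{N_1} \leq \sqrt{\eta}$, the deterministic term is at most $s\,\bar{\Delta}_{\mu}\sqrt{\eta}$. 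Under (\ref{eq80}) this is at most $\frac{s}{s+3}\sqrt{f(N)}$, so $E_{k,N}$ forces the stochastic part to exceed the remaining margin $\frac{3}{s+3}\sqrt{f(N)}$.

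The stochastic part is a linear combination of $s+2$ standardized partial sums, namely $S^{(k)}_{n_1:n_2}/\sqrt{n_2-n_1}$, $S^{(k)}_{n_2:N_k}/\sqrt{N_k-n_2}$, $S^{(k+j)}/\sqrt{N_{k+j}}$ for $1 \leq j \leq s-1$, and $S^{(k+s)}_{0:n_3}/\sqrt{n_3}$, each carrying a coefficient bounded in modulus by $1$ after the usual $\sqrt{N_1/N_2} \leq 1$, $\sqrt{N_2/(N_1+N_2)} \leq 1$, and $N_2 \geq N_{k+j}$ observations. The triangle inequality reduces the stochastic event to the union of $s+2$ elementary events, each of which is controlled coordinatewise by the sub-Gaussian tail bound (\ref{eq92}), producing a per-configuration probability of order $\exp\{-c_0\,f(N)/[\Theta((s+3)^{2}D)]\}$. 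A union bound over the $D$ coordinates, the $s+2$ summands, and the $O(N^{3})$ admissible triples $(n_1, n_2, n_3)$, combined with the assumed $f(N) \geq C\log N$ and $C > 4(s+3)^{2}D/c_0$, yields $\sum_{N}\P(E_{k,N}) < \infty$, so Borel--Cantelli delivers (\ref{eq77}). The companion statement (\ref{eq78}) follows by the mirror-image argument, swapping the roles of the short and long sides.

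The main obstacle I anticipate is the uniform control of the deterministic mean-mismatch $\sqrt{N_1 N_2/(N_1+N_2)}\,(\mu_k - \bar{\mu}_2)$ across all admissible $(n_1, n_2, n_3, s)$: the weighted mean $\bar{\mu}_2$ depends both on the random location of the candidate split and on the potentially large interior segment sizes $N_{k+1}, \ldots, N_{k+s-1}$, and it is essential that the telescoping bound $|\mu_k - \mu_{k+i}| \leq s\,\bar{\Delta}_{\mu}$ combines with the $\sqrt{N_1}$ factor (rather than a $\sqrt{N_2}$ factor) to produce the $\sqrt{\eta}$ that the hypothesis (\ref{eq80}) was designed to dominate. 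Beyond that, the sub-Gaussian union bound is a direct extension of the argument already used in Lemmas~\ref{lemma:simpleSplit}--\ref{lemma:twosideContaminatedSplit}, and no new probabilistic tool is required.
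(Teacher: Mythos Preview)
Your approach is essentially identical to the paper's: express the gain via the closed form (\ref{eq55}), isolate the deterministic mean-mismatch $\sqrt{N_1 N_2/(N_1+N_2)}\,(\mu_k-\mu^*)$ and bound it through $n_2-n_1\leq\eta$, split the stochastic remainder into $s+2$ standardized partial sums by the triangle inequality, and close with coordinatewise sub-Gaussian tails, a union bound over $O(N^{3})$ configurations, and Borel--Cantelli. The paper actually records the mean-mismatch bound as $|\mu_k-\mu^*|\leq\bar{\Delta}_\mu$ (without your factor $s$), allocating $\sqrt{f(N)}/(s+3)$ to each of the $s+3$ pieces; your telescoping estimate $|\mu_k-\bar\mu_2|\leq s\,\bar{\Delta}_\mu$ is the more defensible one, and the argument is otherwise the same.
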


\begin{proof}
	We prove (\ref{eq77}). The proof of (\ref{eq78}) is similar.
	Similar to Inequality (\ref{eq56}) we obtain 
	\begin{align}
		\P(E_{k,N} )
		&\leq 
		\sum\limits_{\substack{1 \leq n_1 < n_2 < N_k \\ 1 \leq n_3 \leq \eta}}  
		\P \biggl\{ \biggl|\sqrt{\frac{L_{k+s-1}-L_{k-1}-n_2+n_3}{L_{k+s-1}-L_{k-1}-n_1+n_3}}\frac{S^{(k)}_{n_1:n_2}}{\sqrt{n_2-n_1}} \nonumber \\
		&\qquad \qquad  - \sqrt{\frac{n_2-n_1}{L_{k+s-1}-L_{k-1}-n_1+n_3}}\frac{S^{(k)}_{n_2:N_k} + \ldots+ S^{(k+s)}_{0:n_3} }{\sqrt{L_{k+s-1}-L_{k-1}-n_2+n_3}} \nonumber \\
		&\qquad \qquad  + \sqrt{\frac{(L_{k+s-1}-L_{k-1}-n_2+n_3)(n_2-n_1)}{L_{k+s-1}-L_{k-1}-n_1+n_3}}(\mu_{k}-\mu^{*})\biggr|^2 > f(N) \biggr\}  
	\end{align}
	where 
	$$
	\mu^{*} = \frac{(N_k-n_2)\mu_k+\sum\limits_{j=k+1}^{k+s-1}N_{j}\mu_{j}+n_3\mu_{k+s}}{(N_k-n_2)+\sum\limits_{j=k+1}^{k+s-1}N_{j}+n_3}
	$$
	The last term in the above summation can be bounded by
	\begin{align*}
		\biggl| \sqrt{\frac{(L_{k+s-1}-L_{k-1}-n_2+n_3)(n_2-n_1)}{L_{k+s-1}-L_{k-1}-n_1+n_3}}(\mu_{k}-\mu^{*})\biggr|
		&\leq \sqrt{n_2-n_1}|\mu_{k}-\mu^{*}|
		\leq \sqrt{\eta} \bar{\Delta}_{\mu}
		= \frac{\sqrt{f(N)}}{s+3} . 
	\end{align*}
	Following similar proof in Inequalities (\ref{eq59})-(\ref{eq58}), we get 
	$$
	\P(E_{k,N} ) \leq 2(s+3) D N^3 \exp \biggl\{-\frac{c_0 f(N)}{(s+3)^2 D }\biggr\},
	$$
	which implies $\P(\limsup_{N\rightarrow \infty} E_{k,N})=0$
	from Condition (\ref{eq80}) and Borel-Cantelli lemma. 
	Equality (\ref{eq78}) can be similarly proved.
\end{proof}

\begin{remark} 
	Lemma~\ref{lemma:onesideContaminatedSplit_version2} proves that with probability one, for large $N$ there is no detected segment that consists of points from the same true segment while having a small size (compared with the penalty increment $f(N)$).   	
\end{remark}

\vspace{0.2cm}

\begin{lemma} \label{lemma:pureMove}
	Suppose that $M_0>0$. For each $k=1,2,\ldots,M_0$, define the event  
	\begin{align*}
	E_{k,N} = &\biggl\{ \textrm{There exist integers} \ n_1,n_2,n_3  \textrm{ such that } 1 \leq n_1 < n_2 < N_k, 1\leq n_3 \leq N_{k+1}, \textrm{ and }  \\
	&\{X^{(k)}_{n}: n = n_1+1,\ldots,n_2\}, \{X^{(k)}_n: n=n_2+1,\ldots,N_k\} \cup \{X^{(k+1)}_n: n=1,\ldots,n_3\} \\
	& \textrm{ are two detected segments.} \biggr\} 
	\end{align*}
	and the event 
	$A_{k,N} = \bigl\{\min\{N_k-n_2,n_3\} > q_k(N) \bigr\} $
	where
	$ 
		q_k(N) = 250 D  \log N / ( c_0 |\mu_{k}-\mu_{k+1}|^2). 
	$ 
	Then 
	\begin{align}
		\P\biggl\{\limsup_{N\rightarrow \infty} (A_{k,N} \cap E_{k,N}) \biggr\}&=0 \label{eq67}
	\end{align}
	Similarly, if for each $k=2,\ldots,M_0+1$ we define the event
	\begin{align*}
	\tilde{E}_{k,N} = &\biggl\{ \textrm{There exist integers} \ n_1,n_2,n_3  \textrm{ such that } 1 \leq n_1 < n_2 < N_k, 1\leq n_3 \leq N_{k-1}, \textrm{ and }  \\
	&\{X^{(k-1)}_n: n=N_{k-1}-n_3+1,\ldots,N_{k-1}\} \cup \{X^{(k)}_{n}: n = 1,\ldots,n_1\}, \{X^{(k)}_n: n=n_1+1,\ldots,n_2\}  \\
	& \textrm{ are two detected segments.} \biggr\}
	\end{align*} 
	and $\tilde{A}_{k,N} = \bigl\{\min\{n_1,n_3\} > \tilde{q}_k(N) \bigr\}$, $\tilde{q}_k(N) = 250 D  \log N / ( c_0 |\mu_{k-1}-\mu_{k}|^2 )$, then 
	\begin{align}
			\P\biggl\{\limsup_{N\rightarrow \infty} (\tilde{A}_{k,N} \cap \tilde{E}_{k,N})\biggr\}&=0. \label{eq68}
	\end{align}
\end{lemma}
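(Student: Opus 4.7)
The plan is to show that if the event $E_{k,N}\cap A_{k,N}$ occurred, then the within-segment quadratic loss could be strictly reduced (at the same number of segments, so $f(N)$ is irrelevant) by shifting the boundary between the two detected segments from $L_{k-1}+n_2$ to the true change point $L_k$, contradicting the optimality of Algorithm~\ref{algo:oracle}. Denote $A=\{X^{(k)}_n:n_1<n\le n_2\}$, $B=\{X^{(k)}_n:n_2<n\le N_k\}$, $C=\{X^{(k+1)}_n:1\le n\le n_3\}$. Under $E_{k,N}$ the two neighboring detected segments are $A$ and $B\cup C$; the competing configuration $A\cup B,\,C$ has the same number of segments, so Algorithm optimality gives $Q(A)+Q(B\cup C)\le Q(A\cup B)+Q(C)$, which using the decomposition identity $Q(S\cup T)=Q(S)+Q(T)+g_{S,T}$ simplifies to $g_{B,C}\le g_{A,B}$. (One also checks that shifting the boundary keeps both new segments above $\beta(N)$: $|A\cup B|\ge|A|\ge\beta(N)$ trivially, and $|C|=n_3>q_k(N)\ge\beta(N)$ for the minimum-segment-sizes used in Theorem~\ref{thm:strongConsistency}.) The remaining task is to show $g_{B,C}>g_{A,B}$ almost surely for large $N$, uniformly in $(n_1,n_2,n_3)$.

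For the lower bound on $g_{B,C}$, I use the representation from (\ref{eq55}), which writes $g_{B,C}$ as the squared norm of a deterministic signal $\sqrt{|B||C|/(|B|+|C|)}(\mu_k-\mu_{k+1})$ plus a rescaled difference of zero-mean partial sums. The signal magnitude is at least $\sqrt{\min\{|B|,|C|\}/2}\,|\mu_k-\mu_{k+1}|\ge\sqrt{q_k(N)/2}\,|\mu_k-\mu_{k+1}|=\sqrt{125D\log N/c_0}$ under $A_{k,N}$. Sub-Gaussianity (A.4) combined with (\ref{eq92}) — applied coordinate-wise and unioned over $d=1,\dots,D$ exactly as in Lemma~\ref{lemma:simpleSplit} — gives that the noise term exceeds half the signal with probability at most $\text{const}\cdot D\exp\{-c_0 q_k(N)|\mu_k-\mu_{k+1}|^2/(\text{const}\cdot D)\}\le N^{-\gamma}$ for a large $\gamma$, once the constant $250$ in the definition of $q_k(N)$ is used. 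On the complement, $|a+b|^2\ge(|a|-|b|)^2\ge|a|^2/4$ yields $g_{B,C}\ge\min\{|B|,|C|\}|\mu_k-\mu_{k+1}|^2/8$.

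For the upper bound on $g_{A,B}$, the formula (\ref{eq55}) specialized to two subsegments of the same true segment $k$ gives $g_{A,B}\le 2|S^{(k)}_{n_1:n_2}|^2/|A|+2|S^{(k)}_{n_2:N_k}|^2/|B|$. A sub-Gaussian tail bound as in Lemma~\ref{lemma:simpleSplit}, unioned over $d=1,\dots,D$, produces $\mathbb{P}(g_{A,B}>C_1 D\log N)\le 4D\exp\{-c_0 C_1\log N/(\text{const})\}$ for any single $(n_1,n_2)$, and a union bound over the $O(N^2)$ values of $(n_1,n_2)$ leaves a tail of the form $N^{-\gamma'}$ when $C_1$ is chosen sufficiently large. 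Combining both bounds and unioning also over $n_3$ (another factor of $N$) gives a probability of order $N^{-\gamma''}$ with $\gamma''>1$, so Borel-Cantelli yields (\ref{eq67}). Claim (\ref{eq68}) is proved symmetrically, exchanging the roles of the two segments.

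The main obstacle, as in Lemmas~\ref{lemma:simpleSplit}--\ref{lemma:onesideContaminatedSplit_version2}, is calibrating the three constants simultaneously: the factor in $q_k(N)$ must make the signal of $g_{B,C}$ dominate (i) its own sub-Gaussian noise and (ii) the worst case of $g_{A,B}$ over all $O(N^2)$ choices of $(n_1,n_2)$, while still keeping the resulting exponent in the tail bound strictly above $3$ so that the union over $(n_1,n_2,n_3)$ is summable. The specific constant $250$ is the result of this balancing; the qualitative shape of the argument is a direct adaptation of the sub-Gaussian union-bound technique already deployed in the preceding lemmas, applied to the single-boundary-shift comparison described above.
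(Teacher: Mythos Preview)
Your proposal is correct and follows essentially the same route as the paper: optimality of the detected configuration yields $g_{B,C}\le g_{A,B}$, and sub-Gaussian tails plus a union bound over $(n_1,n_2,n_3)$ show this inequality has summable probability on $A_{k,N}$, so Borel--Cantelli finishes. The paper merely merges your two estimates into the single inequality $\sqrt{\bar n/2}\,|\mu_k-\mu_{k+1}|\le |S^{(k)}_{n_1:n_2}|/\sqrt{n_2-n_1}+2|S^{(k)}_{n_2:N_k}|/\sqrt{N_k-n_2}+|S^{(k+1)}_{0:n_3}|/\sqrt{n_3}$ before applying the tail bound, whereas you bound $g_{B,C}$ from below and $g_{A,B}$ from above separately and then compare; this is purely an organizational difference (and your side remark about $\beta(N)$ is in fact unnecessary, since the minimization in Step~2 of Algorithm~\ref{algo:oracle} is unconstrained).
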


\begin{proof}
	We prove (\ref{eq67}). The proof of (\ref{eq68}) is similar.
	The two detected segments in $E_{k,N}$ contributed to the loss $\mathfrak{L}_1 = Q^{(k)}_{n_1:n_2}+Q^{(k,k+1)}_{n_2:n_3}$. 
	Consider the postulation that the two detected segments are 
	$\{X^{(k)}_{n}: n = n_1+1,\ldots,N_k\}, \{X^{(k+1)}_n: n=1,\ldots,n_3\}$ instead. Correspondingly, its contributed loss is $\mathfrak{L}_2 = Q^{(k)}_{n_1:N_k}+Q^{(k+1)}_{0:n_3}$. 
	Using Equality (\ref{eq_break}) we obtain 
	\begin{align*}
		\mathfrak{L}_1 &= Q^{(k)}_{n_1:n_2}+(Q^{(k)}_{n_2:N_k}+Q^{(k+1)}_{0:n_3}+g^{(k,k+1)}_{n_2,n_3}) \\
		\mathfrak{L}_2 &= (Q^{(k)}_{n_1:n_2}+Q^{(k)}_{n_2:N_k}+g^{(k)}_{n_1,n_2,N_k})+Q^{(k+1)}_{0:n_3}.
	\end{align*}
	where $g^{(k,k+1)}_{n_2,n_3},g^{(k)}_{n_1,n_2,N_k}$ can be expressed as 
	\begin{align}
		g^{(k,k+1)}_{n_2,n_3} &= \biggl|\sqrt{\frac{n_3}{N_k-n_2+n_3}}\frac{S^{(k)}_{n_2:N_k}+(N_k-n_2)\mu_{k}}{\sqrt{N_k-n_2}} - \sqrt{\frac{N_k-n_2}{N_k-n_2+n_3}}\frac{S^{(k+1)}_{0:n_3}+n_3\mu_{k+1}}{\sqrt{n_3}} \biggr| \nonumber \\
		&= \biggl|\sqrt{\frac{n_3}{N_k-n_2+n_3}}\frac{S^{(k)}_{n_2:N_k}}{\sqrt{N_k-n_2}} - \sqrt{\frac{N_k-n_2}{N_k-n_2+n_3}}\frac{S^{(k+1)}_{0:n_3}}{\sqrt{n_3}}  + \sqrt{\frac{n_3(N_k-n_2)}{N_k-n_2+n_3}}(\mu_{k}-\mu_{k+1}) \biggr| \nonumber \\
		g^{(k)}_{n_1,n_2,N_k} &=	 \biggl|\sqrt{\frac{N_k-n_2}{N_k-n_1}}\frac{S^{(k)}_{n_1:n_2}}{\sqrt{n_2-n_1}} - \sqrt{\frac{n_2-n_1}{N_k-n_1}}\frac{S^{(k)}_{n_2:N_k}}{\sqrt{N_k-n_2}} \biggr| \label{eq62}
	\end{align}
	Since event $E_{k,N}$ implies that $\mathfrak{L}_1 \leq \mathfrak{L}_2$, we obtain 
	\begin{align}
		g^{(k,k+1)}_{n_2,n_3} \leq g^{(k)}_{n_2,n_3,N_k}	. \label{eq201}
	\end{align}
	Let $\bar{n}=\min\{N_k-n_2,n_3\}$. From (\ref{eq62}), (\ref{eq201}) and triangle inequality we further obtain   
	\begin{align*}
		\sqrt{\frac{\bar{n}}{2}}|\mu_{k}-\mu_{k+1}| \leq 
		\sqrt{\frac{n_3(N_k-n_2)}{N_k-n_2+n_3}}|\mu_{k}-\mu_{k+1}|
		& \leq \biggl|\frac{S^{(k)}_{n_1:n_2}}{\sqrt{n_2-n_1}} \biggr| + 2\biggl|\frac{S^{(k)}_{n_2:N_k}}{\sqrt{N_k-n_2}} \biggr| + \biggl|\frac{S^{(k+1)}_{0:n_3}}{\sqrt{n_3}} \biggr|
	\end{align*}
Therefore, 
	\begin{align*}
		\P(A_{k,N} \cap E_{k,N})
		&\leq 
		\P \biggl\{ \bigcup\limits_{\substack{1 \leq n_1 < n_2 \leq N_k \\ 1 \leq n_3 \leq N_{k+1}}}
		 \biggl\{ \biggl|\frac{S^{(k)}_{n_1:n_2}}{\sqrt{n_2-n_1}} \biggr| + 2\biggl|\frac{S^{(k)}_{n_2:N_k}}{\sqrt{N_k-n_2}} \biggr| + \biggl|\frac{S^{(k+1)}_{0:n_3}}{\sqrt{n_3}}\biggr|
		 > \sqrt{\frac{\bar{n}}{2}}(\mu_{k}-\mu_{k+1}) , \, \bar{n} > q_k(N) \biggr\} \biggr\}
	\end{align*}
	Using similar techniques as in (\ref{eq52}), we obtain
	\begin{align*}
		\P(A_{k,N} \cap E_{k,N})
		\leq 8D N^3\exp \biggl\{- \frac{c_0 }{16D  } \frac{\bar{n} |\mu_{k}-\mu_{k+1}|^2}{2} \biggr\} 
		< 8D N^{-2}
	\end{align*}
	where the last inequality is due to the definition of $A_{k,N}$.   It is worth mentioning that we will reuse $q_k(N)$ in another lemma, and the constant 250 is  not tight for the purpose of this proof.
	
	Finally, by Borel-Cantelli lemma, we conclude that $\P\{\limsup_{N\rightarrow \infty} (A_{k,N} \cap E_{k,N})\}=0$.

\end{proof}
	
\begin{remark}
	Lemma~\ref{lemma:pureMove} shows that for each $\omega$ from a set of probability one, 
	the event $A_{k,N} \cap E_{k,N}$ will not happen
	for sufficiently large $N$.
	Thus, it shows that each true change point can not be too far away from the detected change point nearest to it. 
	The following Lemma~\ref{lemma:contaminatedMove} is  similar to Lemma~\ref{lemma:pureMove} but in a slightly more complex scenario. 
\end{remark}
	
\vspace{0.2cm}

\begin{lemma} \label{lemma:contaminatedMove}
	Suppose that $M_0>1$ and $\eta $ is an integer that satisfies $1 \leq \eta \leq N_{k-1}$.  For each $k=2,\ldots,M_0$, define the event  
	\begin{align*}
	E_{k,N} = &\biggl\{ \textrm{There exist integers} \ n_1,n_2,n_3  \textrm{ such that } 1 \leq n_1 \leq \eta, 1 \leq n_2 \leq N_k, 1\leq n_3 \leq N_{k+1}, \textrm{ and }  \\
	&\{X^{(k-1)}_{n}: n = N_{k-1}-n_1+1,\ldots,N_{k-1}\} \cup \{X^{(k)}_{n}: n = 1,\ldots,n_2\}, \\
	&\{X^{(k)}_n: n=n_2+1,\ldots,N_k\} \cup \{X^{(k+1)}_n: n=1,\ldots,n_3\} \\
	& \textrm{ are two detected segments.} \biggr\}
	\end{align*}
	and the event 
	\begin{align}
		A_{k,N} = \biggl\{\min \bigl\{N_k-n_2,n_3 \bigr\} \geq \max\biggl\{\frac{4\eta}{(\sqrt{2}-1)^{2}}\frac{|\mu_{k-1}-\mu_{k}|^2}{|\mu_{k}-\mu_{k+1}|^2}, \ 2q_k(N) \biggr\} \biggr\}  \label{eq73},
	\end{align}
	where $q_k(N)$ is the same as was in Lemma~\ref{lemma:pureMove}. Then 
	\begin{align}
		\P\biggl\{\limsup_{N\rightarrow \infty} (A_{k,N} \cap E_{k,N})\biggr\}=0 \label{eq69}
	\end{align} 
	Similarly, if for each $k=2,\ldots,M_0$ we define 
	\begin{align*}
	\tilde{E}_{k,N} = &\biggl\{ \textrm{There exist integers} \ n_1,n_2,n_3  \textrm{ such that } 1 \leq n_1 \leq N_{k-1}, 1 \leq n_2 \leq N_k, 1\leq n_3 \leq \eta, \textrm{ and }  \\
	&\{X^{(k-1)}_{n}: n = N_{k-1}-n_1+1,\ldots,N_{k-1}\} \cup \{X^{(k)}_{n}: n = 1,\ldots,n_2\}, \\
	&\{X^{(k)}_n: n=n_2+1,\ldots,N_k\} \cup \{X^{(k+1)}_n: n=1,\ldots,n_3\} \\
	& \textrm{ are two detected segments.} \biggr\}
	\end{align*}
	and  
	$$\tilde{A}_{k,N} = \biggl\{\min\{n_1,n_2\} \geq \max \biggl\{ \frac{4\eta}{(\sqrt{2}-1)^{2}} \frac{|\mu_{k}-\mu_{k+1}|^{2}}{|\mu_{k-1}-\mu_{k}|^2} , \ 2\tilde{q}_k(N) \biggr\} \biggr\} .$$ 
	where $\tilde{q}_k(N)$ is the same as was in Lemma~\ref{lemma:pureMove}. Then 
	\begin{align}
		\P\biggl\{\limsup_{N\rightarrow \infty} (\tilde{A}_{k,N} \cap \tilde{E}_{k,N}) \biggr\}=0 \label{eq70}
	\end{align} 
\end{lemma}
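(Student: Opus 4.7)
The plan is to mimic the proof of Lemma~\ref{lemma:pureMove}, comparing the quadratic loss $\mathfrak{L}_1$ of the two detected segments in $E_{k,N}$ against the loss $\mathfrak{L}_2$ of the alternative obtained by moving the inner split point from $n_2$ to the true change point $L_k$. Explicitly, $\mathfrak{L}_1 = Q^{(k-1,k)}_{N_{k-1}-n_1:n_2}+Q^{(k,k+1)}_{n_2:n_3}$ and $\mathfrak{L}_2 = Q^{(k-1,k)}_{N_{k-1}-n_1:N_k}+Q^{(k+1)}_{0:n_3}$. Because the configuration attaining $\mathfrak{L}_1$ minimizes the within-segment loss over all valid placements, the event $E_{k,N}$ forces $\mathfrak{L}_1\leq \mathfrak{L}_2$.

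Let $U_1,U_2,U_3,U_4$ denote the four sub-blocks of sizes $n_1,n_2,N_k-n_2,n_3$ drawn from the $(k-1)$st, $k$th, $k$th, and $(k+1)$st true segments respectively. Applying the identity $Q(A\cup B)=Q(A)+Q(B)+\frac{|A||B|}{|A|+|B|}|\bar{X}_A-\bar{X}_B|^2$ to each configuration, the common $Q(U_i)$ terms cancel and the inequality $\mathfrak{L}_1\leq \mathfrak{L}_2$ reduces to
\[ \frac{(N_k-n_2)\,n_3}{N_k-n_2+n_3}\,|\bar{X}_{U_3}-\bar{X}_{U_4}|^2 \;\leq\; \frac{(n_1+n_2)(N_k-n_2)}{n_1+N_k}\,|\bar{X}_{U_1\cup U_2}-\bar{X}_{U_3}|^2. \]
Writing $\bar{X}_{U_1\cup U_2}-\bar{X}_{U_3}=\tfrac{n_1}{n_1+n_2}(\mu_{k-1}-\mu_k)+\text{noise}$ and $\bar{X}_{U_3}-\bar{X}_{U_4}=(\mu_k-\mu_{k+1})+\text{noise}$, using $(N_k-n_2)n_3/(N_k-n_2+n_3)\geq \bar n/2$ with $\bar n=\min\{N_k-n_2,n_3\}$, and taking square roots, the triangle inequality transforms the display into the schematic bound
\[ \sqrt{\bar n/2}\,|\mu_k-\mu_{k+1}| \;\leq\; \sqrt{\eta}\,|\mu_{k-1}-\mu_k| + R(n_1,n_2,n_3), \]
where the residual $R$ is a sum of a bounded number of scaled partial-sum pieces $|S^{(\cdot)}_{n':n''}|/\sqrt{n''-n'}$, each sub-block being at least $\bar n$ in size on the right of $L_k$ and of size $n_1+n_2\geq n_1$ on the left. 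The numerical factor $4/(\sqrt{2}-1)^{2}$ in the first branch of $A_{k,N}$ is calibrated so that the interference $\sqrt{\eta}|\mu_{k-1}-\mu_k|$ absorbs at most a $(\sqrt{2}-1)/\sqrt{2}$-fraction of the signal $\sqrt{\bar n/2}|\mu_k-\mu_{k+1}|$, leaving a residual signal of order $\sqrt{\bar n}|\mu_k-\mu_{k+1}|$ that must be exceeded by $R$.

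Closing the argument mirrors (\ref{eq52})--(\ref{eq58}): union-bound over the at most $N^3$ admissible tuples $(n_1,n_2,n_3)$ and, for each tuple, apply the sub-Gaussian tail bound (\ref{eq92}) coordinate-wise to each piece in $R$. The second branch of $A_{k,N}$, namely $\bar n\geq 2q_k(N)=500 D\log N/(c_0|\mu_k-\mu_{k+1}|^2)$, is precisely what makes the resulting exponential smaller than $N^{-c}$ for some $c>3$. Thus $\sum_N \P(A_{k,N}\cap E_{k,N})<\infty$ and Borel--Cantelli yields (\ref{eq69}). The companion statement (\ref{eq70}) is obtained by the mirror-image argument in which the contamination sits on the right side of the inner split, so the roles of $|\mu_{k-1}-\mu_k|$ and $|\mu_k-\mu_{k+1}|$ are interchanged. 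The main obstacle is the bookkeeping in the displayed inequality: one must carefully separate the interference signal produced by the $n_1$ contamination points from the true displacement signal produced by $\bar n$, and tune the numerical constants so that $A_{k,N}$ leaves a $\Theta(\sqrt{\bar n})$ margin for the sub-Gaussian noise bound to swallow.
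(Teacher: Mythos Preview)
Your proposal is correct and follows essentially the same route as the paper. Both compare the detected loss $\mathfrak{L}_1$ against the alternative $\mathfrak{L}_2$ obtained by sliding the inner split to $L_k$, reduce $\mathfrak{L}_1\le\mathfrak{L}_2$ to the inequality $g^{(k,k+1)}_{n_2,n_3}\le g^{(k-1,k,k)}_{N_{k-1}-n_1,n_2,N_k}$ (which is exactly your displayed mean-difference inequality), separate the contamination signal $\sqrt{n_1}|\mu_{k-1}-\mu_k|$ from the true displacement signal $\sqrt{\bar n/2}|\mu_k-\mu_{k+1}|$, and then close with a union bound over $(n_1,n_2,n_3)$ plus coordinate-wise sub-Gaussian tails and Borel--Cantelli. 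Your calibration of the constant $4/(\sqrt{2}-1)^2$ is precisely the one the paper uses: it guarantees $\sqrt{\bar n/2}|\mu_k-\mu_{k+1}|-\sqrt{\eta}|\mu_{k-1}-\mu_k|\ge \tfrac{1}{2}\sqrt{\bar n}|\mu_k-\mu_{k+1}|\ge \tfrac{1}{2}\sqrt{2q_k(N)}|\mu_k-\mu_{k+1}|$, which is the margin needed for the exponent to beat the $N^3$ union count.
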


\begin{proof}
	We prove (\ref{eq69}). The proof of (\ref{eq70}) is similar.
	Similar to the proof of Lemma~\ref{lemma:pureMove}, the event $E_{k,N}$ implies that $\mathfrak{L}_1 \leq \mathfrak{L}_2$ where
	\begin{align*}
		\mathfrak{L}_1 &= Q^{(k-1,k)}_{N_{k-1}-n_1:n_2}+(Q^{(k)}_{n_2:N_k}+Q^{(k+1)}_{0:n_3}+g^{(k,k+1)}_{n_2,n_3}) \\
		\mathfrak{L}_2 &= (Q^{(k-1,k)}_{N_{k-1}-n_1:n_2}+Q^{(k)}_{n_2:N_k}+g^{(k-1,k,k)}_{N_{k-1}-n_1,n_2,N_k})+Q^{(k+1)}_{0:n_3}.
	\end{align*}
	where $g^{(k,k+1)}_{n_2,n_3}$ was given in (\ref{eq62}) and $g^{(k-1,k,k)}_{N_{k-1}-n_1,n_2,N_k}$ can be expressed (similar to (\ref{eq56})) as 
	\begin{align}
		g^{(k-1,k,k)}_{N_{k-1}-n_1,n_2,N_k} &=	 \biggl|\sqrt{\frac{N_k-n_2}{N_k+n_1}}\frac{S^{(k-1)}_{N_{k-1}-n_1:N_{k-1}} +n_1\mu_{k-1}+ S^{(k)}_{0:n_2} + n_2\mu_k}{\sqrt{n_1+n_2}} - \nonumber \\
		&\qquad \qquad \sqrt{\frac{n_1+n_2}{N_k+n_1}}\frac{S^{(k)}_{n_2:N_k}+(N_k-n_2)\mu_k}{\sqrt{N_k-n_2}} \biggr| \nonumber \\
		&\leq 
		 \biggl| \frac{S^{(k-1)}_{N_{k-1}-n_1:N_{k-1}}}{\sqrt{n_1+n_2}} \biggr|
		+ \biggl| \frac{S^{(k)}_{0:n_2}}{\sqrt{n_1+n_2}} \biggr| 
		+ \biggl| \frac{S^{(k)}_{n_2:N_k}}{\sqrt{N_k-n_2}} \biggr| \nonumber \\
		&\quad + \sqrt{\frac{N_k-n_2}{(N_k+n_1)(n_1+n_2)}}n_1 |\mu_{k-1}-\mu_k | \label{eq71}
	\end{align}
	Thus, $\mathfrak{L}_1 \leq \mathfrak{L}_2$ implies that 
	\begin{align}
		g^{(k,k+1)}_{n_2,n_3} \leq g^{(k-1,k,k)}_{N_{k-1}-n_1,n_2,N_k}. \label{eq72}
	\end{align}
	From (\ref{eq62}), (\ref{eq71}), (\ref{eq72}) and triangle inequality we further obtain   
	\begin{align*}
		\sqrt{\frac{n_3(N_k-n_2)}{N_k-n_2+n_3}}|\mu_{k}-\mu_{k+1}|
		& \leq 
		\biggl| \frac{S^{(k-1)}_{N_{k-1}-n_1:N_{k-1}}}{\sqrt{n_1}} \biggr|
		+ \biggl| \frac{S^{(k)}_{0:n_2}}{\sqrt{n_2}} \biggr| 
		+ 2\biggl| \frac{S^{(k)}_{n_2:N_k}}{\sqrt{N_k-n_2}} \biggr|
		+ \biggl| \frac{S^{(k+1)}_{0:n_3}}{\sqrt{n_3}} \biggr|
		\\
		&\quad + \sqrt{\frac{N_k-n_2}{(N_k+n_1)(n_1+n_2)}}n_1 |\mu_{k-1}-\mu_k |
	\end{align*}
Let $\bar{n}=\min\{N_k-n_2,n_3\}$. Since
	\begin{align*}
		&\sqrt{\frac{n_3(N_k-n_2)}{N_k-n_2+n_3}}|\mu_{k}-\mu_{k+1}| \geq \sqrt{\frac{\bar{n}}{2}}|\mu_{k}-\mu_{k+1}|\\
		&\sqrt{\frac{N_k-n_2}{(N_k+n_1)(n_1+n_2)}}n_1 |\mu_{k-1}-\mu_k | \leq \sqrt{n_1}|\mu_{k-1}-\mu_k|
	\end{align*}
and $\sqrt{\bar{n}/2}|\mu_{k}-\mu_{k+1}| - \sqrt{n_1}|\mu_{k-1}-\mu_k| > \sqrt{\bar{n}}|\mu_{k}-\mu_{k+1}|/2 
	\geq  \sqrt{2q_k(N)} 
	|\mu_{k}-\mu_{k+1}|/2$ 
	(from (\ref{eq73})), we obtain 
	\begin{align*}
		&\P(A_{k,N} \cap E_{k,N}) \\
		&\leq 
		\P \biggl\{ \bigcup\limits_{\substack{1 \leq n_1 \leq \eta\\ 1 \leq n_2 \leq N_k \\ 1\leq n_3 \leq N_{k+1} }}
		 \biggl\{ 
		 \biggl| \frac{S^{(k-1)}_{N_{k-1}-n_1:N_{k-1}}}{\sqrt{n_1}} \biggr|
		+ \biggl| \frac{S^{(k)}_{0:n_2}}{\sqrt{n_2}} \biggr| 
		+ 2\biggl| \frac{S^{(k)}_{n_2:N_k}}{\sqrt{N_k-n_2}} \biggr|
		+ \biggl| \frac{S^{(k+1)}_{0:n_3}}{\sqrt{n_3}} \biggr|
		 \geq \frac{\sqrt{2q_k(N) 
		 }}{2}|\mu_{k}-\mu_{k+1}| \biggr\} \biggr\}.
	\end{align*}
	Using similar techniques as in (\ref{eq52}) we obtain
	\begin{align*}
		\P(A_{k,N} \cap E_{k,N})
		\leq 10D N^3\exp \biggl\{- \frac{c_0 q_k(N) |\mu_{k}-\mu_{k+1}|^2}{50D  }\biggr\} 
		= 10 D N^{-2}
	\end{align*}
	Finally, by Borel-Cantelli lemma $\P\{\limsup_{N\rightarrow \infty} (A_{k,N} \cap E_{k,N})\}=0$.
\end{proof}

\vspace{0.5cm}

\textbf{Proof of Theorem~\ref{thm:strongConsistency} (main body):}

	For the case $M_0 = 0$, Lemma~\ref{lemma:simpleSplit} guarantees that there is not overfitting. Next, we prove for the case $M_0>0$. 
	It has been proved in Theorem~\ref{thm:loglog} that there is no underfitting for sufficiently large $N$ almost surely. Note that in its proof, only Assumptions~(A.2)--(A.3) 
	were used. To prove the strong consistency, it remains to prove that there is no overfitting.
	To that end, we define the following sequence of $M_0$ ($M_0>0$) constants $\eta_k(N),k=1,\ldots,M_0$:
	\begin{align}
		\eta_{k}(N) 
		&= \max\biggl\{\frac{4 \eta_{k+1}(N)}{(\sqrt{2}-1)^{2}}\frac{|\mu_{k+1}-\mu_{k+2}|^2}{|\mu_{k}-\mu_{k+1}|^2}, \ 2\tilde{q}_{k+1}(N) \biggr\} , \quad k=1,\ldots,M_0-1 \\
		\eta_{M_0}(N)
		&= \tilde{q}_{M_0+1}(N) 
	\end{align}
	where $\tilde{q}_k(N),k=2,\ldots,M_0+1$ have been defined in Lemmas~\ref{lemma:pureMove}. 
	We prove in three steps sketched below:
	
	Step 1) If Algorithm~\ref{algo:oracle} is applied to $\{ X_{n},n=1,\ldots,N_1+\eta_1 \}$ where $0 \leq \eta_1 \leq \min\{N_2,\eta_1(N)\}$, then almost surely 
		there is no change point detected as $N \rightarrow \infty$. Simply speaking, when the data consists of one true segment and at most $\eta_1(N)$ extra points from another segment at the end, there is no spurious discovery of change points.
		
	Step 2) Suppose that $M_0>1$. If Algorithm~\ref{algo:oracle} is applied to $\{ X_{n}:n=1,\ldots,L_{k}+\eta_k \}$ where $k, \eta_k$ are any integers such that $1 \leq k \leq M_0$ and $0 \leq  \eta_k \leq  \eta_k(N)$, then almost surely there are $k-1$ change point detected, and the largest deviation of each true change point with its nearest detected change point is no larger than $\eta_k(N)$. Simply speaking, when the data consists of $k$ true segments plus at most $\eta_k(N)$ points from the $k+1$th true segment,  the number of true change points $k-1$ is correctly selected.
		
	Step 3) Suppose that $M_0>1$. If Algorithm~\ref{algo:oracle} is applied to $X_{1:N}$, then almost surely there are $M_0$ change point detected. 
	
	Before we prove each step, 
	recall the definitions that $2\tilde{q}_{k} = 500 D  \log N / ( c_0 |\mu_{k-1}-\mu_{k}|^2)$ and $c=4/(\sqrt{2}-1)^{2}$.
	By simple calculation,  we obtain for each $k=1,\ldots,M_0-1$ that
	\begin{align}
		\eta_{k}(N) &= 
		\max \Biggl\{ 
		\bigcup\limits_{\tilde{k}=k,\ldots,M_0-2} 
		\biggl\{ 
		2\tilde{q}_{\tilde{k}+2}(N)
		\prod\limits_{j=k}^{\tilde{k}} \biggl(c\frac{|\mu_{j+1}-\mu_{j+2}|^2}{|\mu_{j}-\mu_{j+1}|^2} \biggr)  
		\biggr\}  \nonumber \\
		&\qquad \quad \cup \bigl\{ 2\tilde{q}_{k+1}(N) \bigr\} \cup 
		\biggl\{ \eta_{M_0}(N)
		\prod\limits_{j=k}^{M_0-1} \biggl(c\frac{|\mu_{j+1}-\mu_{j+2}|^2}{|\mu_{j}-\mu_{j+1}|^2} \biggr)\biggr\}
		\Biggr\} \nonumber \\
		&=\frac{500 D  \log N}{ c_0}
		\max \Biggl\{
		\bigcup\limits_{\tilde{k}=k,\ldots,M_0-2} \biggl\{
		\frac{c^{\tilde{k}-k+1}}{|\mu_{k}-\mu_{k+1}|^2} \biggr\}
		 \cup \biggl\{ \frac{1}{|\mu_{k}-\mu_{k+1}|^2} \biggr\} \cup 
		\biggl\{ \frac{c^{M_0-k}}{2|\mu_{k}-\mu_{k+1}|^2}\biggr\} \Biggr\} \nonumber \\
		&\leq \eta^{*}(N) \label{eq301}
	\end{align}
	where $\eta^{*}(N)$ is defined in Theorem~\ref{thm:strongConsistency}.
	
	\textit{Proof of Step 1):}
	
	If there is at least one change point produced by Algorithm~\ref{algo:oracle}, then its location (in terms of the subscript of $X_n$) belongs to either $\{1,\ldots,N_1\}$ or $\{N_1+1,\ldots,N_1+\eta_1\}$. However, the former case will not happens i.o. due to Lemma~\ref{lemma:onesideContaminatedSplit}, under Condition (\ref{eq57}) (with $E_{k,N}$); and the latter case will not happen i.o. due to Lemma~\ref{lemma:onesideContaminatedSplit_version2} under Condition (\ref{eq81})  (with $\tilde{E}_{k,N}$, $s=1$). We note that Conditions (\ref{eq57}) and (\ref{eq81}) are guaranteed by Inequalities (\ref{condition1}) and~(\ref{eq301}).
	
	\textit{Proof of Step 2):}
	
	Suppose that the last two change points discovered by Algorithm~\ref{algo:oracle} are denoted by $y,z$, i.e. $X_{y+1},\ldots,X_{z}$ and $X_{z+1},\ldots,X_{L_{k}+\eta_k}$ are the last two segments.
		
	The case $k=1$ has been proved in Step 1). Assume that $k>1$ and the statement is true for each $\tilde{k}$ such that $1 \leq \tilde{k} < k$. 
	We prove that the statement holds for $\tilde{k}=k$ as well.
	We consider the three possible events: $z$ belongs to either $\{1,\ldots,L_{k-1}\}$, $\{L_{k-1}+1,\ldots,L_{k}\}$ or $\{L_{k}+1,\ldots,L_{k}+\eta_k(N)\}$, and prove that almost surely $k$ change points are discovered conditioning on each event.
	
	(E1) $z$ belongs to $\{1,\ldots,L_{k-1}\}$. Then, by induction, at most $k-2$ change points are discovered from $\{X_n: n=1,\ldots,z\}$. Thus, there are at most $k-1$ change points in total.
	 
	(E2) $z$ belongs to $\{L_{k-1}+1,\ldots,L_{k}\}$. There are three possible events: (E2.1) $y \leq L_{k-2}$; (E2.2) $L_{k-2}+1 \leq y \leq L_{k-1}$ (E2.3) $L_{k-1}+1 \leq y < z$.
	 
	Given (E2.1), since the  induction guarantees that at most $k-3$ change points are discovered from $\{X_n: n=1,\ldots,y\}$, there are at most $k-1$ change points in total.
	
	Given (E2.2), from Lemma~\ref{lemma:contaminatedMove} (with $\tilde{E}_{k,N}$) and the way $\eta_{k-1}(N)$ was constructed, we obtain $\min\{L_{k-1}-y, \ z-L_{k-1}\} \leq \eta_{k-1}(N)$ for all sufficiently large $N$ almost surely. 
		
	Consider two sub events of (E2.2): (E2.2.1) $1 \leq z-L_{k-1} \leq \eta_{k-1}(N) $, from induction at most $k-2$ change points are discovered from $\{X_n: n=1,\ldots,z\}$, so there are at most $k-1$ change points in total; (E2.2.2) $L_{k-1}-y \leq \eta_{k-1}(N)$, it will not happen i.o. by using Lemma~\ref{lemma:twosideContaminatedSplit} (in which Condition (\ref{eq66}) is guaranteed by (\ref{condition1})). 
	
	For (E2.3), it will not happen i.o. by applying Lemma~\ref{lemma:onesideContaminatedSplit} (with $E_{k,N}$ and Condition (\ref{eq57}) which is guaranteed by (\ref{condition1})).
	
	(E3) $z$ belongs to $\{L_{k}+1,\ldots,L_{k}+\eta_{k}(N)\}$. Four sub events are (E3.1) $y \leq L_{k-2}$, (E3.2) $L_{k-2}+1 \leq y \leq L_{k-1}$, (E3.3) $L_{k-1}+1 \leq y \leq L_{k}$, and (E3.4) $L_{k}+1 \leq y < z$. 
	
	For (E3.1), the induction guarantees that at most $k-2$ change points are discovered from $\{X_n: n=1,\ldots,y\}$, so there are at most $k-1$ change points in total.
	Both the events (E3.2) and (E3.3) will not happen i.o. by  applying Lemma~\ref{lemma:onesideContaminatedSplit_version2} (with $\tilde{E}_{k,N}$, $s=1,2$, and Condition~(\ref{eq81}) which is guaranteed by (\ref{condition1})). 
	By applying Lemma~\ref{lemma:simpleSplit} (with $E_{k+1,N}$), the event (E3.4) will not happen i.o.
	
	\textit{Proof of Step 3):}
	
	Step 3 can be regarded as a special type of step 2 with $k=M_0+1$, and its proof follows from the above proof for events (E1), (E2).
	
	To complete the proof, it remains to prove that the largest deviation of each true change point with its nearest detected change point is less than $\eta^{*}(N)$. 
	This can be proved in similar fashion as above. 

\begin{remark} \label{understand_proof}
	The key part of the proof is Step 2) which builds a induction on $k$, the number of underlying true segments (despite a small amount of extra points). Such induction is achieved through events (E1), (E2.1), (E2.2.1), and (E3.1) at each $k$.	 
	We note that the method differs from the usual mathematical induction in that the number of induction steps is finite, i.e. $k=1,\ldots,M_0$. 	
	Because of that, any (finite) union of events that will not eventually happen will not eventually happen. 
\end{remark}



\ifCLASSOPTIONcaptionsoff
  \newpage
\fi


\bibliographystyle{IEEEtran}
\balance
\bibliography{cp_consistency,Gap,AR_order}
\end{document}